\newcommand{\ExtendedVersion}[1]{#1} \newcommand{\PaperVersion}[1]{}
\newcommand{\enumA}{(i)}
\newcommand{\enumB}{(ii)}
\newcommand{\enumC}{(iii)}
\newcommand{\enumD}{(iv)}
\newcommand{\definedAs}  
	{=}
\newcommand{\fctsymDom}{\mathrm{dom}} 
\newcommand{\fctDom}[1]{\fctsymDom(#1)}
\newcommand{\constant}[1]{\text{\footnotesize$\mathsf{#1}$}} 
\newcommand{\symAllURIs}{\mathcal{I}} 
\newcommand{\symAllLiterals}{\mathcal{L}} 
\newcommand{\symAllBNodes}{\mathcal{B}} 
\newcommand{\symAllTriples}{\mathcal{T}} 
\newcommand{\symAllVariables}{\mathcal{V}} 
\newcommand{\symURI}{u} 
\newcommand{\symRDFgraph}{G} 
\newcommand{\fctsymTerms}{\mathrm{terms}} 
\newcommand{\fctTerms}[1]{\fctsymTerms(#1)}
\newcommand{\fctsymURIs}{\mathrm{iris}} 
\newcommand{\fctURIs}[1]{\fctsymURIs(#1)}
\newcommand{\fctsymVars}{\texttt{V}}
\newcommand{\fctVars}[1]{\fctsymVars(#1)}
\newcommand{\fctsymCVars}{\texttt{SBV}}
\newcommand{\fctCVars}[1]{\fctsymCVars(#1)}
\newcommand{\symTP}{tp} 
\newcommand{\symPattern}{P} 
\newcommand{\OpAND}{\text{ \normalfont\scriptsize\textsf{AND} }}
\newcommand{\OpUNION}{\text{ \normalfont\scriptsize\textsf{UNION} }}
\newcommand{\OpOPT}{\text{ \normalfont\scriptsize\textsf{OPT} }}
\newcommand{\OpFILTER}{\text{ \normalfont\scriptsize\textsf{FILTER} }}
\newcommand{\OpSERVICE}{\text{ \normalfont\scriptsize\textsf{SERVICE} }}
\newcommand{\symPPE}{\texttt{path}} 
\newcommand{\invPPE}[1]{\!\,^\wedge\!#1}
\newcommand{\symPPEi}{\invPPE{\symPPE}} 
\newcommand{\symPP}{{P}} 
\newcommand{\muEmpty}{\mu_\emptyset}
\newcommand{\compatible}{\sim}
\newcommand{\incompatible}{\not\sim}
\newcommand{\multicup}{\sqcup}    
\newcommand{\fctsymCard}{\mathit{card}} 
\newcommand{\fctCard}[1]{\fctsymCard(#1)}
\newcommand{\fctsymCardGen}[1]{\mathsf{card1}^{(#1)}} 
\newcommand{\fctOrigQueryPG}[2]{[\![#1]\!]_{#2}} 
\newcommand{\symWoD}{W} 
\newcommand{\symDocs}{D} 
\newcommand{\fctsymData}{data} 
\newcommand{\fctData}[1]{\fctsymData(#1)}
\newcommand{\fctsymADoc}{adoc} 
\newcommand{\fctADoc}[1]{\fctsymADoc(#1)}
\newcommand{\Reach}[1]
	{#1^*}
\newcommand{\symUorVLeft}{\alpha} 
\newcommand{\symUorVRight}{\beta} 
\newcommand{\symLeftConst}{x_\mathrm{L}} 
\newcommand{\symRightConst}{x_\mathrm{R}} 
\newcommand{\symLeftURI}{\symURI_\mathrm{L}} 
\newcommand{\symLeftLit}{l_\mathrm{L}} 
\newcommand{\symLeftVar}{?v_\mathrm{L}} 
\newcommand{\symRightVar}{?v_\mathrm{R}} 
\newcommand{\EvalCtx}[2]{\llbracket #1 \rrbracket^{\symStandSemCtx}_{#2}}
\newcommand{\EvalFull}[2]{\llbracket #1 \rrbracket^{\symFullWebSem}_{#2}}
\newtheorem{fact}[proposition]{Fact}
\newtheorem{Note}[note]{\textbf{Note}}
\newtheorem{rmrk}[remark]{\textbf{Remark}}
\newtheorem{exmp}[example]{\textbf{Example}}
\newenvironment{proofidea}[1]{\paragraph{Proof idea (#1).}}{\qed\vspace{1ex}}
\newcommand{\fctsymContext}[1]%
	{\mathrm{C}^{#1}}
\newcommand{\fctContext}[2]{\fctsymContext{#1}\!(#2)}
\newcommand{\symFullWebSem}{\texttt{fw}}
\newcommand{\symStandSemCtx}%
	{\texttt{ctx}}
\newcommand{\Web}{WoLD}
\newcommand{\tupleD}[1]{\langle #1 \rangle}
\newcommand{\symWebSafeVar}{X}
\newcommand{\fctsymWebSafe}{\texttt{CBV}}
\newcommand{\fctCondWebSafe}[2]{\fctsymWebSafe(#1\,|\,#2)}
\newcommand{\captionspace}%
	{}
\begin{document}
\title{A Context-Based Semantics for\\SPARQL Property Paths over the Web}
\ExtendedVersion{%
\subtitle{(Extended Version)\footnote{This document is an extended version of a paper
	published in ESWC~2015~\cite{ProceedingsVersion}.%
	}}%
}

\author{Olaf Hartig\inst{1} \and Giuseppe Pirr\`o\inst{2}}
\authorrunning{O. Hartig \and G. Pirr\`o}   

\institute{University of Waterloo, Canada\\
\email{ohartig@uwaterloo.ca}\\
\and
Institute for High Performance Computing and Networking, ICAR-CNR, Rende, Italy\\
\email{pirro@icar.cnr.it}
}

\tocauthor{
    Olaf Hartig (University of Waterloo)
    Giuseppe Pirr\`o (ICAR-CNR)
}

\maketitle

\ExtendedVersion{\vspace{-7mm}}

\begin{abstract}
As of today, there exists no standard language for querying Linked Data \textit{on the Web}, where navigation across distributed data sources is a key feature.
A natural candidate seems to be SPARQL, which recently has been enhanced with navigational capabilities thanks to the introduction of \textit{property paths} (PPs). However, the semantics of SPARQL restricts the scope of navigation via PPs to \textit{single} RDF graphs. This restriction limits the applicability of PPs on the Web. 
To fill this gap, in this paper we provide formal foundations for evaluating PPs on the Web, thus contributing to the definition of a query language for Linked Data. In particular, we introduce a
query semantics for PPs that couples navigation at the data level with navigation on the Web graph. Given this semantics we find that for some PP-based SPARQL queries a complete evaluation on the Web is not feasible. To enable systems to identify queries that can be evaluated completely, we establish a decidable syntactic property of such queries.
\end{abstract}

\ExtendedVersion{\vspace{-11mm}}


\section{Introduction}
\label{sec:introduction}
The increasing trend in sharing and interlinking pieces of structured data on the World Wide Web (WWW) is evolving the classical Web---which is focused on hypertext documents and syntactic links among them---into a
	Web of Linked Data.
The Linked Data principles~\cite{BernersLee:LinkedData} present an approach to extend the scope of Uniform Resource Identifiers~(URIs) to new types of resources (e.g., people, places%
) and represent their descriptions and interlinks by using the Resource Description Framework (RDF)~\cite{klyne2006} as standard data format. RDF adopts a graph-based data model, which can be queried upon by using the SPARQL query language~\cite{Harris2013}.
 When it comes to Linked Data on the WWW, the common way to provide query-based access is via SPARQL endpoints, that is, services that usually answer SPARQL queries over a single dataset. Recently, the original core of SPARQL has been extended with features supporting query federation; it is now possible, within a single query, to target multiple endpoints (via the\OpSERVICE operator). However, such an extension is not enough to cope with an unbounded and a priori unknown space of data sources such as the WWW. Moreover,
	not all Linked Data on the WWW is accessible via SPARQL endpoints.
%
Hence, as of today, there exists no standard query language for Linked Data on the WWW%
	, although SPARQL is clearly~a~candidate.

While earlier research
	on using SPARQL for Linked Data is limited to fragments of the first version of the language~\cite{Bouquet09:QueryingWebOfData,harth2012,Hartig12:TheoryPaper,Umbrich2014}, 
the more recent version 1.1 introduces a feature that is particularly interesting in the context of queries over a graph-like environment such as Linked Data on the WWW. This feature is called \textit{property paths}~(PPs) and equips SPARQL with navigational capabilities~\cite{Harris2013}. However, the standard definition of PPs is limited to single, centralized RDF graphs and, thus, not directly applicable to Linked Data that is distributed over the WWW. Therefore, toward the definition of a language for accessing Linked Data live on the WWW, the
	following questions emerge naturally: \emph{``How can PPs be defined over the WWW?''} and \emph{``What are the implications of such a definition?''} Answering these questions
is the broad objective of this paper. To this end, we make the following main contributions:

\begin{enumerate}
\item We formalize a query semantics for PP-based SPARQL queries that are meant to be evaluated over Linked Data on the WWW. This semantics is \textit{con\-text-based}; it intertwines Web graph navigation with navigation at the level of data.
\item We study the feasibility of evaluating queries under this semantics. We assume that query engines do not have complete information about the queried Web of Linked Data~(as it is the case for the WWW). Our study shows that
	there exist cases in which query evaluation under the con\-text-based semantics is not feasible.
\item We provide a
	decidable syntactic property of queries for which an evaluation under the con\-text-based semantics is feasible.
\end{enumerate}

\noindent
%
%
%
The remainder of the paper is organized as follows. Section~\ref{sec:RelWork} provides an overview on related work. Section~\ref{sec:FormalFramework} introduces the formal framework for this paper, including a data
	model that captures a notion of Linked Data.
In Section~\ref{sec:the-semantics} we focus on PPs, independently from other SPARQL operators%
. In Section~\ref{sec:sparql} we broaden our view to study PP-based SPARQL
	graph
patterns; we characterize a class of \textit{Web-safe} patterns and prove their feasibility. 
Finally, in Section~\ref{sec:conclusion} we conclude and sketch future~work.

\section{Related Work} \label{sec:RelWork}

The idea of querying the WWW as a database is not new~(see Florescu et al.'s survey~\cite{Florescu1998}). Perhaps the most notable early works
	in this context
are by Konopnicki and Shmueli~\cite{Konopnicki98:W3QLandW3QS}, Abiteboul and Vianu~\cite{Abiteboul00}, and Mendelzon et al.~\cite{Mendelzon97}, all of which tackled the problem of evaluating SQL-like queries on the traditional hypertext Web. While such queries
	included
navigational features, the focus was on retrieving specific Web pages, particular attributes of specific pages, or content within them.

From a
graph-ori\-ent\-ed perspective, languages for the {\em navigation and specification} of vertices in graphs have a long tradition (see Wood's survey~\cite{Wood12}). In the
	RDF world,
%
	extensions of SPARQL such as PSPARQL~\cite{Alkhateeb09}, nSPARQL~\cite{perez2010}, and SPARQLeR~\cite{kochut2007sparqler} introduced navigational features
since those were missing in the first version of SPARQL. Only recently, with the addition of \textit{property paths}~(PPs) in version~1.1~\cite{Harris2013}, SPARQL has been enhanced officially with such features.
	The final definition of PPs has been influenced by research that studied the computational complexity of an early draft version of PPs~\cite{arenas2012,Loseman12:SPARQLPropertyPaths}, and there also already exists a proposal to extend PPs with more expressive power~\cite{fionda2015aaai}.
%
	However, the
main assumption of all these navigational extensions of SPARQL is to work on a single, centralized RDF graph.
Our departure point is different: \emph{We aim at defining
semantics of SPARQL queries~(including property paths) over Linked Data on the WWW,} which involves dealing with two
	graphs of different types;
namely, an RDF graph that is distributed over
	documents on the WWW
and the Web graph of how these documents are interlinked with each other.

To express queries over Linked Data on the WWW, two main strands of research can be identified. The first studies how to extend the scope of SPARQL queries to the WWW%
	, with existing work focusing on basic graph patterns~\cite{Bouquet09:QueryingWebOfData,harth2012,Umbrich2014} or a more expressive fragment that includes \!\OpAND\!, \!\OpOPT\!, \!\OpUNION\!~and \!\OpFILTER\!~\cite{Hartig12:TheoryPaper}.
The second
	strand
focuses on
	navigational languages such as NautiLOD~\cite{fionda2012,FiondaTWEB2015}.
These two strands have different departure points. The former
	employs navigation over the WWW to collect data for answering a given SPARQL query; here navigation is a means to discover que\-ry-rel\-e\-vant data. The latter provides explicit navigational features and uses querying capabilities to filter data sources of interest; here navigation~(not querying) is the main~focus.
The con\-text-based query semantics proposed in this paper
	combines
both approaches. We believe that the outcome of this research can be a starting point toward the definition of a
language for querying and navigating over Linked Data on the WWW.

 \section{Formal Framework} \label{sec:FormalFramework}
This section provides a formal framework for studying semantics of PPs over Linked Data.
	We
first recall the definition of PPs as per the SPARQL standard~\cite{Harris2013}. Thereafter, we introduce a data model that captures the notion of Linked Data on the WWW.

\subsection{Preliminaries} \label{subsec:preliminaries}

Assume four pairwise disjoint, countably infinite sets $\symAllURIs$ (IRIs), $\symAllBNodes$ (blank nodes), $\symAllLiterals$~(literals), and $\symAllVariables$ (variables%
). An \emph{RDF triple} (or simply \emph{triple}) is a tuple from the set $\symAllTriples = (\symAllURIs \cup \symAllBNodes) \times \symAllURIs \times (\symAllURIs \cup \symAllBNodes \cup \symAllLiterals)$.
For any triple
	$t \in \symAllTriples$
we write $\fctURIs{t}$ to denote the set of IRIs in that triple. 
A set of triples is called an \emph{RDF graph}.

A \emph{property path pattern} (or \emph{PP pattern} for short) is a tuple $\symPP = \tupleD{\symUorVLeft, \symPPE, \symUorVRight}$ such that $\symUorVLeft,\symUorVRight \in (\symAllURIs \cup \symAllLiterals \cup \symAllVariables)$ and $\symPPE$ is a \emph{property path expression} (\emph{PP expression}) defined by the following grammar (where $\symURI, \symURI_1, \ldots, \symURI_n \in \symAllURIs$):
%
\begin{align*}
	\symPPE \, \definedAs \, \, &
		\symURI
	\,\mid\,
		\, !( \symURI_1 \,|\, \ldots \,|\, \symURI_n )
	\,\mid\,
		\symPPEi
	\,\mid\,
		\symPPE / \symPPE
	\,\mid\,
		(\symPPE \,|\, \symPPE)
	\,\mid\, 
		(\symPPE)^*
\end{align*}

\enlargethispage{\baselineskip} 

\noindent
Note that the SPARQL standard introduces
additional types of PP expressions~\cite{Harris2013}. Since these are merely syntactic sugar (they are defined in terms of expressions covered by the grammar given above), we ignore them in this paper.
As another slight deviation from the
standard, we do not permit blank nodes in PP patterns (i.e., $\symUorVLeft,\symUorVRight \notin \symAllBNodes$). However, standard
	PP
patterns with blank nodes can be simulated using fresh variables.

\begin{exmp}\label{ex:PPsyntax}\hspace{-1ex}\textbf{.}
		An example of a PP pattern is
	$\tupleD{\constant{Tim},(\constant{knows})^*/\constant{name},?n}$, which retrieves the names of persons that can be reached from $\constant{Tim}$ by an arbitrarily long path of $\constant{knows}$ relationships~(which includes $\constant{Tim}$).
	Another example are the two PP patterns $\tupleD{?p,\constant{knows},\constant{Tim}}$ and $\tupleD{\constant{Tim},\invPPE{\constant{knows}},?p}$, both of which retrieve persons that know $\constant{Tim}$.
\end{exmp}

\noindent
The (standard)
	query
semantics of PP patterns is defined by an evaluation function that returns multisets
of
	\emph{solution mappings} where a solution mapping $\mu$
is a partial function $\mu : \symAllVariables \rightarrow (\symAllURIs \cup \symAllBNodes \cup \symAllLiterals)$. Given a solution mapping $\mu$ and a PP pattern $\symPP$, we write $\mu[\symPP]$ to denote the PP pattern obtained by replacing the variables in $\symPP$ according to $\mu$ (unbound variables must not be replaced). Two solution mappings, say $\mu_1$ and $\mu_2$, are \emph{compatible}, denoted by $\mu_1 \compatible \mu_2$, if $\mu_1(?v)=\mu_2(?v)$ for all variables $?v \in \bigl( \fctDom{\mu_1} \cap \fctDom{\mu_2} \bigr)$.

We represent a \emph{multiset} of solution mappings by a pair $M = \tupleD{\Omega,\fctsymCard}$ where
	$\Omega$~is
the underlying set (of solution mappings) and
	$\fctsymCard : \Omega \rightarrow \lbrace 1,2, ... \, \rbrace$ is the corresponding \emph{cardinality function}.
By abusing notation slightly, we write $\mu \in M$ for all $\mu \in \Omega$.
Furthermore, we introduce a family of special (parameterized) cardinality functions that shall simplify the definition of any multiset whose solution mappings all have a cardinality of 1. That is, for any set of solution mappings $\Omega$, let $\fctsymCardGen{\Omega}\! : \Omega \!\rightarrow\! \lbrace 1,2, ... \rbrace$ be the \emph{constant-1 cardinality function} that is defined by $\fctsymCardGen{\Omega}(\mu) = 1$ for all $\mu \in \Omega$.

To define the aforementioned evaluation function
we also need to introduce several SPARQL algebra operators. Let $M_1 = \tupleD{\Omega_1,\fctsymCard_1}$ and $M_2 = \tupleD{\Omega_2,\fctsymCard_2}$ be multisets of~solution mappings and let $V \subseteq \symAllVariables$ be a finite set of variables. Then:
\begin{description}
	\itemsep1.5mm 
	\item
		[$M_1 \multicup M_2 \definedAs \tupleD{\Omega,\fctsymCard}$] where $\Omega = \Omega_1 \cup \Omega_2$
			and \enumA~$\fctCard{\mu} = \fctsymCard_1(\mu)$ for all solution mappings $\mu \in \Omega \setminus \Omega_2$, \enumB~$\fctCard{\mu} = \fctsymCard_2(\mu)$ for all $\mu \in \Omega \setminus \Omega_1$, and \enumC~$\fctCard{\mu} = \fctsymCard_1(\mu) + \fctsymCard_2(\mu)$ for all $\mu \in \Omega_1 \cap \Omega_2$.
	\item%
		[$M_1 \Join M_2 \definedAs \tupleD{\Omega,\fctsymCard}$] where $\Omega = \big\lbrace \, \mu_1 \!\cup \mu_2 \,|\, (\mu_1,\mu_2) \in \Omega_1\!\times \Omega_2 \text{ and } \mu_1 \compatible \mu_2 \big\rbrace$
			and, for every $\mu \in \Omega$,
		$\fctCard{\mu} = \sum_{(\mu_1\!,\mu_2)\in \Omega_1\!\times\Omega_2 \text{ s.t. } \mu = \mu_1 \cup \mu_2} \fctCard{\mu_1} \cdot \fctCard{\mu_2}$.
	\item%
		[$M_1 \setminus M_2 \definedAs \tupleD{\Omega,\fctsymCard}$] where $\Omega = \big\lbrace \, \mu_1 \in \Omega_1 \,|\, \mu_1 \incompatible \mu_2 \text{ for all } \mu_2 \in \Omega_2 \big\rbrace$ and, for every $\mu \in \Omega$, $\fctCard{\mu} = \fctsymCard_1(\mu)$.
	\item%
		[$\pi_V (M_1) \definedAs \tupleD{\Omega,\fctsymCard}$] where $\Omega = \big\lbrace \mu \,|\, \exists \mu' \!\in\! \Omega_1 \!: \mu \!\compatible\! \mu' \text{ and } \fctDom{\mu} \!=\! V \cap \fctDom{\mu'} \big\rbrace$ and, for every $\mu \in \Omega$, $\fctCard{\mu} = \sum_{\mu' \!\in \Omega_1 \text{ s.t. } \mu \compatible \mu'} \fctsymCard_1(\mu')$. 
\end{description}

\noindent
In addition to these algebra operators, the SPARQL standard introduces auxiliary functions to define the semantics of PP patterns of the form $\tupleD{\symUorVLeft, \symPPE^*\!, \symUorVRight}$. 
Figure~\ref{Figure:StdALP} provides these
	functions---which we call $\mathtt{ALP1}$ and $\mathtt{ALP2}$---adapted
to our
	formalism.\footnote{Variable $?x$ in line~\ref{line:StdALP:Loop} is necessary since PP patterns in our formalism do not have blank~nodes.}

\begin{figure}[t]
{\small%
	\subfigure{%
		\begin{minipage}[t]{0.35\textwidth}%
		\underline{\textbf{Function} $\mathtt{ALP1}\bigl( \gamma, \symPPE, \symRDFgraph \bigr)$}
		\par
		\textbf{Input:} $\gamma \in ( \symAllURIs \cup \symAllBNodes \cup \symAllLiterals )$,
		\par \hspace{9mm} $\symPPE$ is a PP expression,
		\par \hspace{9mm} $\symRDFgraph$ is an RDF graph.
		\par
		\begin{algorithmic}[1]
			\STATE \textit{Visited} := $\emptyset$
			\STATE $\mathtt{ALP2}\bigl( \gamma, \symPPE, \mathit{Visited}, \symRDFgraph \bigr)$
			\RETURN \textit{Visited}
		\end{algorithmic}%
		\end{minipage}%
	}%
	\subfigure{%
		\begin{minipage}[t]{0.63\textwidth}%
		\underline{\textbf{Function} $\mathtt{ALP2}\bigl( \gamma, \symPPE, \mathit{Visited}, \symRDFgraph \bigr)$}
		\par
		\textbf{Input:} $\gamma \in ( \symAllURIs \cup \symAllBNodes \cup \symAllLiterals )$,
		~ $\symPPE$ is a PP expression,
		\par \hspace{8mm} $\mathit{Visited} \subseteq ( \symAllURIs \cup \symAllBNodes \cup \symAllLiterals )$,
		~ $\symRDFgraph$ is an RDF graph.
		\par
		\begin{algorithmic}[1]
		\setcounter{ALC@line}{3}
			\IF {$\gamma \notin \mathit{Visited}$}
				\STATE add $\gamma$ to \textit{Visited}
				\FORALL {$\mu \in \fctOrigQueryPG{\tupleD{?x, \symPPE, ?y}}{\symRDFgraph}$ \, s.t. \, $\mu(?x) = \gamma$} \label{line:StdALP:Loop}
					\STATE $\mathtt{ALP2}\bigl( \mu(?y), \symPPE, \mathit{Visited}, \symRDFgraph \bigr)$ \quad \, \COMMENT{$?x,?y \in \symAllVariables$} \label{line:StdALP:Recursion}
				\ENDFOR
			\ENDIF
		\end{algorithmic}%
		\end{minipage}%
	}%
\vspace{-3mm}
\caption{\textbf{Auxiliary functions
for defining the semantics of PP expressions of the form} $\symPPE^*$\!\textbf{.}}%
\label{Figure:StdALP}%
%
}%
\end{figure}

We are now ready to define the standard query semantics of PP patterns.

\begin{definition} \label{def:StdDefinition}
	The \emph{evaluation} of a PP pattern $\symPP$ over an RDF graph $\symRDFgraph$, denoted by $\fctOrigQueryPG{\symPP}{\symRDFgraph}$, is a multiset of solution mappings $\tupleD{\Omega,\fctsymCard}$ that is defined recursively as given in Figure~\ref{fig:w3c-standard} where $\symUorVLeft,\symUorVRight \in (\symAllURIs \cup \symAllLiterals \cup \symAllVariables)$, $\symLeftConst,\symRightConst \in ( \symAllURIs \cup \symAllLiterals )$, $\symLeftVar,\symRightVar \in \symAllVariables$, $\symURI, \symURI_1, ..., \symURI_n \in \symAllURIs$, $?v \in \symAllVariables$ is a fresh variable, and $\muEmpty$ denotes the empty solution mapping (%
	$\fctDom{\muEmpty} = \emptyset$).
\end{definition}


\begin{figure}[t]
{\scriptsize
\begin{gather}
\begin{align*}
	\fctOrigQueryPG{\tupleD{\symUorVLeft, \symURI, \symUorVRight}}{\symRDFgraph} & \definedAs
		\Big\langle
			\big\lbrace\, \mu \,|\, \fctDom{\mu}=( \lbrace \symUorVLeft,\symUorVRight \rbrace \cap \symAllVariables ) \text{ and } \mu[\tupleD{\symUorVLeft, \symURI, \symUorVRight}] \in \symRDFgraph \,\big\rbrace  \textbf{ , }\,
			\fctsymCardGen{\Omega}
		\Big\rangle
	\\
	\fctOrigQueryPG{\tupleD{\symUorVLeft, !( \symURI_1 \,|\, \ldots \,|\, \symURI_n ), \symUorVRight}}{\symRDFgraph} & \definedAs
		\Big\langle
			\big\lbrace\, \mu \,|\, \fctDom{\mu} = \bigl( \lbrace \symUorVLeft,\symUorVRight \rbrace \cap \symAllVariables \bigr) \text{ and }  \\[-2mm]
			& \hspace{11.5mm} \exists\ \mu[\tupleD{\symUorVLeft, \symURI, \symUorVRight}] \in \symRDFgraph : \symURI \in \bigl( \symAllURIs \setminus \lbrace \symURI_1 , \ldots , \symURI_n \rbrace \bigr) \,\big\rbrace  \textbf{ , }\,
			\fctsymCardGen{\Omega}
		\Big\rangle
	\\
	%
	\fctOrigQueryPG{\tupleD{\symUorVLeft, \symPPEi, \symUorVRight}}{\symRDFgraph} & \definedAs
		\fctOrigQueryPG{\tupleD{\symUorVRight, \symPPE, \symUorVLeft}}{\symRDFgraph} 
	\\
	\fctOrigQueryPG{\tupleD{\symUorVLeft, \symPPE_1/\symPPE_2, \symUorVRight}}{\symRDFgraph} & \definedAs
		\pi_{ \lbrace \symUorVLeft,\symUorVRight \rbrace \cap \symAllVariables } \Bigl( \fctOrigQueryPG{\tupleD{\symUorVLeft, \symPPE_1, ?v}}{\symRDFgraph} \Join \fctOrigQueryPG{\tupleD{?v, \symPPE_2, \symUorVRight}}{\symRDFgraph} \Bigr)
	\\
	\fctOrigQueryPG{\tupleD{\symUorVLeft, (\symPPE_1 |\, \symPPE_2), \symUorVRight}}{\symRDFgraph} & \definedAs
		\fctOrigQueryPG{\tupleD{\symUorVLeft, \symPPE_1, \symUorVRight}}{\symRDFgraph} \multicup \fctOrigQueryPG{\tupleD{\symUorVLeft, \symPPE_2, \symUorVRight}}{\symRDFgraph}
	\\
	\fctOrigQueryPG{\tupleD{\symLeftConst, (\symPPE)^*, \symRightVar}}{\symRDFgraph} & \definedAs
		\Big\langle
			\big\lbrace\, \mu \,|\, \fctDom{\mu} = \lbrace \symRightVar \rbrace \text{ and } \mu(\symRightVar) \in \mathtt{ALP1}(\symLeftConst,\symPPE,\symRDFgraph) \,\big\rbrace  \textbf{ , }\,
			\fctsymCardGen{\Omega}
		\Big\rangle
	\\
	\fctOrigQueryPG{\tupleD{\symLeftVar, (\symPPE)^*, \symRightVar}}{\symRDFgraph} & \definedAs
		\Big\langle
			\big\lbrace\, \mu \,|\, \fctDom{\mu} = \lbrace \symLeftVar,\symRightVar \rbrace \text{ and } \mu(\symLeftVar) \in \fctTerms{\symRDFgraph} \text{ and } 
			\\[-2mm]
						& \hspace{11.5mm}
			\mu(\symRightVar) \in \mathtt{ALP1}(\mu(\symLeftVar),\symPPE,\symRDFgraph) \,\big\rbrace  \textbf{ , }\,
			\fctsymCardGen{\Omega}
		\Big\rangle
	\\
	\fctOrigQueryPG{\tupleD{\symLeftVar, (\symPPE)^*, \symRightConst}}{\symRDFgraph} & \definedAs
		\fctOrigQueryPG{\tupleD{\symRightConst, (\symPPEi)^*, \symLeftVar}}{\symRDFgraph}
	\\
	\fctOrigQueryPG{\tupleD{\symLeftConst, (\symPPE)^*, \symRightConst}}{\symRDFgraph} & \definedAs
		\Big\langle
			\begin{cases} \lbrace \muEmpty \rbrace & \text{if } \exists\ \mu \in \fctOrigQueryPG{\tupleD{\symLeftConst, (\symPPE)^*, ?v}}{\symRDFgraph} : \mu(?v) = \symRightConst , \\ ~~\emptyset & \text{else} \end{cases} \textbf{ , }\,
			\fctsymCardGen{\Omega}
		\Big\rangle
\end{align*}%
\end{gather}%
}%
%
%
\caption{\textbf{SPARQL 1.1 W3C property paths semantics.}}
%
%
\label{fig:w3c-standard}
\end{figure}
\subsection{Data Model} \label{subsec:DataModel}
The standard SPARQL evaluation function for PP patterns~(cf.~Section~\ref{subsec:preliminaries}) defines the expected result of the evaluation of a pattern over a single RDF graph. Since the WWW is not an RDF graph, the standard definition is insufficient as a formal foundation for evaluating PP patterns over Linked Data on the WWW. To provide a suitable definition we need a data model that captures the notion of a Web of Linked Data. To this end, we adopt the data model proposed in our earlier work~\cite{Hartig12:TheoryPaper}. Here, a \emph{Web of Linked Data}~(\emph{\Web}) is a tuple $\symWoD=\tupleD{\symDocs,\fctsymData,\fctsymADoc}$ consisting of
\enumA~a set $\symDocs$ of so called \emph{Linked Data documents} (\emph{documents}),
\enumB~a mapping $\fctsymData : \symDocs \rightarrow 2^\symAllTriples$ that maps each document to a finite set of RDF triples (representing the data that can be obtained from the document),
and \enumC~a partial mapping $\fctsymADoc : \symAllURIs \rightarrow \symDocs$ that maps (some) IRIs to a document and, thus, captures
	a IRI-based retrieval of documents.
In this paper we assume that the set of documents $\symDocs$ in any \Web\ $\symWoD = \tupleD{\symDocs,\fctsymData,\fctsymADoc}$ is finite, in which case we say $\symWoD$ is \emph{finite} (for a discussion of infiniteness refer to our earlier~work~\cite{Hartig12:TheoryPaper}).


A few other concepts are needed for the subsequent discussion. For any two documents $d,d' \in \symDocs$ in a \Web\ $\symWoD = \tupleD{\symDocs,\fctsymData,\fctsymADoc}$, document $d$ has a \emph{data link}~to $d'$ if
	the data of $d$ mentions an IRI $\symURI \in \symAllURIs$ (i.e., there exists a triple $\tupleD{s,p,o} \in \fctData{d}$ with $\symURI \in \lbrace s,p,o \rbrace$) that can be used to retrieve $d'$ (i.e., $\fctADoc{\symURI} = d'$).
Such data links establish the \emph{link graph} of the \Web\ $\symWoD$\!, that is, a directed graph $\tupleD{\symDocs,E}$ in which the edges $E$ are all pairs $\tupleD{d,d'} \in \symDocs \times \symDocs$ for which $d$ has a data link to $d'$\!.
Note that this graph, as well as the tuple $\tupleD{\symDocs,\fctsymData,\fctsymADoc}$ typically are not available directly to systems that aim to compute queries over the Web captured by $\symWoD$\!. For instance, the complete domain of the partial mapping $\fctsymADoc$ (i.e., all IRIs that can be used to retrieve some document) is unknown to such systems and can only be disclosed partially (by trying to look up~IRIs).
Also note that the link graph of a \Web\ is a different type of graph than the RDF ``graph'' whose triples are distributed over the documents in the~\Web.


\section{Web-aware Query Semantics for Property Paths}
\label{sec:the-semantics}
%


We are now ready to introduce our framework, which does not deal with syntactic aspects of PPs but aims at defining query semantics that provide a formal foundation for using PP patterns as queries over a \Web\ (and, thus, over Linked Data on the WWW).

\subsection{Full-Web Query Semantics} \label{sec:fullweb-sem}
As a first approach we may assume a full-Web query semantics that is based on the standard evaluation function~(as introduced in Section~\ref{subsec:preliminaries}) and defines
	an expected query result for any PP pattern 
in terms of \emph{all data} on the queried \Web. Formally%
:

\begin{definition} \label{def:Full}
	Let $\symPP$ be a PP pattern, let $\symWoD = \tupleD{\symDocs,\fctsymData,\fctsymADoc}$ be a \Web, and let $\symRDFgraph^*$ be an RDF graph such that $\symRDFgraph^* = \bigcup_{d \in \symDocs} \fctData{d}$, then the \emph{evaluation} of $\symPP$ over $\symWoD$ under \emph{full-Web semantics}, denoted by $\EvalFull{\symPP}{\symWoD}$, is defined by $\EvalFull{\symPP}{\symWoD} \definedAs \fctOrigQueryPG{\symPP}{\symRDFgraph^*}$.
\end{definition}
We emphasize that the full-Web query semantics is mostly of theoretical interest. In practice, that is, for a \Web\ $\symWoD$ that represents the ``real'' WWW (as it runs on the Internet), there cannot exist a system that guarantees to compute the given evaluation function $\EvalFull{\cdot}{\cdot}$ over $\symWoD$ using an algorithm that both terminates and returns complete query results.
	In earlier work, we
showed such a limitation for evaluating other types of SPARQL graph patterns---including triple patterns---under a corresponding full-Web query semantics defined for these patterns~\cite{Hartig12:TheoryPaper}. 
This result readily carries over to the full-Web query semantics for PP patterns because any PP pattern $\symPP = \tupleD{\symUorVLeft, \symPPE, \symUorVRight}$ with PP expression $\symPPE$ being an IRI $\symURI \in \symAllURIs$
	is, in fact,
a triple pattern $\tupleD{\symUorVLeft, \symURI, \symUorVRight}$. Informally, we explain this negative result by the fact that the three structures $\symDocs$, $\fctsymData$, and $\fctsymADoc$ that capture the queried Web formally, are not available in practice. Consequently, to enumerate the set of all triples on the Web (i.e., the
	RDF graph $\symRDFgraph^*$ in Definition~\ref{def:Full}),
a query execution system would have to enumerate all documents (the set $\symDocs$); given that such a system has limited access to mapping $\fctsymADoc$ (in particular, $\fctDom{\fctsymADoc}$---the set of all IRIs whose lookup retrieves a document---is, at best, partially known), the only guarantee to discover all documents is to look up any possible (HTTP-scheme) IRI. Since these are infinitely many~\cite{Fielding99:HTTP}, the enumeration process cannot terminate.

%
\subsection{Context-Based Query Semantics}
\label{sec:w3c-ctx-sem}

Given the limited practical applicability of full-Web query semantics for PPs, we propose an alternative query semantics that interprets
	PP patterns
as a language for navigation over Linked Data on the Web (i.e., along the lines of earlier navigational languages for Linked Data such as NautiLOD~\cite{fionda2012}). We refer to this semantics as \emph{con\-text-based}.

The main idea behind this query semantics is to
	restrict the scope of
		searching for any next triple of a potentially matching path to specific data within specific documents on the queried \Web.
As a basis for formalizing these restrictions we introduce the notion of a \emph{context selector}. Informally, for each IRI that can be used to retrieve a document, the context selector returns a specific subset of the data within that document; this subset contains only those RDF triples that have the given IRI as their subject~(such a set of triples resembles Harth and Speiser's notion of subject authoritative triples~\cite{harth2012}). Formally, for any \Web\ $\symWoD = \tupleD{\symDocs, \fctsymData,\fctsymADoc}$, the context selector of $\symWoD$ is a function $\fctsymContext{\symWoD\!}\!: \symAllURIs \cup \symAllBNodes \cup \symAllLiterals \cup \symAllVariables \rightarrow 2^{\symAllTriples}$ that, for each $\gamma \in ( \symAllURIs \cup \symAllBNodes \cup \symAllLiterals \cup \symAllVariables )$, is defined as follows:%
\footnote{To simplify the following formalization of con\-text-based semantics, context selectors are defined not only over IRIs, but also over blank nodes, literals, and variables.}
\begin{equation*}
	\fctContext{\symWoD}{\gamma} \definedAs
	\begin{cases}
		\big\lbrace \tupleD{s,p,o} \in \fctsymData\bigl( \fctADoc{\gamma} \bigr) \,\big|\, \gamma = s \big\rbrace &  \text{if $\gamma \in \symAllURIs$ and $\gamma \in \fctDom{\fctsymADoc}$,} \\
		\emptyset & \text{otherwise}.
	\end{cases}
\end{equation*}

\noindent
Informally, we explain how a context selector restricts the scope of PP patterns over a \Web\ as follows. Suppose a sequence of triples $\tupleD{s_1,p_1,o_1}, \,...\, , \tupleD{s_k,p_k,o_k}$ presents a path that already matches a sub-expression of a given PP expression. Under the previously defined
	full-Web query semantics~(cf.~Section~\ref{sec:fullweb-sem}), the next triple for such a path can be searched for in an arbitrary
document in the queried \Web\ $\symWoD$\!. By contrast, under the con\-text-based query
	semantics, the next triple has to be searched for only in~$\fctContext{\symWoD}{o_k}$. Given these preliminaries, we now define con\-text-based semantics:

\begin{definition} \label{def:Context}
	Let $\symPP$ be a PP pattern and let $\symWoD = \tupleD{\symDocs,\fctsymData,\fctsymADoc}$ be a \Web. The \emph{evaluation} of $\symPP$ over $\symWoD$ under \emph{con\-text-based semantics}, denoted by $\EvalCtx{\symPP}{\symWoD}$, returns a multiset of solution mappings $\tupleD{\Omega,\fctsymCard}$ defined recursively as given in Figure~\ref{fig:stc-sem}, where $\symURI,.., \symURI_n \in \symAllURIs$; $\symLeftConst,\symRightConst \in ( \symAllURIs \cup \symAllLiterals )$; $\symLeftVar,\symRightVar \in \symAllVariables$; $\muEmpty$ is the empty solution mapping (i.e., $\fctDom{\muEmpty} = \emptyset$); function $\mathtt{ALPW1}$ is given in Figure~\ref{Figure:WebALP}; and $?v \in \symAllVariables$ is a fresh variable. 
\end{definition}

\begin{figure}[t]
{%
\scriptsize%
\begin{gather}%
\begin{align*}
%
%
\EvalCtx{\tupleD{\symLeftURI, p, \symUorVRight}}{\symWoD} & \definedAs
		\Big\langle\,
			\big\lbrace\, \mu \,|\, \fctDom{\mu}=( \lbrace \symUorVRight \rbrace \cap \symAllVariables ) \text{ and } \mu[\tupleD{\symLeftURI, p, \symUorVRight}] \in \fctContext{\symWoD}{\symLeftURI} \,\big\rbrace  \textbf{ , }\,
			\fctsymCardGen{\Omega}
		\,\Big\rangle\\ 
	\EvalCtx{\tupleD{\symLeftLit, p, \symUorVRight}}{\symWoD} &
	\definedAs
	\Big\langle \emptyset,\fctsymCardGen{\emptyset}
	\,\Big\rangle \\
	\EvalCtx{\tupleD{\symLeftVar, p, \symUorVRight}}{\symWoD} & \definedAs
		\Big\langle\,
			\big\lbrace\, \mu \,|\, \fctDom{\mu}=( \lbrace \symLeftVar,\symUorVRight \rbrace \cap \symAllVariables ) \text{ and }
	\\[-2mm]
			& \hspace{11.5mm} \mu[\tupleD{\symLeftVar, p, \symUorVRight}] \in \bigcup_{\symURI \in \symAllURIs}\fctContext{\symWoD}{\symURI} \,\big\rbrace  \textbf{ , }\,
			\fctsymCardGen{\Omega}
		\,\Big\rangle
	\\[-1mm]
	\EvalCtx{\tupleD{\symLeftURI, !(\symURI_1\mid\dots\mid \symURI_n), \symUorVRight}}{\symWoD} & \definedAs
		\Big\langle\,
			\big\lbrace\, \mu \,|\, \fctDom{\mu}=( \lbrace \symUorVRight \rbrace \cap \symAllVariables ) \text{ and } 
			\\[-2mm]
						& \hspace{11.5mm} \exists\ \mu[\tupleD{\symLeftURI, p, \symUorVRight}] \in \fctContext{\symWoD}{\symLeftURI} : p \notin\lbrace \symURI_1 , \ldots , \symURI_n\rbrace \,\big\rbrace  \textbf{ , }\,
			\fctsymCardGen{\Omega}
		\,\Big\rangle \\
\EvalCtx{\tupleD{\symLeftLit, !(\symURI_1\mid\dots\mid \symURI_n), \symUorVRight}}{\symWoD} & \definedAs
\Big\langle \emptyset,\fctsymCardGen{\emptyset}
\,\Big\rangle
\\
			\EvalCtx{\tupleD{\symLeftVar, !(\symURI_1\mid\dots\mid \symURI_n), \symUorVRight}}{\symWoD} & \definedAs
				\Big\langle\,
					\big\lbrace\, \mu \,|\, \fctDom{\mu}=( \lbrace \symLeftVar,\symUorVRight \rbrace \cap \symAllVariables ) \text{ and }
					\\[-2mm]
								& \hspace{11.5mm} 
					 \exists\ \mu[\tupleD{\symLeftVar, p, \symUorVRight}] \in \bigcup_{\symURI \in \symAllURIs}\fctContext{\symWoD}{\symURI} : p \not\in\lbrace \symURI_1 , \ldots , \symURI_n\rbrace \,\big\rbrace  \textbf{ , }\,
					\fctsymCardGen{\Omega}
				\,\Big\rangle \\
	\EvalCtx{\tupleD{\symUorVLeft, \symPPEi, \symUorVRight}}{\symWoD} & \definedAs
		\EvalCtx{\tupleD{\symUorVRight, \symPPE, \symUorVLeft}}{\symWoD}\\ 
		\EvalCtx{\tupleD{\symUorVLeft, \symPPE_1/\symPPE_2, \symUorVRight}}{\symWoD} & \definedAs
			\pi_{ \lbrace \symUorVLeft,\symUorVRight \rbrace \cap \symAllVariables } \Bigl( \EvalCtx{\tupleD{\symUorVLeft, \symPPE_1, ?v}}{\symWoD} \Join \EvalCtx{\tupleD{?v, \symPPE_2, \symUorVRight}}{\symWoD} \Bigr) \\
		\EvalCtx{\tupleD{\symUorVLeft,\symPPE_1 \,|\, \symPPE_2,\symUorVRight}}{\symWoD} & \definedAs
			\EvalCtx{\tupleD{\symUorVLeft, \symPPE_1, \symUorVRight}}{\symWoD} \multicup \EvalCtx{\tupleD{\symUorVLeft, \symPPE_2, \symUorVRight}}{\symWoD}
	\\
	\EvalCtx{\tupleD{\symLeftConst, (\symPPE)^*, \symRightVar}}{\symWoD} & \definedAs
		\Big\langle\,
			\big\lbrace\, \mu \,|\, \fctDom{\mu} = \lbrace \symRightVar \rbrace \text{ and }
			 \mu(\symRightVar) \!\in\! \mathtt{ALPW1}(\symLeftConst,\symPPE,\symWoD) \big\rbrace  \textbf{ , }\,
			\fctsymCardGen{\Omega}
		\,\Big\rangle
	\\
	\EvalCtx{\tupleD{\symLeftVar, (\symPPE)^*, \symRightVar}}{\symWoD} & \definedAs
		\Big\langle\,
			\big\lbrace\, \mu \,|\, \fctDom{\mu} = \lbrace \symLeftVar,\symRightVar \rbrace \text{ and } \mu(\symLeftVar) \in \fctTerms{\symWoD} \text{ and }
			\\[-2mm]
						& \hspace{11.5mm}
			 \mu(\symRightVar) \in \mathtt{ALWP1}(\mu(\symLeftVar),\symPPE,\symWoD) \,\big\rbrace  \textbf{ , }\,
			\fctsymCardGen{\Omega}
		\,\Big\rangle
	\\
	\EvalCtx{\tupleD{\symLeftVar, (\symPPE)^*, \symRightConst}}{\symWoD} & \definedAs
		\EvalCtx{\tupleD{\symRightConst, (\symPPEi)^*, \symLeftVar}}{\symWoD}
	\\
	\EvalCtx{\tupleD{\symLeftConst, (\symPPE)^*, \symRightConst}}{\symWoD} & \definedAs
		\Big\langle\,
			\begin{cases} \lbrace \muEmpty \rbrace & \text{if } \exists\ \mu\in \EvalCtx{\tupleD{\symLeftConst, (\symPPE)^*, ?v}}{\symWoD} : \mu(?v) = \symRightConst , \\ ~~\emptyset & \text{else} \end{cases} \textbf{ , }\,
			\fctsymCardGen{\Omega}
		\,\Big\rangle
\end{align*}%
\end{gather}%
}%
%
%
\caption{\textbf{Context-based query semantics for SPARQL property paths over the Web.}}
\label{fig:stc-sem}
\end{figure}

\begin{figure}[t]
{\small%
	\subfigure{%
		\begin{minipage}[t]{0.37\textwidth}%
		\underline{\textbf{Function} $\mathtt{ALPW1}\bigl( \gamma, \symPPE, \symWoD \bigr)$}
		\par
		\textbf{Input:} $\gamma \in ( \symAllURIs \cup \symAllBNodes \cup \symAllLiterals )$,
		\par \hspace{9mm} $\symPPE$ is a PP expression,
		\par \hspace{9mm} $\symWoD$ is a \Web.
		\par
		\begin{algorithmic}[1]
			\STATE \textit{Visited} := $\emptyset$
				\STATE $\mathtt{ALPW2}\bigl( \gamma, \symPPE, \mathit{Visited}, \symWoD \bigr)$ \label{line:WebALP:FirstRecursion}
			\RETURN \textit{Visited}
		\end{algorithmic}%
		\end{minipage}%
	}%
	\subfigure{%
		\begin{minipage}[t]{0.63\textwidth}%
		\underline{\textbf{Function} $\mathtt{ALPW2}\bigl( \gamma, \symPPE, \mathit{Visited}, \symWoD \bigr)$}
		\par
		\textbf{Input:} $\gamma \in ( \symAllURIs \cup \symAllBNodes \cup \symAllLiterals )$,
		~ $\symPPE$ is a PP expression,
		\par \hspace{8mm} $\mathit{Visited} \subseteq ( \symAllURIs \cup \symAllBNodes \cup \symAllLiterals )$,
		~ $\symWoD$ is a \Web.
		\par
		\begin{algorithmic}[1]
		\setcounter{ALC@line}{3}
			\IF {$\gamma \notin \mathit{Visited}$}
				\STATE add $\gamma$ to \textit{Visited}
				\FORALL {$\mu \in \EvalCtx{\tupleD{?x, \symPPE, ?y}}{\symWoD}$ \, s.t. \, $\mu(?x) = \gamma$} \label{line:WebALP:ForBegin}
					\STATE $\mathtt{ALPW2}\bigl( \mu(?y), \symPPE, \mathit{Visited}, \symWoD \bigr)$ \, \COMMENT{$?x,?y \in \symAllVariables$} \label{line:WebALP:Recursion}
				\ENDFOR \label{line:WebALP:ForEnd}
			\ENDIF
		\end{algorithmic}%
		\end{minipage}%
	}%
\caption{\textbf{Auxiliary functions used for defining context-based query semantics.}}
\label{Figure:WebALP}
}
\end{figure}

%

\noindent
There are three points worth mentioning w.r.t.~Definition~\ref{def:Context}:
First, note how the context selector restricts the data that has to be searched to find matching triples (e.g., consider the first line in Figure~\ref{fig:stc-sem}).
Second, we emphasize that con\-text-based query semantics is defined such that it resembles the standard semantics of PP patterns as close as possible~(cf.~Section~\ref{subsec:preliminaries}).
Therefore, for the part of our definition that covers PP patterns of the form $\tupleD{\symUorVLeft, \symPPE^*\!, \symUorVRight}$, we also use auxiliary functions%
	---%
$\mathtt{ALPW1}$ and $\mathtt{ALPW2}$ (cf.~Figure~\ref{Figure:WebALP}).
	These functions evaluate the sub-expression $\symPPE$ recursively over the queried \Web%
		~(instead of using a fixed RDF graph as done in the standard semantics in Figure~\ref{Figure:StdALP}).
%
%
%
%
%
Third, the two base cases with a variable in the subject position (i.e., the third and the sixth line in Figure~\ref{fig:stc-sem}) require an enumeration of all IRIs. Such a requirement is necessary to preserve consistency with the standard semantics, as well as to preserve commutativity of operators that can be defined on top of PP patterns (such as the \!\OpAND operator in SPARQL; cf.~Section~\ref{sec:sparql}).
However, due to this requirement there exist PP patterns whose~(complete) evaluation under con\-text-based semantics is infeasible when querying the WWW. The following example describes such a case.

\begin{exmp}\label{ex:unsafe-pattern}\hspace{-1ex}\textbf{.}
 	Consider the PP pattern $\symPattern_\mathsf{E\ref{ex:unsafe-pattern}} = \tupleD{?v,\constant{knows},\constant{Tim}}$, which asks for the IRIs of people that know Tim. Under con\-text-based semantics, any IRI~$\symURI'$ can be used to generate a correct solution mapping for the pattern as long as a lookup of that IRI
	results in retrieving a document whose data includes the triple $\tupleD{\symURI'\!,\constant{knows},\constant{Tim}}$. While, for any \Web\ that is finite, there exists only a finite number of such IRIs, determining these IRIs and guaranteeing completeness requires to enumerate the infinite set of all IRIs and to check each of them (unless one knows the complete---and finite---subset of all IRIs that can be used to retrieve some document, which, due to the infiniteness of possible HTTP IRIs, cannot be achieved for the WWW).
\end{exmp}
 
\noindent
It is not difficult to see that the issue illustrated in the example exists for any triple pattern
	that has
a variable in the subject position. On the other hand, triple patterns whose subject is an IRI do not have this issue. However, having an IRI in the subject position is not a sufficient condition in general. 
For instance, the PP pattern $\tupleD{\constant{Tim},\invPPE{\constant{knows}},?v}$ has the same issue as the pattern in Example~\ref{ex:unsafe-pattern}~(in fact, both patterns are semantically equivalent under con\-text-based semantics).
A question that arises is whether there exists a
property of PP patterns that can be used to distinguish between patterns that do not have this issue (i.e., evaluating them over any \Web\ is feasible) and those that do. We shall discuss this question
for the more general case of PP-based SPARQL~queries.
\section{SPARQL with Property Paths on the Web}
\label{sec:sparql}

After considering PP patterns in separation, we now turn to a more expressive fragment of SPARQL that embeds PP patterns as the basic building block and uses additional operators on top.
	We
define the resulting PP-based SPARQL queries, discuss the feasibility of evaluating these queries over the Web, and introduce a syntactic property to identify queries for which an evaluation under con\-text-based semantics is feasible.

\subsection{Definition} \label{subsec:DefinitionSPARQL}
By
	using
the algebraic
	syntax of SPARQL~\cite{perez2009},
we define a \emph{graph pattern}
	recursively as follows:
%
\enumA~Any PP pattern $\tupleD{\symUorVLeft,\symPPE,\symUorVRight}$ is a graph pattern; and \enumB~if $\symPattern_1$ and $\symPattern_2$ are graph patterns, then $(\symPattern_1 \OpAND \symPattern_2)$, $(\symPattern_1 \OpUNION \symPattern_2)$, and $(\symPattern_1 \OpOPT \symPattern_2)$ are graph patterns.%
\footnote{For this paper we leave out other types of SPARQL graph
	patterns such as filters.
Adding them is an exercise that would not have any significant implication on the following discussion.}
For any graph pattern $\symPattern$, we write $\fctVars{\symPattern}$ to denote the set of \emph{all variables} in $\symPattern$.

By using PP patterns as the basic building block of graph patterns, we can readily carry over our
con\-text-based
semantics to graph patterns: For any graph pattern $\symPattern$ and any \Web\ $\symWoD$\!, the \emph{evaluation} of $\symPattern$ over $\symWoD$ under con\-text-based semantics is a multiset of solution mappings, denoted by $\EvalCtx{\symPattern}{\symWoD}$\!, that is defined recursively as~follows:\footnote{Note that the definition uses the algebra operators introduced in Section~\ref{subsec:preliminaries}.}

\begin{itemize}
	\item
		If $\symPattern$ is a PP pattern, then $\EvalCtx{\symPattern}{\symWoD}$ is defined in Definition~\ref{def:Context}.
	\item
		If $\symPattern$ is $(\symPattern_1 \OpAND \symPattern_2)$, then $\EvalCtx{\symPattern}{\symWoD} \definedAs \EvalCtx{\symPattern_1}{\symWoD} \Join \EvalCtx{\symPattern_2}{\symWoD}$.
	\item
		If $\symPattern$ is $(\symPattern_1 \OpUNION \symPattern_2)$, then $\EvalCtx{\symPattern}{\symWoD} \definedAs \EvalCtx{\symPattern_1}{\symWoD} \multicup \EvalCtx{\symPattern_2}{\symWoD}$.
	\item
		If $\symPattern$ is $(\symPattern_1 \OpOPT \symPattern_2)$, then $\EvalCtx{\symPattern}{\symWoD} \definedAs \bigl( \EvalCtx{\symPattern_1}{\symWoD} \Join \EvalCtx{\symPattern_2}{\symWoD} \bigr) \multicup \bigl( \EvalCtx{\symPattern_1}{\symWoD} \setminus \EvalCtx{\symPattern_2}{\symWoD} \bigr)$.
\end{itemize}

\subsection{Discussion}
Given
	a
query semantics for evaluating PP-based graph patterns over a \Web, we now discuss the feasibility of such evaluation.
%
%
%
%
	To this end, we
introduce the notion of \emph{Web-safeness} of graph patterns. Informally, graph patterns are Web-safe if evaluating them completely under con\-text-based semantics is possible. Formally:

\begin{definition}
	A graph pattern $\symPattern$ is \emph{Web-safe} if there exists an algorithm that, for any finite \Web\ $\symWoD=\tupleD{\symDocs,\fctsymData,\fctsymADoc}$, computes $\EvalCtx{\symPattern}{\symWoD}$ by looking up only a finite number of IRIs without assuming direct access to the sets $\symDocs$ and $\fctDom{\fctsymADoc}$.
\end{definition}

\begin{exmp} \label{ex:safeness}\hspace{-1ex}\textbf{.}
	Consider graph pattern $\symPattern_\mathsf{E\ref{ex:safeness}} = \bigl( \tupleD{\constant{Bob},\constant{knows},?v} \OpAND \tupleD{?v,\constant{knows},\constant{Tim}} \bigr)$. The right sub-pat\-tern $\symPattern_\mathsf{E\ref{ex:unsafe-pattern}} = \tupleD{?v,\constant{knows},\constant{Tim}}$ is \emph{not} Web-safe because evaluating it completely over the WWW is not feasible under con\-text-based semantics (cf.~Example~\ref{ex:unsafe-pattern}). However, the larger pattern $\symPattern_\mathsf{E\ref{ex:safeness}}$ is Web-safe; it can be evaluated completely~under con\-text-based semantics. 
	For instance, a possible algorithm may first evaluate the left
		sub-pat\-tern,
	which is feasible because it requires the lookup of a single IRI only~(the IRI $\constant{Bob}$). Thereafter, the evaluation of the right sub-pat\-tern $\symPattern_\mathsf{E\ref{ex:unsafe-pattern}}$ can be reduced to looking up a finite number of IRIs only, namely the IRIs bound to variable $?v$ in solution mappings obtained
	for the left sub-pat\-tern. Although any other IRI $\symURI^*$ might also be used to discover matching triples for $\symPattern_\mathsf{E\ref{ex:unsafe-pattern}}$, each of these triples has IRI $\symURI^*$
		as its subject%
	~(which is a consequence of restricting retrieved data based on the context selector introduced in Section~\ref{sec:w3c-ctx-sem}). Therefore, the solution mappings resulting from such matching triples cannot be compatible with any solution for the left sub-pat\-tern and, thus, do not satisfy the join condition established by the semantics of\OpAND in pattern~$\symPattern_\mathsf{E\ref{ex:safeness}}$.
\end{exmp}

\noindent
The example illustrates that some graph patterns are Web-safe even if some of their sub-pat\-terns are not. Consequently, we are interested in a \emph{decidable} property that enables to identify Web-safe patterns, including those whose sub-patterns are not Web-safe.
%

Buil-Aranda et al.~study a similar problem in the context of SPARQL federation where graph patterns of the form $\symPattern_S = \bigl(\!\OpSERVICE ?v \, \symPattern \bigr)$ are allowed~\cite{buil2012}. 
Here, variable~$?v$ ranges over a possibly large set of IRIs, each of which represents the address of a~(remote) SPARQL service that needs to be called to assemble the complete result of~$\symPattern_S$. However, many service calls may be avoided if~$\symPattern_S$ is embedded in a larger graph pattern that allows for an evaluation during which~$?v$ can be bound before evaluating~$\symPattern_S$. To tackle this problem, Buil-Aranda et al.~introduce a notion of
	\textit{strong boundedness} of variables in graph patterns
and use it to show a notion of safeness for the evaluation of patterns like $\symPattern_S$ within larger graph patterns. 
%
The set of \emph{strongly bound variables} in a graph pattern $\symPattern$, denoted by $\fctCVars{\symPattern}$, is defined recursively as follows:
\begin{itemize}
	\item
		If $\symPattern$ is a PP pattern, then $\fctCVars{\symPattern}=\fctVars{\symPattern}$ (recall that $\fctVars{\symPattern}$ are all variables in $\symPattern$).
	\item
		If $\symPattern$ is of the form $(\symPattern_1 \OpAND \symPattern_2)$, then $\fctCVars{\symPattern}=\fctCVars{\symPattern_1} \cup \fctCVars{\symPattern_2}$.
	\item
		If $\symPattern$ is of the form $(\symPattern_1 \OpUNION \symPattern_2)$, then $\fctCVars{\symPattern} = \fctCVars{\symPattern_1} \cap \fctCVars{\symPattern_2}$.
	\item
		If $\symPattern$ is of the form $(\symPattern_1 \OpOPT \symPattern_2)$, then $\fctCVars{\symPattern} = \fctCVars{\symPattern_1}$.
\end{itemize}

\noindent
The idea behind the notion of strongly bound variables has already been used in earlier work (e.g., \textit{``certain variables''}~\cite{Schmidt10:FoundationsOfSPARQLOptimization}, \textit{``output variables''}~\cite{TomanAndWeddellBook}), and it is tempting to adopt it for our problem. However, we note that one cannot identify Web-safe graph patterns by using strong boundedness in a manner similar to its use in Buil-Aranda et al.'s work alone.
For instance, consider graph pattern
	$\symPattern_\mathsf{E\ref{ex:safeness}}$
from Example~\ref{ex:safeness}.
	We know that \enumA~$\symPattern_\mathsf{E\ref{ex:safeness}}$ is Web-safe and that \enumB~$\fctVars{\symPattern_\mathsf{E\ref{ex:safeness}}} = \lbrace ?v \rbrace$ and also $\fctCVars{\symPattern_\mathsf{E\ref{ex:safeness}}} = \lbrace ?v \rbrace$. Then, one
might hypothesize that for every graph pattern $\symPattern$,
	if $\fctCVars{\symPattern}=\fctVars{\symPattern}$, then $\symPattern$ is Web-safe.
However, the PP pattern $\symPattern_\mathsf{E\ref{ex:unsafe-pattern}} = \tupleD{?v,\mathrm{knows},\mathrm{Tim}}$ disproves such a hypothesis because, even if $\fctCVars{\symPattern_\mathsf{E\ref{ex:unsafe-pattern}}}=\fctVars{\symPattern_\mathsf{E\ref{ex:unsafe-pattern}}}$, pattern $\symPattern_\mathsf{E\ref{ex:unsafe-pattern}}$ is not Web-safe~(cf.~Example~\ref{ex:unsafe-pattern}).
%

We conjecture the following reason why strong boundedness cannot be used directly for our problem. For complex patterns (i.e., patterns that are not PP patterns), the sets of strongly bound variables of all sub-patterns are defined \emph{independent} from each other, whereas the algorithm outlined in Example~\ref{ex:safeness} leverages a specific relationship between sub-patterns. More precisely, the algorithm leverages the fact that the same variable that is the subject of the right sub-pattern is also the object of the left sub-pattern. 

Based on this observation, we introduce the notion of \emph{conditionally Web-bounded variables}, the definition of which, for complex graph patterns, is based on specific relationships between sub-patterns. This notion shall turn out to be suitable for our~case.

\begin{definition} 
\label{def:cond-web-safeness}
The \emph{conditionally Web-bounded variables} of a graph pattern $\symPattern$ w.r.t.~a set of variables
	$\symWebSafeVar$
is the subset $\fctCondWebSafe{\symPattern}{\symWebSafeVar} \subseteq \fctVars{\symPattern}$
that is defined recursively as follows:
\begin{center}%
\scriptsize
\begin{tabular}{|rp{85mm}|c|}
	\hline
	& \hspace{30mm} If $\symPattern$  is: 
	& then $\fctCondWebSafe{\symPattern}{\symWebSafeVar}$ is:
	\\ \hline
	1) & $\tupleD{\symUorVLeft,\symURI,\symUorVRight}$
	\, or \, $\tupleD{\symUorVLeft, !( \symURI_1 \,|\, ... \,|\, \symURI_n) ,\symUorVRight}$ \, such that  $\symUorVLeft \in (\symAllURIs \cup \symAllLiterals)$ or $\symUorVLeft \in \symWebSafeVar$
	& $\fctVars{\symPattern}$
	\\
	2) & $\tupleD{\symUorVLeft,\symURI,\symUorVRight}$
	\, or \, $\tupleD{\symUorVLeft, !( \symURI_1 \,|\, ... \,|\, \symURI_n) ,\symUorVRight}$ \, such that $\symUorVLeft \notin (\symAllURIs \cup \symAllLiterals)$ and $\symUorVLeft \notin \symWebSafeVar$
	& $\emptyset$
	\\[1mm]
3) & $\tupleD{\symUorVLeft,(\symPPE)^*\!,\symUorVRight}$
s.t.~$\symUorVLeft \in \symAllVariables$ and $\symUorVRight \notin \symAllVariables$
& $\fctsymWebSafe\bigl( \tupleD{\symUorVRight,(\symPPEi)^*\!,\symUorVLeft} \,|\, \symWebSafeVar \bigr)$
\\
4) & $\tupleD{\symUorVLeft,(\symPPE)^*\!,\symUorVRight}$
s.t.~\enumA~$\symUorVLeft \notin \symAllVariables$ or $\symUorVRight \in \symAllVariables$, and \enumB~for any two variables $?x,?y \in \symAllVariables$
& $\fctsymWebSafe\bigl( \tupleD{\symUorVLeft,\symPPE,\symUorVRight} \,|\, \symWebSafeVar \bigr)$
\\
& \hspace{27.5mm} it holds that $\fctsymWebSafe\bigl( \tupleD{?x,\symPPE,?y} \,|\, \lbrace ?x \rbrace \bigr) = \lbrace ?x,?y \rbrace$ &
\\
5) & $\tupleD{\symUorVLeft,(\symPPE)^*\!,\symUorVRight}$
such that none of the above
& $\emptyset$
\\[1mm]
	6) & $\tupleD{\symUorVLeft,\symPPEi,\symUorVRight}$
	with $\symPattern' = \tupleD{\symUorVRight,\symPPE,\symUorVLeft}$
	& $\fctCondWebSafe{\symPattern'}{\symWebSafeVar}$
	\\
	7) & $\tupleD{\symUorVLeft,(\symPPE_1|\symPPE_2),\symUorVRight}$
	with $\symPattern' = \bigl( \tupleD{\symUorVLeft,\symPPE_1,\symUorVRight} \OpUNION \tupleD{\symUorVLeft,\symPPE_2,\symUorVRight} \bigr)$
	& $\fctCondWebSafe{\symPattern'}{\symWebSafeVar}$
	\\[1mm]
	8) & $\tupleD{\symUorVLeft,\symPPE_1/\symPPE_2,\symUorVRight}$
	s.t., for any $?v \in \symAllVariables \setminus \bigl( \symWebSafeVar \cup \lbrace \symUorVLeft,\symUorVRight \rbrace \bigr)$, $?v \in \fctCondWebSafe{\symPattern'}{\symWebSafeVar}$
	& $\fctCondWebSafe{\symPattern'}{\symWebSafeVar} \setminus \lbrace ?v \rbrace$
	\\
	& \hspace{28mm} where $\symPattern' = \bigl( \tupleD{\symUorVLeft,\symPPE_1,?v} \OpAND \tupleD{?v,\symPPE_2,\symUorVRight} \bigr)$ &
	\\
	9) & $\tupleD{\symUorVLeft,\symPPE_1/\symPPE_2,\symUorVRight}$
	such that none of the above
	& $\emptyset$
	\\[1mm]
	10) & $(\symPattern_1 \OpAND \symPattern_2)$
	s.t.~$\fctCondWebSafe{\symPattern_1}{\symWebSafeVar} = \fctVars{\symPattern_1}$ and $\fctCondWebSafe{\symPattern_2}{\symWebSafeVar} = \fctVars{\symPattern_2}$
	& $\fctVars{\symPattern}$
	\\
	11) & $(\symPattern_1 \OpAND \symPattern_2)$
	s.t.~$\fctCondWebSafe{\symPattern_1}{\symWebSafeVar} = \fctVars{\symPattern_1}$ and $\fctCondWebSafe{\symPattern_2}{\symWebSafeVar \cup \fctCVars{\symPattern_1}} = \fctVars{\symPattern_2}$
	& $\fctVars{\symPattern}$
	\\
	12) & $(\symPattern_1 \OpAND \symPattern_2)$
	s.t.~$\fctCondWebSafe{\symPattern_2}{\symWebSafeVar} = \fctVars{\symPattern_2}$ and $\fctCondWebSafe{\symPattern_1}{\symWebSafeVar \cup \fctCVars{\symPattern_2}} = \fctVars{\symPattern_1}$
	& $\fctVars{\symPattern}$
	\\
	13) & $(\symPattern_1 \OpAND \symPattern_2)$
	such that none of the above
	& $\emptyset$
	\\[1mm]
	14) & $(\symPattern_1 \OpUNION \symPattern_2)$
	& $\fctCondWebSafe{\symPattern_1}{\symWebSafeVar} \!\cap\! \fctCondWebSafe{\symPattern_2}{\symWebSafeVar}$
	\\[1mm]
	15) & $(\symPattern_1 \OpOPT \symPattern_2)$
	s.t.~$\fctCondWebSafe{\symPattern_1}{\symWebSafeVar} = \fctVars{\symPattern_1}$ and $\fctCondWebSafe{\symPattern_2}{\symWebSafeVar} = \fctVars{\symPattern_2}$
	& $\fctVars{\symPattern}$
	\\
	16) & $(\symPattern_1 \OpOPT \symPattern_2)$
	s.t.~$\fctCondWebSafe{\symPattern_1}{\symWebSafeVar} = \fctVars{\symPattern_1}$ and $\fctCondWebSafe{\symPattern_2}{\symWebSafeVar \cup \fctCVars{\symPattern_1}} = \fctVars{\symPattern_2}$
	& $\fctVars{\symPattern}$
	\\
	17) & $(\symPattern_1 \OpOPT \symPattern_2)$
	such that none of the above
	& $\emptyset$
	\\ \hline
\end{tabular}
\end{center}
\end{definition}

\smallskip
\begin{exmp}\label{ex:CondWebBoundedVars}\hspace{-1ex}\textbf{.}
	For the PP pattern $\symPattern_\mathsf{E\ref{ex:unsafe-pattern}} = \tupleD{?v,\constant{knows},\constant{Tim}}$---which is \emph{not} Web-safe~(as discussed in Example~\ref{ex:unsafe-pattern})---if we use the set $\lbrace ?v \rbrace$ as condition, then, by line~1 in Definition~\ref{def:cond-web-safeness}, it holds that $\fctsymWebSafe\bigl( \symPattern_\mathsf{E\ref{ex:unsafe-pattern}} \,\big|\, \lbrace ?v \rbrace \bigr) = \lbrace ?v \rbrace$. However, if we use the empty set instead, we obtain $\fctCondWebSafe{ \symPattern_\mathsf{E\ref{ex:unsafe-pattern}} }{\emptyset} = \emptyset$~(cf.~line~2 in Definition~\ref{def:cond-web-safeness}).

	While for the non-Web-safe pattern $\symPattern_\mathsf{E\ref{ex:unsafe-pattern}}$ we thus observe $\fctCondWebSafe{ \symPattern_\mathsf{E\ref{ex:unsafe-pattern}} }{\emptyset} \neq \fctVars{\symPattern_\mathsf{E\ref{ex:unsafe-pattern}}}$, for graph pattern $\symPattern_\mathsf{E\ref{ex:safeness}} \!=\! \bigl( \tupleD{\constant{Bob},\constant{knows},?v} \OpAND \tupleD{?v,\constant{knows},\constant{Tim}} \bigr)$---which is Web-safe~(cf. Example~\ref{ex:safeness})---we have $\fctCondWebSafe{ \symPattern_\mathsf{E\ref{ex:safeness}} }{ \emptyset } = \fctVars{ \symPattern_\mathsf{E\ref{ex:safeness}} }$. The fact that $\fctCondWebSafe{ \symPattern_\mathsf{E\ref{ex:safeness}} }{ \emptyset } = \lbrace ?v \rbrace$ follows from \enumA~$\fctsymWebSafe\bigl( \tupleD{\constant{Bob},\constant{knows},?v} \,\big|\, \emptyset \bigr) = \lbrace ?v \rbrace$, \enumB~$\fctCVars{ \tupleD{\constant{Bob},\constant{knows},?v} } = \lbrace ?v \rbrace$, \enumC~$\fctsymWebSafe\bigl( \tupleD{?v,\constant{knows},\constant{Tim}} \,\big|\, \lbrace ?v \rbrace \bigr) = \lbrace ?v \rbrace$, and \enumD~line~11 in Definition~\ref{def:cond-web-safeness}.

%
\end{exmp}

\noindent
The example seems to suggest that, if \emph{all} variables of a graph pattern are conditionally Web-bounded w.r.t.~the empty set of variables, then the graph pattern is Web-safe. The following result verifies this hypothesis.
\begin{theorem}\label{thm:MainResult}
	A graph pattern $\symPattern$ is Web-safe if $\fctCondWebSafe{\symPattern}{\emptyset} = \fctVars{\symPattern}$.
\end{theorem}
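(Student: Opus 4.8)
The plan is to prove Theorem~\ref{thm:MainResult} by an induction that follows the recursion of Definition~\ref{def:cond-web-safeness} itself (this recursion is well-founded, since $\fctCondWebSafe{\cdot}{\cdot}$ is well-defined), after first strengthening the statement so that it can carry that recursion through. A strengthening is needed because several clauses of the definition invoke $\fctsymWebSafe$ with a \emph{nonempty} condition set (lines~11, 12, and~16) and with the derived pattern obtained from a compound property-path expression (lines~3, 6, 7, and~8); hence the induction hypothesis has to speak about an arbitrary condition set $\symWebSafeVar$ together with whatever information a caller supplies about the variables in $\symWebSafeVar$. Concretely, for a graph pattern $\symPattern$, a finite $\symWebSafeVar \subseteq \symAllVariables$, and a finite $C \subseteq (\symAllURIs \cup \symAllBNodes \cup \symAllLiterals)$, call the \emph{$(C,\symWebSafeVar)$-restriction} of $\EvalCtx{\symPattern}{\symWoD}$ the sub-multiset consisting of all $\mu \in \EvalCtx{\symPattern}{\symWoD}$ with $\mu(?x) \in C$ for every $?x \in \symWebSafeVar \cap \fctDom{\mu}$. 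I would prove the following: \emph{if $\fctCondWebSafe{\symPattern}{\symWebSafeVar} = \fctVars{\symPattern}$, then there is an algorithm that, for any finite \Web\ $\symWoD$ accessed only through IRI lookups and any finite $C$, computes the $(C,\symWebSafeVar)$-restriction of $\EvalCtx{\symPattern}{\symWoD}$ using only finitely many lookups and without accessing $\symDocs$ or $\fctDom{\fctsymADoc}$.} Theorem~\ref{thm:MainResult} is then the special case $\symWebSafeVar = \emptyset$, for which the $(C,\symWebSafeVar)$-restriction is all of $\EvalCtx{\symPattern}{\symWoD}$.

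Two auxiliary observations would be recorded before starting the induction. First, every variable in $\fctCVars{\symPattern'}$ is bound in every solution mapping of $\EvalCtx{\symPattern'}{\symWoD}$ (a straightforward induction on $\symPattern'$); this is what makes the values taken on $\fctCVars{\symPattern_1}$ available once $\symPattern_1$ has been computed, as needed for lines~11 and~16. Second, the algebra operators $\multicup$, $\Join$, $\setminus$, and $\pi$ commute with taking $(C,\symWebSafeVar)$-restrictions in the expected way, so that for a compound pattern an algorithm can be assembled by running algorithms for the sub-patterns and then re-imposing the $(C,\symWebSafeVar)$-restriction on the combined multiset as a lookup-free post-processing step. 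With these in hand, the \OpAND, \OpUNION, and \OpOPT cases follow by inspecting which line of Definition~\ref{def:cond-web-safeness} forces $\fctsymWebSafe$ to equal $\fctVars{\symPattern}$: for line~11, say, one runs the algorithm for $\symPattern_1$ on $C$, enlarges $C$ to a still-finite set $C'$ by adjoining the values that the computed solution mappings take on $\fctCVars{\symPattern_1}$, runs the algorithm for $\symPattern_2$ on $C'$ with condition $\symWebSafeVar \cup \fctCVars{\symPattern_1}$, joins the two outputs, and re-filters with $C$ over $\symWebSafeVar$; the $\emptyset$-returning lines cannot apply under the hypothesis whenever $\symPattern$ contains a variable, because then $\fctCondWebSafe{\symPattern}{\symWebSafeVar} \neq \fctVars{\symPattern}$. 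For a PP pattern, the atomic cases (where $\symPPE$ is an IRI or a negated property set, lines~1--2) amount to a single lookup when the subject is a constant, or to one lookup per candidate IRI in $C$ when the subject is a variable, which the hypothesis then forces into $\symWebSafeVar$ (here one uses that any matching triple lies in the context of its own subject). The inverse, alternation, and concatenation cases (lines~6, 7, 8) are handled by rewriting $\symPattern$ into the pattern named in the corresponding line---which, by the identities in Figure~\ref{fig:stc-sem}, has the same evaluation---and invoking the induction hypothesis, with concatenation additionally projecting out the fresh intermediate variable; this projection is sound precisely because line~8's side condition guarantees that variable is conditionally Web-bounded in the rewritten \OpAND pattern.

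The main obstacle is the Kleene-star case $\tupleD{\symUorVLeft, (\symPPE)^*, \symUorVRight}$, which rests on the auxiliary functions $\mathtt{ALPW1}$ and $\mathtt{ALPW2}$ of Figure~\ref{Figure:WebALP}. I would first reduce the var-const form to the const-var form via the identity $\EvalCtx{\tupleD{\symLeftVar, (\symPPE)^*, \symRightConst}}{\symWoD} = \EvalCtx{\tupleD{\symRightConst, (\symPPEi)^*, \symLeftVar}}{\symWoD}$ (Figure~\ref{fig:stc-sem}, mirroring line~3 of Definition~\ref{def:cond-web-safeness}). The crux is then the const-var, var-var, and const-const forms governed by line~4: its side condition $\fctCondWebSafe{\tupleD{?x, \symPPE, ?y}}{\lbrace ?x \rbrace} = \lbrace ?x, ?y \rbrace$ is exactly what the induction hypothesis needs in order to compute, for any \emph{already known} IRI $\gamma$, the set $\lbrace \mu \in \EvalCtx{\tupleD{?x, \symPPE, ?y}}{\symWoD} \,|\, \mu(?x) = \gamma \rbrace$ with finitely many lookups---and this is precisely the set iterated over in line~\ref{line:WebALP:ForBegin} of $\mathtt{ALPW2}$. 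Granting this, termination of $\mathtt{ALPW1}$ and finiteness of its lookups follow by observing that each recursive call of $\mathtt{ALPW2}$ either returns at once or adds one element to $\mathit{Visited}$, that every element ever added lies in the finite set $\fctTerms{\symWoD}$ (so the recursion tree is finite), and that each individual call issues only finitely many lookups; the starting argument of $\mathtt{ALPW1}$ is the constant $\symLeftConst$ in the const-var form and is drawn from $C$ through $\symWebSafeVar$ in the var-var form, while the const-const form additionally performs a membership test of $\symRightConst$ against the result of the const-var computation. The delicate points are to verify that the reachable set $\mathtt{ALPW1}$ produces over $\symWoD$ coincides with the star-unfolding prescribed in Figure~\ref{fig:stc-sem}, and to track the condition sets correctly through the interleaving of the star-unfolding with the inverse-rewriting of lines~3 and~6; that bookkeeping is where most of the care in the argument is concentrated.
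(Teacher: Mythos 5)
Your strengthening is genuinely different from the paper's: the paper restricts the evaluation by \emph{compatibility with a single input solution mapping} $\mu_\mathsf{in}$ (Definition~\ref{def:ContextWithInput}, Lemma~\ref{lem:Proxy}), whereas you restrict by a finite \emph{value set} $C$ imposed on the condition variables $\symWebSafeVar$. The difference matters, and it is where your argument breaks: your second auxiliary observation---that $\multicup$, $\Join$, $\setminus$ and $\pi$ commute with $(C,\symWebSafeVar)$-restriction, so compound patterns can be handled by running sub-algorithms and re-filtering without lookups---is false for $\setminus$, and with it the \OpOPT case fails. Concretely, take $\symPattern_1 = \tupleD{\symURI_1,p,?y}$, $\symPattern_2 = \tupleD{\symURI_2,q,?z}$ with $\symURI_1,\symURI_2 \in \symAllURIs$, $\symPattern = (\symPattern_1 \OpOPT \symPattern_2)$, $\symWebSafeVar = \lbrace ?z \rbrace$, and a \Web\ in which $\fctData{\fctADoc{\symURI_1}} = \lbrace \tupleD{\symURI_1,p,b} \rbrace$ and $\fctData{\fctADoc{\symURI_2}} = \lbrace \tupleD{\symURI_2,q,c} \rbrace$ with $c \notin C$. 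Line~15 of Definition~\ref{def:cond-web-safeness} gives $\fctCondWebSafe{\symPattern}{\symWebSafeVar} = \fctVars{\symPattern}$, so the hypothesis of your strengthened claim holds; the correct $(C,\symWebSafeVar)$-restriction of $\EvalCtx{\symPattern}{\symWoD}$ is empty (the only solution binds $?z$ to $c \notin C$), yet your assembly computes $R_1 \setminus R_2 = \EvalCtx{\symPattern_1}{\symWoD} \neq \emptyset$, because the restriction of $\EvalCtx{\symPattern_2}{\symWoD}$ is empty while $\EvalCtx{\symPattern_2}{\symWoD}$ itself is not; the final re-filter over $\symWebSafeVar$ cannot repair this since the spurious mapping does not bind $?z$ at all, and the same failure occurs if you instead use condition $\symWebSafeVar \cup \fctCVars{\symPattern_1}$ with an enlarged $C'$, since the offending variable lies in $\symWebSafeVar$ but outside the domain of the $\symPattern_1$-solution. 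The root cause is that deciding whether a $\symPattern_1$-solution $\mu_1$ survives the difference requires knowing whether \emph{any} $\symPattern_2$-solution compatible with $\mu_1$ exists---exactly the $\mu_1$-restricted evaluation the paper's invariant delivers---whereas a value filter on $\symWebSafeVar$-variables can silently discard all witnesses. Nor can you dodge this by noting that the theorem only needs $\symWebSafeVar = \emptyset$: your own \OpAND\ recipe recurses into the right sub-pattern with the nonempty condition $\symWebSafeVar \cup \fctCVars{\symPattern_1}$, so an \OpOPT\ nested under an \OpAND\ reaches the broken case.

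A second, smaller omission: in the var--var Kleene-star case your plan seeds $\mathtt{ALPW1}$ ``from $C$ through $\symWebSafeVar$,'' which tacitly assumes the \emph{subject} variable is conditioned. When only the object variable is in $\symWebSafeVar$ you must first argue that at least one of the two variables is conditioned and then swap via the equivalence of $\tupleD{?x,(\symPPE)^*,?y}$ and $\tupleD{?y,(\symPPEi)^*,?x}$; this equivalence is not among the defining equations of Figure~\ref{fig:stc-sem} (which only covers the case where one end is a constant) and the paper proves it separately (Lemmas~\ref{lem:OneVariableIsBound} and~\ref{lem:EquivalenceOfSwappingStarAndInverse}). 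Your other cases (base patterns, inverse, alternation, concatenation, \OpAND, \OpUNION, and the use of line~4's side condition inside $\mathtt{ALPW2}$) do track the paper's argument and look sound, but as it stands the proposal does not establish the theorem for patterns containing \OpOPT\ without importing the paper's compatibility-based restriction or an equivalent device.
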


\begin{Note}\hspace{-1ex}\textbf{.}
	Due to the recursive nature of Definition~\ref{def:cond-web-safeness}, the condition
		$\fctCondWebSafe{\symPattern}{\emptyset} \!=\! \fctVars{\symPattern}$~(as used in Theorem~\ref{thm:MainResult})
	is decidable for any graph pattern $\symPattern$.
\end{Note}

\noindent
We prove Theorem~\ref{thm:MainResult} based on an algorithm that evaluates graph patterns recursively by passing (intermediate) solution mappings to recursive calls. To capture the desired results of each recursive call formally, we introduce a special evaluation function for a graph pattern $\symPattern$ over a \Web\ $\symWoD$ that takes a solution mapping $\mu$ as input and returns only the solutions for $\symPattern$ over $\symWoD$ that are compatible with $\mu$.

\begin{definition} \label{def:ContextWithInput}
	Let $\symPattern$ be a graph pattern, let $\symWoD$ be a \Web, and let $\tupleD{\Omega,\fctsymCard} = \EvalCtx{\symPattern}{\symWoD}$. Given a solution mapping $\mu$, the \emph{$\mu$-restricted evaluation} of $\symPattern$ over $\symWoD$ under \emph{con\-text-based semantics}, denoted by $\EvalCtx{\symPattern \,|\, \mu\,}{\symWoD}$, is the multiset of solution mappings $\tupleD{\Omega'\!,\fctsymCard'}$ with $\Omega' = \big\lbrace \mu' \in \Omega \,\big|\, \mu' \compatible \mu \big\rbrace$ and
		$\fctsymCard'(\mu') = \fctCard{\mu'}$ for all $\mu'\! \in \Omega'$.
\end{definition}

\noindent
The following lemma shows the existence of the aforementioned recursive algorithm.

\begin{lemma}\label{lem:Proxy}
	Let $\symPattern$ be a graph pattern and let $\mu_\mathsf{in}$ be a solution mapping.
	If
		it holds that
	$\fctsymWebSafe\bigl( \symPattern \,\big|\, \fctDom{\mu_\mathsf{in}} \bigr) = \fctVars{\symPattern}$, there exists an algorithm that, for any finite \Web\ $\symWoD$\!, computes $\EvalCtx{\symPattern \,|\, \mu_\mathsf{in}\,}{\symWoD}$ by looking up a finite number of IRIs only.
\end{lemma}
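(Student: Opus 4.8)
The plan is to prove Lemma~\ref{lem:Proxy} by induction over a well-founded measure on graph patterns that decreases along every recursive reference in Definition~\ref{def:cond-web-safeness}; concretely, I would order graph patterns lexicographically by the pair consisting of (a)~the total number of PP-expression operators ($\invPPE{\cdot}$, $/$, $|$, $(\cdot)^*$) occurring in the PP patterns of $\symPattern$ and (b)~the number of operators $\OpAND$, $\OpUNION$, $\OpOPT$ in $\symPattern$, and the induction hypothesis is the full statement of the lemma for all graph patterns of strictly smaller measure (and every input solution mapping). Before the induction I would establish two auxiliary facts used repeatedly. First, the $\mu$-restricted evaluation of Definition~\ref{def:ContextWithInput} commutes with the algebra operators --- $\EvalCtx{\symPattern_1 \OpUNION \symPattern_2 \,|\, \mu}{\symWoD}$ is the multiset union of $\EvalCtx{\symPattern_1 \,|\, \mu}{\symWoD}$ and $\EvalCtx{\symPattern_2 \,|\, \mu}{\symWoD}$, and $\EvalCtx{\symPattern_1 \OpAND \symPattern_2 \,|\, \mu}{\symWoD} = \EvalCtx{\symPattern_1 \,|\, \mu}{\symWoD} \Join \EvalCtx{\symPattern_2 \,|\, \mu}{\symWoD}$ (using that $\mu_1\cup\mu_2 \compatible \mu$ iff $\mu_1\compatible\mu$ and $\mu_2\compatible\mu$) --- and with the rewrites of Figure~\ref{fig:stc-sem} for inverse, alternation, and concatenation. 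Second, $\fctsymWebSafe$ is monotone in its condition argument, i.e.\ $V \subseteq V'$ implies $\fctCondWebSafe{\symPattern}{V} \subseteq \fctCondWebSafe{\symPattern}{V'}$, which follows by a routine induction over Definition~\ref{def:cond-web-safeness}. I would also note that, because $\symWoD$ is finite, $\EvalCtx{\symPattern}{\symWoD}$ is always a finite multiset (the union of all $\fctData{d}$ is finite, only finitely many IRIs occur in it as subjects, and all constructions in Figures~\ref{fig:stc-sem} and~\ref{Figure:WebALP} preserve finiteness), so any intermediate set of solution mappings the algorithm needs is finite and can be materialised and filtered.

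In the base cases (lines~1--2 of Definition~\ref{def:cond-web-safeness}) the hypothesis $\fctCondWebSafe{\symPattern}{\fctDom{\mu_\mathsf{in}}} = \fctVars{\symPattern}$ excludes line~2 whenever $\fctVars{\symPattern}\neq\emptyset$, so the PP pattern's subject is an IRI/literal or a variable bound by $\mu_\mathsf{in}$; the algorithm looks up at most one IRI --- the subject, when that subject is an IRI --- reads off the matching triples from the context selector $\fctContext{\symWoD}{\cdot}$ (the result is empty when the subject is a literal or blank node, as in Figure~\ref{fig:stc-sem}), and keeps only the solutions compatible with $\mu_\mathsf{in}$. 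For the PP-expression operator cases I rewrite and recurse using Figure~\ref{fig:stc-sem}: an inverse PP pattern becomes $\tupleD{\symUorVRight,\symPPE,\symUorVLeft}$ (line~6), an alternation becomes a $\OpUNION$ of two PP patterns (line~7), and a concatenation becomes $(\tupleD{\symUorVLeft,\symPPE_1,?v} \OpAND \tupleD{?v,\symPPE_2,\symUorVRight})$ (line~8); in each case the matching line of Definition~\ref{def:cond-web-safeness} shows the hypothesis transfers verbatim to the rewritten, measure-smaller pattern --- for line~8, its side condition is exactly what makes the fresh join variable $?v$ conditionally Web-bounded, so that an $\OpAND$ case (below) applies --- and the induction hypothesis closes each of these. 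The Kleene-star cases (lines~3--4) are handled with the traversal of Figure~\ref{Figure:WebALP}: under the hypothesis exactly one endpoint of $\tupleD{\symUorVLeft,(\symPPE)^*,\symUorVRight}$ is a constant or a variable bound by $\mu_\mathsf{in}$, and after orienting the pattern so that this endpoint is the subject (using the flip of line~3 when needed), the algorithm runs $\mathtt{ALPW1}$/$\mathtt{ALPW2}$ from that known seed; at each visited term $\gamma$ it must produce $\{\,\mu(?y) \mid \mu \in \EvalCtx{\tupleD{?x,\symPPE,?y}}{\symWoD},\ \mu(?x)=\gamma\,\} = \EvalCtx{\tupleD{?x,\symPPE,?y} \,|\, \{?x\mapsto\gamma\}}{\symWoD}$, which the induction hypothesis supplies because $\symPPE$ has strictly fewer PP-expression operators than $(\symPPE)^*$ and condition~(ii) of line~4 is precisely $\fctCondWebSafe{\tupleD{?x,\symPPE,?y}}{\{?x\}} = \{?x,?y\}$. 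As $\symWoD$ is finite, only finitely many distinct terms are visited (each occurs as the object of a retrieved triple), so $\mathtt{ALPW2}$ terminates and issues finitely many lookups; a final filter against $\mu_\mathsf{in}$ gives $\EvalCtx{\symPattern \,|\, \mu_\mathsf{in}}{\symWoD}$. The ``none of the above'' cases (lines~5, 9, 13, 17) contradict the hypothesis except when $\fctVars{\symPattern}=\emptyset$, leaving a constant-only pattern whose evaluation would have to be decided by a separate direct argument.

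For the complex patterns: for $\OpUNION$ (line~14), since $\fctCondWebSafe{\symPattern_i}{V}\subseteq\fctVars{\symPattern_i}$ and $\fctVars{\symPattern}=\fctVars{\symPattern_1}\cup\fctVars{\symPattern_2}$, the hypothesis forces $\fctVars{\symPattern_1}=\fctVars{\symPattern_2}$ and $\fctCondWebSafe{\symPattern_i}{\fctDom{\mu_\mathsf{in}}}=\fctVars{\symPattern_i}$ for $i=1,2$, so the induction hypothesis applies to each side with the same $\mu_\mathsf{in}$ and the two results are combined by multiset union. For $\OpAND$ (lines~10--12; line~10 is the special case of line~11 with $\fctCVars{\symPattern_1}=\emptyset$), assume line~11 applies: the algorithm first computes $\EvalCtx{\symPattern_1 \,|\, \mu_\mathsf{in}}{\symWoD}$ by the induction hypothesis (finitely many lookups, finitely many solutions $\mu_1$, each binding every variable in $\fctCVars{\symPattern_1}$), then for each such $\mu_1$ forms $\mu_\mathsf{in}\cup\mu_1$ (well-defined since $\mu_1\compatible\mu_\mathsf{in}$) and computes $\EvalCtx{\symPattern_2 \,|\, \mu_\mathsf{in}\cup\mu_1}{\symWoD}$; this is licensed because $\fctDom{\mu_\mathsf{in}\cup\mu_1}\supseteq\fctDom{\mu_\mathsf{in}}\cup\fctCVars{\symPattern_1}$ and monotonicity of $\fctsymWebSafe$ then yields $\fctCondWebSafe{\symPattern_2}{\fctDom{\mu_\mathsf{in}\cup\mu_1}}=\fctVars{\symPattern_2}$, so the induction hypothesis applies again (to a measure-smaller pattern). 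Taking the multiset union, over all such $\mu_1$, of the join of the singleton multiset for $\mu_1$ with $\EvalCtx{\symPattern_2 \,|\, \mu_\mathsf{in}\cup\mu_1}{\symWoD}$ --- with cardinalities bookkept as in the definition of $\Join$ --- yields $\EvalCtx{\symPattern_1 \OpAND \symPattern_2 \,|\, \mu_\mathsf{in}}{\symWoD}$, and only finitely many lookups occur since this is a finite sum of finite numbers. Line~12 is symmetric (evaluate $\symPattern_2$ first), and $\OpOPT$ (lines~15--17) is handled like $\OpAND$ together with the effectively computable anti-join $\setminus$.

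The step I expect to be the main obstacle is the Kleene-star case, for two intertwined reasons. First, well-foundedness: the flip of line~3 replaces $(\symPPE)^*$ by $(\invPPE{\symPPE})^*$, which is syntactically \emph{larger}, so the induction must instead recurse directly on the appropriately oriented body PP pattern $\tupleD{?x,\symPPE,?y}$ (whose PP-expression-operator count is strictly smaller), and one has to check that this reorientation is faithful both to the context-based semantics of Figure~\ref{fig:stc-sem} and to the value of $\fctsymWebSafe$ prescribed by Definition~\ref{def:cond-web-safeness}. Second, the $\mathtt{ALPW2}$ recursion discovers its set of visited terms dynamically while invoking $\EvalCtx{\tupleD{?x,\symPPE,?y}}{\symWoD}$ at each of them, and one must argue that it both terminates and issues only finitely many IRI lookups --- the key point being that every term handed to a recursive call occurs as the object of a triple that was actually retrieved and hence lies in the finite set of terms of $\symWoD$. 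A secondary difficulty is simply the size of the case analysis of Definition~\ref{def:cond-web-safeness}; it is the monotonicity lemma for $\fctsymWebSafe$ that keeps the $\OpAND$ and $\OpOPT$ cases manageable, and I would prove that lemma first.
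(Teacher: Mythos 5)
Your proposal takes essentially the same route as the paper's proof: the same recursive evaluation algorithm (look up the subject IRI in the base case, rewrite inverse/alternation/concatenation into smaller patterns, run the $\mathtt{ALPW}$-style traversal from the known endpoint for Kleene stars, and pass accumulated bindings into the second operand for $\OpAND$/$\OpOPT$), an induction supported by monotonicity of $\fctsymWebSafe$ (the paper's Fact~\ref{fact:MonotonicityOfWSV}), and the two star-case checks you flag---that at least one endpoint must be bound, and that reorienting the star with an inverted inner expression is faithful to the semantics and to $\fctsymWebSafe$---are exactly the paper's auxiliary Lemmas~\ref{lem:OneVariableIsBound} and~\ref{lem:EquivalenceOfSwappingStarAndInverse}. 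The differences are presentational (your explicit lexicographic measure makes precise the well-foundedness the paper leaves implicit), so the approach matches.
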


\noindent
Before providing the proof of the lemma (and of Theorem~\ref{thm:MainResult}),
we point out two important properties of Definition~\ref{def:ContextWithInput}.
First, it is easily seen that, for any graph pattern~$\symPattern$ and \Web\ $\symWoD$\!, $\EvalCtx{\symPattern \,|\, \muEmpty\,}{\symWoD} = \EvalCtx{\symPattern}{\symWoD}$, where $\muEmpty$ is the empty solution mapping (i.e., $\fctDom{\muEmpty} = \emptyset$). Consequently, given an algorithm, say $A$, that has the properties of the algorithm described by Lemma~\ref{lem:Proxy}, a trivial algorithm that can be used to prove Theorem~\ref{thm:MainResult} may simply call algorithm $A$ with the empty solution mapping and return the result of this call (%
	we shall elaborate more on this approach in the proof of
Theorem~\ref{thm:MainResult} below).
Second, for any PP pattern $\tupleD{\symUorVLeft, \symPPE, \symUorVRight}$ and \Web\ $\symWoD$\!, if $\symUorVLeft$ is a variable and $\symPPE$ is a
	base PP expression (i.e., one of the first two cases in the grammar in Section~\ref{subsec:preliminaries}),
then $\EvalCtx{\symPattern \,|\, \mu\,}{\symWoD}$ is empty for every solution mapping $\mu$ that binds (variable) $\symUorVLeft$ to a literal or a blank node. Formally, we show the latter as follows.

\begin{lemma} \label{lem:ContextWithInputBaseCases}
	Let $\symPattern$ be a PP pattern of the form $\tupleD{?v, \symURI, \symUorVRight}$ or $\tupleD{?v, !(\symURI_1\mid\dots\mid \symURI_n) , \symUorVRight}$ with $?v \in \symAllVariables$ and $\symURI, \symURI_1, \ldots, \symURI_n \in \symAllURIs$, and let $\mu$ be a solution mapping.
	If $?v \in \fctDom{\mu}$ and $\mu(?v) \in ( \symAllBNodes \cup \symAllLiterals )$, then, for any \Web\ $\symWoD$\!, $\EvalCtx{\symPattern \,|\, \mu\,}{\symWoD}$ is the empty multiset. 
\end{lemma}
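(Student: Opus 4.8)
The plan is to unfold the definition of context-based evaluation for the two admissible shapes of $\symPattern$ and to observe that every solution mapping it can produce is forced to bind the subject variable $?v$ to an IRI. Since $\mu$ binds $?v$ to a blank node or a literal, no solution of $\symPattern$ over $\symWoD$ can then be compatible with $\mu$, and hence the $\mu$-restricted evaluation collapses to the empty multiset.

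Concretely, I would first fix a finite \Web\ $\symWoD = \tupleD{\symDocs,\fctsymData,\fctsymADoc}$ and write $\tupleD{\Omega,\fctsymCard} = \EvalCtx{\symPattern}{\symWoD}$. By Definition~\ref{def:ContextWithInput}, the underlying set of $\EvalCtx{\symPattern \,|\, \mu\,}{\symWoD}$ is $\Omega' = \{\, \mu' \in \Omega \mid \mu' \compatible \mu \,\}$, so it suffices to show $\Omega' = \emptyset$, i.e., that no $\mu' \in \Omega$ is compatible with $\mu$. Taking an arbitrary $\mu' \in \Omega$, the relevant defining clause in Figure~\ref{fig:stc-sem} --- the third line if $\symPattern = \tupleD{?v,\symURI,\symUorVRight}$ and the sixth line if $\symPattern$ is the negated-property-set pattern --- yields that $?v \in \fctDom{\mu'}$ and that there is a triple $t$ whose subject is $\mu'(?v)$ and which lies in $\bigcup_{\symURI' \in \symAllURIs} \fctContext{\symWoD}{\symURI'}$ (in the negated case the predicate of $t$ additionally avoids $\symURI_1,\dots,\symURI_n$, but this is irrelevant below).

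The key step is the observation about the context selector: by its definition, for every $\symURI' \in \symAllURIs$ every triple in $\fctContext{\symWoD}{\symURI'}$ has $\symURI'$ itself as its subject (and when $\symURI' \notin \fctDom{\fctsymADoc}$ the set $\fctContext{\symWoD}{\symURI'}$ is empty, contributing nothing). Since the subject of the triple $t$ above is $\mu'(?v)$, this forces $\mu'(?v) \in \symAllURIs$. But $?v \in \fctDom{\mu} \cap \fctDom{\mu'}$, so compatibility $\mu' \compatible \mu$ would require $\mu'(?v) = \mu(?v)$; by hypothesis $\mu(?v) \in \symAllBNodes \cup \symAllLiterals$, and this set is disjoint from $\symAllURIs$ --- a contradiction. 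Hence $\Omega' = \emptyset$ and $\EvalCtx{\symPattern \,|\, \mu\,}{\symWoD}$ is the empty multiset. I do not anticipate any real obstacle here, as the argument is an immediate consequence of the definitions of the context selector and of context-based evaluation; the only minor care needed is to treat the plain-IRI and negated-property-set shapes uniformly, which is trivial since in both cases the matched triple $t$ has $\mu'(?v)$ in subject position and is drawn from one of the context sets $\fctContext{\symWoD}{\symURI'}$.
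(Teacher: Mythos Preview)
Your proposal is correct and follows essentially the same argument as the paper's own proof: both hinge on the observation that every triple in any context $\fctContext{\symWoD}{\symURI'}$ has the IRI $\symURI'$ as its subject, hence every $\mu' \in \EvalCtx{\symPattern}{\symWoD}$ must bind $?v$ to an IRI, making it incompatible with the given $\mu$. Your write-up is simply a more detailed unfolding of the same idea; one trivial remark is that you need not restrict to a \emph{finite} \Web, since the lemma is stated for arbitrary \Web s and the argument nowhere uses finiteness.
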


\begin{proof}[Lemma~\ref{lem:ContextWithInputBaseCases}]
	Recall that, for any IRI
		$\symURI$
	and any \Web~$\symWoD$\!, context $\fctContext{\symWoD}{\symURI}$ contains only triples that have IRI~$\symURI$ as their subject. As a consequence, for any \Web~$\symWoD$\!,
	every solution mapping $\mu' \in \EvalCtx{\symPattern}{\symWoD}$ binds variable~$?v$ to some IRI (and never to a literal or blank node); i.e., $\mu'(?v) \in \symAllURIs$. Therefore, if $?v \in \fctDom{\mu}$ and $\mu(?v) \in ( \symAllBNodes \cup \symAllLiterals )$,
		then $\mu$ cannot be compatible with any $\mu' \in \EvalCtx{\symPattern}{\symWoD}$
	and, thus, $\EvalCtx{\symPattern \,|\, \mu\,}{\symWoD}$ is empty.
	\qed
\end{proof}

\noindent
We use Lemma~\ref{lem:ContextWithInputBaseCases} to prove Lemma~\ref{lem:Proxy} as follows.

\begin{proofidea}{Lemma~\ref{lem:Proxy}}
	We prove the lemma by induction on the possible structure of graph pattern $\symPattern$.
%
%
%
	For the proof, we provide Algorithm~\ref{algo:Proxy} and show that this (recursive) algorithm has the desired properties for any possible graph pattern (i.e., any case of the induction, including the base case). Due to space limitations, in this paper we only present a fragment of the algorithm and highlight essential properties thereof. The given fragment covers the base case (lines \ref{line:Proxy:BaseCases:Begin}-\ref{line:Proxy:BaseCases:End}) and one pivotal case of the induction step, namely, graph patterns of the form $(\symPattern_1 \OpAND \symPattern_2)$ (lines \ref{line:Proxy:AND:Begin}-\ref{line:Proxy:AND:End}). The complete version of the algorithm and the full proof can be found in
		\PaperVersion{an extended version of this paper~\cite{ExtendedVersion}.}%
		\ExtendedVersion{the Appendix.} 

	For the base
		case,
	Algorithm~\ref{algo:Proxy} looks up at most one IRI (cf.~lines \ref{line:Proxy:BaseCases:SelectingLookupURI:1}-\ref{line:Proxy:BaseCases:Lookup}). The crux of showing that the returned result is sound and complete is Lemma~\ref{lem:ContextWithInputBaseCases} and the fact that the only possible \textit{context} in which a triple $\tupleD{s,p,o}$ with $s \in \symAllURIs$ can be found is~$\fctContext{\symWoD}{s}$.

	For PP patterns of the form $(\symPattern_1 \OpAND \symPattern_2)$ consider lines \ref{line:Proxy:AND:Begin}-\ref{line:Proxy:AND:End}. By using Definition~\ref{def:cond-web-safeness}, we show $\fctsymWebSafe\bigl( \symPattern_i \,|\, \fctDom{\mu_\mathsf{in}} \bigr) = \fctVars{\symPattern_i}$
		and $\fctsymWebSafe\bigl( \symPattern_j \,\big|\, \fctDom{\mu_\mathsf{in}} \cup \fctDom{\mu} \bigr) = \fctVars{\symPattern_j}$ for all $\mu \in \Omega^{\symPattern_i}$\!.
		Therefore,
	by induction, all recursive calls
		(lines \ref{line:Proxy:AND:Recursion1} and \ref{line:Proxy:AND:Recursion2})
	look up a finite number of IRIs and return correct
	results; i.e., $\tupleD{\Omega^{\symPattern_i},\fctsymCard^{\symPattern_i}} = \EvalCtx{\symPattern_i \,|\, \mu_\mathsf{in}\,}{\symWoD}$ and $\tupleD{\Omega^{\mu},\fctsymCard^{\mu}} = \EvalCtx{\symPattern_j \,|\, \mu_\mathsf{in} \cup \mu\,}{\symWoD}$ for all $\mu \in \Omega^{\symPattern_i}$\!.
		Then, since each $\mu \in \Omega^{\symPattern_i}$ is compatible with all $\mu' \in \Omega^{\mu}$ and all processed solution mappings are compatible with $\mu_\mathsf{in}$,
	it is easily verified that the computed result is $\EvalCtx{(\symPattern_1 \OpAND \symPattern_2) \,|\, \mu_\mathsf{in}\,}{\symWoD}$.
\end{proofidea}

\begin{algorithm}[t]
{\footnotesize
\begin{algorithmic}[1]
	\IF {$\symPattern$ is of the form $\tupleD{\symUorVLeft, \symURI, \symUorVRight}$ \OR $\symPattern$ is of the form $\tupleD{\symUorVLeft, !(\symURI_1\mid\dots\mid \symURI_n) , \symUorVRight}$} \label{line:Proxy:BaseCases:Begin}
		\STATE \textbf{if} $\symUorVLeft \in \symAllURIs$ \textbf{then} $\symURI'$ := $\symUorVLeft$ \label{line:Proxy:BaseCases:SelectingLookupURI:1}
		\STATE \textbf{else if} $\symUorVLeft \in \symAllVariables$ \AND $\symUorVLeft \in \fctDom{\mu_\mathsf{in}}$ \AND $\mu_\mathsf{in}(\symUorVLeft) \in \symAllURIs$ \textbf{then} $\symURI'$ := $\mu_\mathsf{in}(\symUorVLeft)$ \label{line:Proxy:BaseCases:SelectingLookupURI:2}
		\STATE \textbf{else} $\symURI'$ := \texttt{null} \label{line:Proxy:BaseCases:SelectingLookupURI:3}
		\vspace{2mm}

		\IF {$\symURI'$ is an IRI and looking it up results in retrieving a document, say $d$} \label{line:Proxy:BaseCases:Lookup}
			\STATE $\symRDFgraph$ := the set of triples in $d$ (use a fresh set of blank node identifiers when parsing $d$) \label{line:Proxy:BaseCases:RetrieveData}
			\STATE $\symRDFgraph'$\! := $\big\lbrace \tupleD{s,p,o} \in \symRDFgraph \,\big|\, s = \symURI' \big\rbrace$ \label{line:Proxy:BaseCases:GenerateContext}
			\STATE $\tupleD{\Omega,\fctsymCard}$ := $\fctOrigQueryPG{\symPattern}{\symRDFgraph'}$ \hspace{1mm} ($\fctOrigQueryPG{\symPattern}{\symRDFgraph'}$ can be computed by using any algorithm that\par \hspace{28.5mm} implements the standard SPARQL evaluation function) \label{line:Proxy:BaseCases:GenerateQueryResult}
			\RETURN a new multiset $\tupleD{\Omega',\fctsymCard'}$ with $\Omega' = \big\lbrace \mu' \in \Omega \,\big|\, \mu' \compatible \mu_\mathsf{in} \big\rbrace$ and \par \hspace{50mm} $\fctsymCard'(\mu') = \fctCard{\mu'}$ for all $\mu'\! \in \Omega'$ \label{line:Proxy:BaseCases:ReturnComputedResult}
		\ELSE
			\RETURN a new empty multiset $\tupleD{\Omega,\fctsymCard}$ with $\Omega = \emptyset$ and $\fctDom{\fctsymCard} = \emptyset$ \label{line:Proxy:BaseCases:ReturnEmptyResult}
		\ENDIF \vspace{2ex} \label{line:Proxy:BaseCases:End}
\par \vspace{-2ex} \hspace{-3.7mm} $\ldots$ \vspace{1ex}



\setcounter{ALC@line}{56}
	\ELSIF {$\symPattern$ is of the form $(\symPattern_1 \OpAND \symPattern_2)$} \label{line:Proxy:AND:Begin}
		\STATE \textbf{if} $\fctsymWebSafe\bigl( \symPattern_1 \,|\, \fctDom{\mu_\mathsf{in}} \bigr) = \fctVars{\symPattern_1}$ \textbf{then} $i$ := 1; $j$ := 2 \textbf{else} $i$ := 2; $j$ := 1
		\STATE Create a new empty multiset $M = \tupleD{\Omega,\fctsymCard}$ with $\Omega = \emptyset$ and $\fctDom{\fctsymCard} = \emptyset$
		\STATE $\tupleD{\Omega^{\symPattern_i},\fctsymCard^{\symPattern_i}}$ := \textit{EvalCtxBased}$( \symPattern_i, \mu_\mathsf{in})$ \label{line:Proxy:AND:Recursion1}
		\FORALL {$\mu \in \Omega^{\symPattern_i}$}
			\STATE $\tupleD{\Omega^{\mu},\fctsymCard^{\mu}}$ := \textit{EvalCtxBased}$( \symPattern_j, \mu_\mathsf{in} \cup \mu )$ \label{line:Proxy:AND:Recursion2}
			\FORALL {$\mu' \in \Omega^{\mu}$}
				\STATE $\mu^*$ := $\mu \cup \mu'$
				\STATE $k$ := $\fctsymCard^{\symPattern_i}\!(\mu) \cdot \fctsymCard^\mu\!(\mu')$
				\IF {$\mu^*\! \in \Omega$}
					\STATE $\mathit{old}$ := $\fctCard{\mu^*}$
					\STATE Adjust $\fctsymCard$ such that $\fctCard{\mu^*} = k + \mathit{old}$
				\ELSE
					\STATE Adjust $\fctsymCard$ such that $\fctCard{\mu^*} = k$
					\STATE Add $\mu^*$ to $\Omega$
				\ENDIF
			\ENDFOR
		\ENDFOR
		\RETURN $M$
	\ENDIF \label{line:Proxy:AND:End}

\end{algorithmic}
}
	\caption{ ~ \textit{EvalCtxBased}$(\symPattern, \mu_\mathsf{in} )$, which computes $\EvalCtx{\symPattern \,|\, \mu_\mathsf{in}}{\symWoD}$.} 
	\label{algo:Proxy}
\end{algorithm}

\noindent
We are now ready to prove Theorem~\ref{thm:MainResult}, for which we use Lemma~\ref{lem:Proxy}, or more precisely the algorithm that we introduce in the proof of the lemma.

\begin{proof}[Theorem~\ref{thm:MainResult}]
		Let $\symPattern$ be
	a graph pattern
		s.t.~%
	$\fctCondWebSafe{\symPattern}{\emptyset} = \fctVars{\symPattern}$. Then,
		given
	the empty solution mapping $\muEmpty$ with $\fctDom{\muEmpty} = \emptyset$, we have $\fctsymWebSafe\bigl( \symPattern \,\big|\, \fctDom{\muEmpty} \bigr) = \fctVars{\symPattern}$. Therefore,
		by our proof of Lemma~\ref{lem:Proxy} we know that, for any finite \Web\ $\symWoD$\!, Algorithm~\ref{algo:Proxy} computes $\EvalCtx{\symPattern \,|\, \muEmpty\,}{\symWoD}$ by looking up a finite number of IRIs.
	We also know that
		the empty solution mapping 
	is compatible with any solution mapping. Consequently, by Definition~\ref{def:ContextWithInput}, $\EvalCtx{\symPattern \,|\, \muEmpty\,}{\symWoD} \!=\! \EvalCtx{\symPattern}{\symWoD}$ for any \Web\ $\symWoD$\!. Hence, by passing
		the empty solution mapping
	to it, Algorithm~\ref{algo:Proxy} can be used to compute $\EvalCtx{\symPattern}{\symWoD}$ for any finite \Web~$\symWoD$\!, and during this computation the algorithm looks up a finite number of IRIs only.
	\qed
\end{proof}

\noindent
While the condition
in Theorem~\ref{thm:MainResult} is sufficient to
	identify Web-safe graph patterns,
the question that remains is whether it is a necessary condition (in which case it could be used to decide Web-safeness of \emph{all} graph patterns).  
	Unfortunately, the answer is no.
\begin{exmp}\hspace{-1ex}\textbf{.}
	Consider the graph pattern $\symPattern = (\symPattern_1 \OpUNION \symPattern_2)$ with $\symPattern_1 = \tupleD{\symURI_1,p_1,?x}$ and $\symPattern_2 = \tupleD{\symURI_2,p_2,?y}$. We note that $\fctCondWebSafe{\symPattern_1}{\emptyset} = \lbrace ?x \rbrace$ and $\fctCondWebSafe{\symPattern_2}{\emptyset} = \lbrace ?y \rbrace$, and, thus, $\fctCondWebSafe{\symPattern}{\emptyset} = \emptyset$. Hence, the pattern does not satisfy the condition in Theorem~\ref{thm:MainResult}. Nonetheless, it is easy to see that there exists a (sound and complete) algorithm that, for any \Web\ $\symWoD$\!, computes $\EvalCtx{\symPattern}{\symWoD}$ by looking up a finite number of IRIs only. For instance, such an algorithm, say $A$, may first use two other algorithms that compute $\EvalCtx{\symPattern_1}{\symWoD}$ and $\EvalCtx{\symPattern_2}{\symWoD}$ by looking up a finite number of IRIs, respectively. Such algorithms exist by Theorem~\ref{thm:MainResult}, because $\fctCondWebSafe{\symPattern_1}{\emptyset} = \fctVars{\symPattern_1}$ and $\fctCondWebSafe{\symPattern_2}{\emptyset} = \fctVars{\symPattern_2}$. Finally, algorithm $A$ can generate the (sound and complete) query result $\EvalCtx{\symPattern}{\symWoD}$ by computing the multiset union $\EvalCtx{\symPattern_1}{\symWoD} \multicup \EvalCtx{\symPattern_2}{\symWoD}$, which requires no additional IRI~lookups.
\end{exmp}
\begin{rmrk}\hspace{-1ex}\textbf{.}
	The example illustrates that ``only if'' cannot be shown in Theorem~\ref{thm:MainResult}. It remains an open question whether there exists an alternative condition
		for Web-safeness that is both sufficient and necessary (and decidable). 
\end{rmrk}

\section{Concluding Remarks and Future Work}
\label{sec:conclusion}
%
	This paper studies
the problem of extending the scope of SPARQL property paths to query Linked Data that is distributed on the WWW. We have
	proposed a con\-text-based query semantics and analyzed its peculiarities. Our perhaps most interesting finding is that
there exist queries whose evaluation over the WWW is not feasible. We studied this aspect and introduced a decidable syntactic property for identifying feasible queries.

We believe that the presented work provides valuable input to a wider discussion about defining a language for
	accessing
Linked Data on the WWW. In this context, there are several directions for future research such as the following three. First,
studying a more expressive navigational core for property paths over the Web; e.g., along the lines of other navigational languages such as nSPARQL~\cite{perez2010} or NautiLOD~\cite{fionda2012}. Second, investigating relationships between navigational queries and SPARQL federation. Third, while the aim of this paper was to introduce a formal foundation for answering SPARQL queries with PPs over Linked Data on the WWW, an investigation of how systems may implement efficiently the machinery developed in this paper is certainly~interesting.
\bibliographystyle{splncs03}
\bibliography{biblio}

\ExtendedVersion{%
\newpage
\appendix
%

\section{Proof of Lemma~\ref{lem:Proxy}} \label{Appendix:Proofs}

Suppose $\symPattern$ is a graph pattern and $\mu_\mathsf{in}$ is a solution mapping such that $$\fctsymWebSafe\bigl( \symPattern \,\big|\, \fctDom{\mu_\mathsf{in}} \bigr) = \fctVars{\symPattern}.$$ We have to show that there exists a (sound and complete) algorithm that, for any finite \Web\ $\symWoD$\!, computes $\EvalCtx{\symPattern \,|\, \mu_\mathsf{in}\,}{\symWoD}$ by looking up a finite number of IRIs only. For the proof we provide Algorithm~\ref{algo:Proxy} and show by induction on the possible structure of graph pattern $\symPattern$ that this (recursive) algorithm has the desired properties.

For the proof we use the following fact, which is easily verified by Definition~\ref{def:cond-web-safeness}.

\begin{fact}\hspace{-1ex}\textbf{.} \label{fact:MonotonicityOfWSV}
	Let $\symPattern$ be a graph pattern, and let $\symWebSafeVar \subseteq \symAllVariables$ and $\symWebSafeVar' \subseteq \symAllVariables$ be two (nonempty) sets of variables.
	Then, $\fctCondWebSafe{\symPattern}{\symWebSafeVar} \subseteq \fctCondWebSafe{\symPattern}{\symWebSafeVar \cup \symWebSafeVar'}$.
\end{fact}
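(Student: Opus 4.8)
I would prove Fact~\ref{fact:MonotonicityOfWSV} --- in fact the slightly more convenient statement that $\fctCondWebSafe{\symPattern}{\symWebSafeVar_1} \subseteq \fctCondWebSafe{\symPattern}{\symWebSafeVar_2}$ holds for \emph{every} pair of variable sets $\symWebSafeVar_1 \subseteq \symWebSafeVar_2$ (the Fact being the case $\symWebSafeVar_1 = \symWebSafeVar$, $\symWebSafeVar_2 = \symWebSafeVar \cup \symWebSafeVar'$) --- by induction following the recursive structure of Definition~\ref{def:cond-web-safeness}. The well-founded order to induct on is the one that makes that recursion terminate (its well-foundedness is exactly what the Note after Theorem~\ref{thm:MainResult} relies on), so that the induction hypothesis is available, for every pair of condition sets, for each pattern occurring on the right-hand side of the applicable line of the definition. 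In each of the $17$ cases I then have to check that enlarging the condition set cannot shrink the computed variable set. Throughout I will use the fact, built into Definition~\ref{def:cond-web-safeness} itself, that $\fctCondWebSafe{\symPattern}{\symWebSafeVar} \subseteq \fctVars{\symPattern}$ always holds, and that $\fctCVars{\cdot}$ does not depend on any condition set.

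First I would dispose of the easy cases. For the base cases (lines~1--2) it suffices to note that ``$\symUorVLeft \in (\symAllURIs \cup \symAllLiterals)$ or $\symUorVLeft \in \symWebSafeVar_1$'' implies ``$\symUorVLeft \in (\symAllURIs \cup \symAllLiterals)$ or $\symUorVLeft \in \symWebSafeVar_2$'', so that if line~1 fires for $\symWebSafeVar_1$ it also fires for $\symWebSafeVar_2$ and both sides equal $\fctVars{\symPattern}$; if line~2 fires for $\symWebSafeVar_1$ the left-hand side is $\emptyset$ and the inclusion is trivial. For the cases whose applicable line is selected by purely syntactic conditions and whose right-hand side is a single recursive call --- the inverse case (line~6), the alternation case (line~7), and the star cases (lines~3--5, where the two side conditions of line~4 mention only $\symUorVLeft$, $\symUorVRight$, $\symPPE$ and the fixed condition $\lbrace ?x \rbrace$, hence not the current condition set) --- the same line fires for $\symWebSafeVar_1$ and $\symWebSafeVar_2$, and the inclusion is immediate from the induction hypothesis (with $\emptyset$ on the left for line~5). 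For the union case (line~14) it follows from the induction hypothesis on $\symPattern_1$ and $\symPattern_2$ together with monotonicity of intersection. Note that the occurrences of $\fctCVars{\symPattern_1}$ or $\fctCVars{\symPattern_2}$ inside the enlarged condition sets of lines~11, 12 and 16 introduce no circularity, precisely because $\fctCVars{\cdot}$ is condition-set-independent.

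The part requiring real care --- and the step I expect to be the main obstacle --- is the family of lines whose \emph{selection} depends on the condition set: the composition case (lines~8--9) and the conjunction and optional cases (lines~10--13 and 15--17). The template I would use is: if a productive line (one not returning $\emptyset$) fires for $\symWebSafeVar_1$, show that a productive line fires for $\symWebSafeVar_2$ and that the two outputs coincide; if only a ``none of the above'' line fires for $\symWebSafeVar_1$, the left-hand side is $\emptyset$ and nothing is to be done. For the conjunction, if, say, line~11 fires for $\symWebSafeVar_1$, then $\fctCondWebSafe{\symPattern_1}{\symWebSafeVar_1} = \fctVars{\symPattern_1}$ is promoted by the induction hypothesis to $\fctCondWebSafe{\symPattern_1}{\symWebSafeVar_2} = \fctVars{\symPattern_1}$ (sandwiching $\fctCondWebSafe{\symPattern_1}{\symWebSafeVar_2}$ between $\fctCondWebSafe{\symPattern_1}{\symWebSafeVar_1}$ and $\fctVars{\symPattern_1}$), and since $\symWebSafeVar_1 \cup \fctCVars{\symPattern_1} \subseteq \symWebSafeVar_2 \cup \fctCVars{\symPattern_1}$ the same move promotes $\fctCondWebSafe{\symPattern_2}{\symWebSafeVar_1 \cup \fctCVars{\symPattern_1}} = \fctVars{\symPattern_2}$ to $\fctCondWebSafe{\symPattern_2}{\symWebSafeVar_2 \cup \fctCVars{\symPattern_1}} = \fctVars{\symPattern_2}$; hence one of lines~10--12 fires for $\symWebSafeVar_2$ and again returns $\fctVars{\symPattern}$, and the two sides are in fact equal. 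Lines~10, 12 and the optional cases 15--16 are handled identically. The genuinely fiddly case is line~8, whose side condition ``$?v \in \fctCondWebSafe{\symPattern'}{\symWebSafeVar}$ for a fresh $?v$'' refers to $\symWebSafeVar$, and a variable fresh for $\symWebSafeVar_1$ need not be fresh for $\symWebSafeVar_2$. I would resolve this by first observing that, up to renaming of the fresh variable, $\fctCondWebSafe{\symPattern'}{\cdot}$ (modulo that variable) is insensitive to which fresh $?v$ is picked, so one may fix a single $?v$ that is fresh for $\symWebSafeVar_2$ (hence for $\symWebSafeVar_1$) and distinct from $\symUorVLeft,\symUorVRight$; then applying the induction hypothesis to $\symPattern' = \bigl( \tupleD{\symUorVLeft,\symPPE_1,?v} \OpAND \tupleD{?v,\symPPE_2,\symUorVRight} \bigr)$ gives $\fctCondWebSafe{\symPattern'}{\symWebSafeVar_1} \subseteq \fctCondWebSafe{\symPattern'}{\symWebSafeVar_2}$, which simultaneously carries the membership $?v \in \fctCondWebSafe{\symPattern'}{\symWebSafeVar_1}$ over to $\symWebSafeVar_2$ (so line~8 fires there as well) and yields $\fctCondWebSafe{\symPattern}{\symWebSafeVar_1} = \fctCondWebSafe{\symPattern'}{\symWebSafeVar_1} \setminus \lbrace ?v \rbrace \subseteq \fctCondWebSafe{\symPattern'}{\symWebSafeVar_2} \setminus \lbrace ?v \rbrace = \fctCondWebSafe{\symPattern}{\symWebSafeVar_2}$. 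Once this case is settled, the induction closes.
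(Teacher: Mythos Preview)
Your proposal is correct and is precisely the inductive verification the paper has in mind: the paper itself offers no argument beyond the remark that the fact ``is easily verified by Definition~\ref{def:cond-web-safeness},'' and your case analysis is the natural unpacking of that remark. Your treatment of the condition-set-dependent selections (lines~8--13, 15--17), and in particular the renaming observation that lets you pick a single fresh $?v$ for both $\symWebSafeVar_1$ and $\symWebSafeVar_2$ in line~8, correctly handles the only points where care is genuinely required.
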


\subsection{Base Case}
Suppose $\symPattern$ is either a PP pattern $\tupleD{\symUorVLeft,\symURI,\symUorVRight}$
	or a PP pattern $\tupleD{\symUorVLeft,!( \symURI_1 \,|\, ... \,|\, \symURI_n),\symUorVRight}$ (with $\symURI,\symURI_1, ... , \symURI_n \in \symAllURIs$).
The corresponding fragment of Algorithm~\ref{algo:Proxy} for this case is given as follows.

\vspace{1ex}
\begin{small}
\begin{algorithmic}[1]
	\IF {$\symPattern$ is of the form $\tupleD{\symUorVLeft, \symURI, \symUorVRight}$ \OR $\symPattern$ is of the form $\tupleD{\symUorVLeft, !(\symURI_1\mid\dots\mid \symURI_n) , \symUorVRight}$}
		\STATE \textbf{if} $\symUorVLeft \in \symAllURIs$ \textbf{then} $\symURI'$ := $\symUorVLeft$
		\STATE \textbf{else if} $\symUorVLeft \in \symAllVariables$ \AND $\symUorVLeft \in \fctDom{\mu_\mathsf{in}}$ \AND $\mu_\mathsf{in}(\symUorVLeft) \in \symAllURIs$ \textbf{then} $\symURI'$ := $\mu_\mathsf{in}(\symUorVLeft)$
		\STATE \textbf{else} $\symURI'$ := \texttt{null}
		\vspace{2mm}

		\IF {$\symURI'$ is an IRI and looking it up results in retrieving a document, say $d$}
			\STATE $\symRDFgraph$ := the set of triples in $d$ (use a fresh set of blank node identifiers when parsing $d$)
			\STATE $\symRDFgraph'$\! := $\big\lbrace \tupleD{s,p,o} \in \symRDFgraph \,\big|\, s = \symURI' \big\rbrace$
			\STATE $\tupleD{\Omega,\fctsymCard}$ := $\fctOrigQueryPG{\symPattern}{\symRDFgraph'}$ \hspace{1mm} ($\fctOrigQueryPG{\symPattern}{\symRDFgraph'}$ can be computed by using any algorithm that\par \hspace{28.5mm} implements the standard SPARQL evaluation function)
			\RETURN a new multiset $\tupleD{\Omega',\fctsymCard'}$ with $\Omega' = \big\lbrace \mu' \in \Omega \,\big|\, \mu' \compatible \mu_\mathsf{in} \big\rbrace$ and \par \hspace{50mm} $\fctsymCard'(\mu') = \fctCard{\mu'}$ for all $\mu'\! \in \Omega'$
		\ELSE
			\RETURN a new empty multiset $\tupleD{\Omega,\fctsymCard}$ with $\Omega = \emptyset$ and $\fctDom{\fctsymCard} = \emptyset$		
		\ENDIF
	\ENDIF
\end{algorithmic}
\end{small}
\vspace{1ex}

\noindent
We distinguish three cases (which correspond to the three cases in lines \ref{line:Proxy:BaseCases:SelectingLookupURI:1}-\ref{line:Proxy:BaseCases:SelectingLookupURI:3}):

\begin{enumerate}
	\itemsep2mm
	\item \label{item:proof:Base:1}
		If $\symUorVLeft$ is an IRI (i.e., $\symUorVLeft \in \symAllURIs$), Algorithm~\ref{algo:Proxy} looks up this IRI, which either may result in retrieving a document or not. In the following, we consider both cases:

		\smallskip
		\begin{enumerate}
			\itemsep1mm
			\item \label{item:proof:Base:1:1}
				If the lookup results in retrieving a document $d$, Algorithm~\ref{algo:Proxy} executes lines \ref{line:Proxy:BaseCases:RetrieveData} to~\ref{line:Proxy:BaseCases:ReturnComputedResult}, and we know that $d \in \symDocs$ and $\fctADoc{\symUorVLeft} = d$ hold for the queried \Web\ $\symWoD = \tupleD{\symDocs,\fctsymData,\fctsymADoc}$. In this case the algorithm selects specific triples from document $d$ to obtain an RDF graph $\symRDFgraph'$ (cf.~line~\ref{line:Proxy:BaseCases:GenerateContext}). Since this selection resembles the application of the context selector $\fctsymContext{\symWoD}$ (cf.~Section~\ref{sec:w3c-ctx-sem}), it holds that $\symRDFgraph' = \fctContext{\symWoD}{\symUorVLeft}$. Then, it is easily seen that, by using a standard evaluation algorithm for the computation in line~\ref{line:Proxy:BaseCases:GenerateQueryResult}, multiset $\tupleD{\Omega,\fctsymCard}$ is equivalent to query result $\EvalCtx{\symPattern}{\symWoD}$ (cf.~Figure~\ref{fig:stc-sem}) and $\tupleD{\Omega',\fctsymCard'}$ is equivalent to $\EvalCtx{\symPattern \,|\, \mu_\mathsf{in}\,}{\symWoD}$~(cf. Definition~\ref{def:ContextWithInput}).

			\item \label{item:proof:Base:1:2}
				If the lookup of IRI $\symUorVLeft$ does not result in retrieving a document, Algorithm~\ref{algo:Proxy} executes line~\ref{line:Proxy:BaseCases:ReturnEmptyResult}, and we know that $\symUorVLeft \notin \fctDom{\fctsymADoc}$ holds for the queried \Web\ $\symWoD = \tupleD{\symDocs,\fctsymData,\fctsymADoc}$. As a consequence, $\fctContext{\symWoD}{\symUorVLeft} = \emptyset$ (cf.~Section~\ref{sec:w3c-ctx-sem}). Then, by Definition~\ref{def:Context}, $\EvalCtx{\symPattern}{\symWoD}$ is the empty multiset of solution mappings, and so is $\EvalCtx{\symPattern \,|\, \mu_\mathsf{in}\,}{\symWoD}$ (cf.~Definition~\ref{def:ContextWithInput}). Hence, the empty multiset of solution mappings returned by Algorithm~\ref{algo:Proxy} (line~\ref{line:Proxy:BaseCases:ReturnEmptyResult}) is the correct result in this case.
		\end{enumerate}

	\item \label{item:proof:Base:2}
		If $\symUorVLeft$ is a variable and solution mapping $\mu_\mathsf{in}$ binds this variable to an IRI (i.e., $\symUorVLeft \in \symAllVariables$ and $\mu_\mathsf{in}(\symUorVLeft) \in \symAllURIs$), then Algorithm~\ref{algo:Proxy} looks up this IRI, which either may result in retrieving a document or not. In the following, we consider both cases:

		\smallskip
		\begin{enumerate}
			\itemsep1mm
			\item \label{item:proof:Base:2:1}
				If the lookup results in retrieving a document $d$, Algorithm~\ref{algo:Proxy} executes lines \ref{line:Proxy:BaseCases:RetrieveData} to~\ref{line:Proxy:BaseCases:ReturnComputedResult}, and we know that $d \in \symDocs$ and $\fctADoc{\mu_\mathsf{in}(\symUorVLeft)} = d$ hold for the queried \Web\ $\symWoD = \tupleD{\symDocs,\fctsymData,\fctsymADoc}$. Similar to case~\ref{item:proof:Base:1:1} before, we can show for the RDF graph $\symRDFgraph'$ constructed in line~\ref{line:Proxy:BaseCases:GenerateContext}, that $\symRDFgraph' = \fctContext{\symWoD}{\mu_\mathsf{in}(\symUorVLeft)}$ holds. Since $\symUorVLeft$ is a variable, by Definition~\ref{def:Context}, we would have to search for triples that match triple pattern $\symTP = \mu_\mathsf{in}[\tupleD{\symUorVLeft, p, \symUorVRight}]$ (with $p = \symURI$; resp.~$p \in \symAllURIs \setminus \lbrace \symURI_1, ...\,, \symURI_n \rbrace)$ in the context $\fctContext{\symWoD}{\symURI^*}$ of \emph{all} IRIs $\symURI^*\! \in \symAllURIs$. However, since $\mu_\mathsf{in}(\symUorVLeft)$ is an IRI, the only context that can contain such matching triples is $\symRDFgraph' = \fctContext{\symWoD}{\mu_\mathsf{in}(\symUorVLeft)}$~(cf. Section~\ref{sec:w3c-ctx-sem}). As a consequence, $\EvalCtx{\symPattern}{\symWoD} = \fctOrigQueryPG{\symPattern}{\symRDFgraph'}$ and, thus, the multiset of solution mappings $\tupleD{\Omega',\fctsymCard'}$ returned in line~\ref{line:Proxy:BaseCases:ReturnComputedResult} is equivalent to $\EvalCtx{\symPattern \,|\, \mu_\mathsf{in}\,}{\symWoD}$~(cf. Definition~\ref{def:ContextWithInput}).

			\item \label{item:proof:Base:2:2}
				If the lookup of IRI $\symUorVLeft$ does not result in retrieving a document, Algorithm~\ref{algo:Proxy} executes line~\ref{line:Proxy:BaseCases:ReturnEmptyResult}, and we know that $\mu_\mathsf{in}(\symUorVLeft) \notin \fctDom{\fctsymADoc}$ holds for the queried \Web\ $\symWoD = \tupleD{\symDocs,\fctsymData,\fctsymADoc}$. As in case~\ref{item:proof:Base:2:1}, the only context that can contain matching triples for triple pattern $\mu_\mathsf{in}[\tupleD{\symUorVLeft, p, \symUorVRight}]$ is $\fctContext{\symWoD}{\mu_\mathsf{in}(\symUorVLeft)}$. However, $\fctContext{\symWoD}{\mu_\mathsf{in}(\symUorVLeft)} = \emptyset$ because $\mu_\mathsf{in}(\symUorVLeft) \notin \fctDom{\fctsymADoc}$. Thus, $\EvalCtx{\symPattern}{\symWoD}$ is the empty multiset of solution mappings (cf.~Definition~\ref{def:Context}), and so is $\EvalCtx{\symPattern \,|\, \mu_\mathsf{in}\,}{\symWoD}$~(cf. Definition~\ref{def:ContextWithInput}). Hence, the empty multiset of solution mappings returned by Algorithm~\ref{algo:Proxy} (line~\ref{line:Proxy:BaseCases:ReturnEmptyResult}) is the correct result in this case.
		\end{enumerate}

	\item \label{item:proof:Base:3}
		If none of the other two cases holds, then either \enumA~$\symUorVLeft$ is a variable and solution mapping $\mu_\mathsf{in}$ binds this variable to a blank node or a to literal (i.e., $\symUorVLeft \in \symAllVariables$ and $\mu_\mathsf{in}(\symUorVLeft) \in \symAllBNodes \cup \symAllLiterals$) or \enumB~$\symUorVLeft$ is a literal. Note that, due to $\fctsymWebSafe\bigl( \symPattern \,\big|\, \fctDom{\mu_\mathsf{in}} \bigr) = \fctVars{\symPattern}$, by Definition~\ref{def:cond-web-safeness}, we can rule out a third possibility of $\symUorVLeft$ being a variable that is not bound at all by solution mapping $\mu_\mathsf{in}$ (i.e., $\symUorVLeft \in \symAllVariables$ and $\symUorVLeft \notin \fctDom{\mu_\mathsf{in}}$). Algorithm~\ref{algo:Proxy} executes line~\ref{line:Proxy:BaseCases:ReturnEmptyResult} and returns the empty multiset of solution mappings. In the following, we show that this is the correct result for each of the two (possible)~sub-cases:

		\smallskip
		\begin{enumerate}
			\itemsep1mm
			\item \label{item:proof:Base:3:1}
				If $\symUorVLeft \in \symAllVariables$ and $\mu_\mathsf{in}(\symUorVLeft) \in \symAllBNodes \cup \symAllLiterals$, then, by Lemma~\ref{lem:ContextWithInputBaseCases}, query result $\EvalCtx{\symPattern \,|\, \mu_\mathsf{in}\,}{\symWoD}$ is the empty multiset.

			\item \label{item:proof:Base:3:2}
				If $\symUorVLeft \in \symAllLiterals$, then, by Definition~\ref{def:Context}, query result $\EvalCtx{\symPattern}{\symWoD}$ is the empty multiset of solution mappings, and so is $\EvalCtx{\symPattern \,|\, \mu_\mathsf{in}\,}{\symWoD}$.
		\end{enumerate}
\end{enumerate}

\noindent
Our discussion shows that, for each of the three cases, Algorithm~\ref{algo:Proxy} looks up a finite number of IRIs (that is, one in the first and in the second case, respectively, and none in the third case) and returns the correct result.

\subsection{Induction Step}
We now discuss the induction step, for which we distinguish ten cases.

\subsubsection{Case 1:}
Suppose $\symPattern$ is a PP pattern $\tupleD{\symUorVLeft,\symPPEi,\symUorVRight}$.

The fragment of Algorithm~\ref{algo:Proxy} that covers this case is given as follows.

\vspace{1ex}
\begin{small}
\begin{algorithmic}[1]
\setcounter{ALC@line}{11}
	\IF {$\symPattern$ is of the form $\tupleD{\symUorVLeft,\symPPEi,\symUorVRight}$}
		\STATE Create a PP pattern $\symPP' = \tupleD{\symUorVRight,\symPPE,\symUorVLeft}$ \label{line:Proxy:Inverse:ConstructReversePattern}
		\RETURN \textit{EvalCtxBased}$\bigl( \symPattern'\!, \mu_\mathsf{in} \bigr)$
	\ENDIF
\end{algorithmic}
\end{small}
\vspace{1ex}

\noindent
Let $\symPP' = \tupleD{\symUorVRight,\symPPE,\symUorVLeft}$ be the PP pattern created in line~\ref{line:Proxy:Inverse:ConstructReversePattern}. To show that, for any finite \Web\ $\symWoD$\!, Algorithm~\ref{algo:Proxy} computes $\EvalCtx{\symPattern \,|\, \mu_\mathsf{in}\,}{\symWoD}$ by looking up a finite number of IRIs only, it suffices to prove the following two claims:
\begin{itemize}
	\itemsep2mm
	\item[] \textit{Claim 1}: $\EvalCtx{\symPattern \,|\, \mu_\mathsf{in}\,}{\symWoD} = \EvalCtx{ \symPP' \,|\, \mu_\mathsf{in}\,}{\symWoD}$ for any \Web\ $\symWoD$\!.
	\item[] \textit{Claim 2}: $\fctsymWebSafe\bigl( \symPP' \,\big|\, \fctDom{\mu_\mathsf{in}} \bigr) = \fctVars{\symPP'}$.
\end{itemize}

\noindent
Then, by induction it follows that Algorithm~\ref{algo:Proxy} has the desired properties for pattern $\symPattern$.

To verify the first claim we recall that $\EvalCtx{\symPattern}{\symWoD} \!=\! \EvalCtx{\symPP'}{\symWoD}$ holds for any \Web~$\symWoD$~(cf. Definition~\ref{def:Context}). By using this equivalence and Definition~\ref{def:ContextWithInput}, we obtain Claim~1.

To prove Claim~2 we use the fact that
\begin{align*}
	\fctsymWebSafe\bigl( \symPattern \,\big|\, \fctDom{\mu_\mathsf{in}} \bigr) &= \fctVars{\symPattern} .
\intertext{Since, $\fctsymWebSafe\bigl( \symPattern \,\big|\, \fctDom{\mu_\mathsf{in}} \bigr) = \fctsymWebSafe\bigl( \symPP' \,\big|\, \fctDom{\mu_\mathsf{in}} \bigr)$ (cf.~Definition~\ref{def:cond-web-safeness}), we thus have}
	\fctsymWebSafe\bigl( \symPP' \,\big|\, \fctDom{\mu_\mathsf{in}} \bigr) &= \fctVars{\symPattern} ,
\intertext{Then, by using $\fctVars{\symPattern} = \fctVars{\symPP'}$, we obtain}
	\fctsymWebSafe\bigl( \symPP' \,\big|\, \fctDom{\mu_\mathsf{in}} \bigr) &= \fctVars{\symPP'} .
\end{align*}

\subsubsection{Case 2:}
Suppose $\symPattern$ is a PP pattern $\tupleD{\symUorVLeft,\symPPE_1/\symPPE_2,\symUorVRight}$.

The fragment of Algorithm~\ref{algo:Proxy} that covers this case is given as follows.

\vspace{1ex}
\begin{small}
\begin{algorithmic}[1]
\setcounter{ALC@line}{14}
	\IF {$\symPattern$ is of the form $\tupleD{\symUorVLeft,\symPPE_1/\symPPE_2,\symUorVRight}$}
		\STATE Create a graph pattern $\symPattern' = \bigl( \tupleD{\symUorVLeft,\symPPE_1,?v} \OpAND \tupleD{?v,\symPPE_2,\symUorVRight} \bigr)$ \par \hspace{28mm} such that $?v \in \symAllVariables \setminus \bigl( \fctDom{\mu_\mathsf{in}} \cup \lbrace \symUorVLeft,\symUorVRight \rbrace \bigr)$ \label{line:Proxy:Concat:ConstructANDPattern}
		\STATE $M$ := \textit{EvalCtxBased}$\bigl( \symPattern'\!, \mu_\mathsf{in} \bigr)$
		\STATE $M'$ := $\pi_{ \lbrace \symUorVLeft,\symUorVRight \rbrace \cap \symAllVariables } ( M )$ \, (this multiset projection is defined in Section~\ref{subsec:preliminaries} and
		\par \hspace{29.5mm} can be computed by using a standard algorithm)
		\RETURN $M'$
	\ENDIF
\end{algorithmic}
\end{small}
\vspace{1ex}

\noindent
Let $\symPattern' = \bigl( \tupleD{\symUorVLeft,\symPPE_1,?v} \OpAND \tupleD{?v,\symPPE_2,\symUorVRight} \bigr)$ be the graph pattern created in line~\ref{line:Proxy:Concat:ConstructANDPattern}; i.e., $?v \in \symAllVariables \setminus \bigl( \fctDom{\mu_\mathsf{in}} \cup \lbrace \symUorVLeft,\symUorVRight \rbrace \bigr)$ and, thus, $?v \notin \fctDom{\mu_\mathsf{in}}$. To show that, for any finite \Web\ $\symWoD$\!, Algorithm~\ref{algo:Proxy} computes $\EvalCtx{\symPattern \,|\, \mu_\mathsf{in}\,}{\symWoD}$ by looking up a finite number of IRIs only, it suffices to prove the following two claims:
\begin{itemize}
	\itemsep2mm
	\item[] \textit{Claim 1}: $\EvalCtx{\symPattern \,|\, \mu_\mathsf{in}\,}{\symWoD} = \pi_{ \lbrace \symUorVLeft,\symUorVRight \rbrace \cap \symAllVariables } \bigl( \EvalCtx{\symPattern' \,|\, \mu_\mathsf{in}\,}{\symWoD} \bigr)$ for any \Web\ $\symWoD$\!.
	\item[] \textit{Claim 2}: $\fctsymWebSafe\bigl( \symPattern' \,\big|\, \fctDom{\mu_\mathsf{in}} \bigr) = \fctVars{\symPattern'}$.
\end{itemize}

\noindent
Then, by induction it follows that Algorithm~\ref{algo:Proxy} has the desired properties for pattern $\symPattern$.

To verify the first claim we recall that $\EvalCtx{\symPattern}{\symWoD} = \pi_{ \lbrace \symUorVLeft,\symUorVRight \rbrace \cap \symAllVariables } \bigl( \EvalCtx{\symPattern'}{\symWoD} \bigr)$ holds for any \Web\ $\symWoD$ (cf.~Definition~\ref{def:Context}). By using this equivalence, the fact that $?v \notin \fctDom{\mu_\mathsf{in}}$, and Definition~\ref{def:ContextWithInput}, we obtain Claim 1.

To prove Claim 2 we recall that $\fctsymWebSafe\bigl( \symPattern \,\big|\, \fctDom{\mu_\mathsf{in}} \bigr) = \fctVars{\symPattern}$. Therefore, by Definition~\ref{def:cond-web-safeness}, it holds that $?v \in \fctsymWebSafe\bigl( \symPattern' \,\big|\, \fctDom{\mu_\mathsf{in}} \bigr)$ and:
\begin{align*}
	\fctsymWebSafe\bigl( \symPattern \,\big|\, \fctDom{\mu_\mathsf{in}} \bigr) &= \fctsymWebSafe\bigl( \symPattern' \,\big|\, \fctDom{\mu_\mathsf{in}} \bigr) \setminus \lbrace ?v \rbrace .
\intertext{Due to the former, we can rewrite the latter to obtain:}
	\fctsymWebSafe\bigl( \symPattern \,\big|\, \fctDom{\mu_\mathsf{in}} \bigr) \cup \lbrace ?v \rbrace &= \fctsymWebSafe\bigl( \symPattern' \,\big|\, \fctDom{\mu_\mathsf{in}} \bigr) .
\intertext{By using $\fctsymWebSafe\bigl( \symPattern \,\big|\, \fctDom{\mu_\mathsf{in}} \bigr) = \fctVars{\symPattern}$ again, we rewrite to:}
	\fctVars{\symPattern} \cup \lbrace ?v \rbrace &= \fctsymWebSafe\bigl( \symPattern' \,\big|\, \fctDom{\mu_\mathsf{in}} \bigr) ,
\intertext{and, with $\fctVars{\symPattern} \cup \lbrace ?v \rbrace = \fctVars{\symPattern'}$,}
	\fctVars{\symPattern'} &= \fctsymWebSafe\bigl( \symPattern' \,\big|\, \fctDom{\mu_\mathsf{in}} \bigr) .
\end{align*}

\subsubsection{Case 3:}
Suppose $\symPattern$ is a PP pattern $\tupleD{\symUorVLeft,(\symPPE_1|\symPPE_2),\symUorVRight}$.

This case is covered by the following fragment of Algorithm~\ref{algo:Proxy}.

\vspace{1ex}
\begin{small}
\begin{algorithmic}[1]
\setcounter{ALC@line}{19}
	\IF {$\symPattern$ is of the form $\tupleD{\symUorVLeft,\symPPE_1|\symPPE_2,\symUorVRight}$}
		\STATE Create graph pattern $\symPattern' = \bigl( \tupleD{\symUorVLeft,\symPPE_1,\symUorVRight} \OpUNION \tupleD{\symUorVLeft,\symPPE_2,\symUorVRight} \bigr)$ \label{line:Proxy:AlternativePaths:ConstructEquivPattern}
		\STATE $M$ := \textit{EvalCtxBased}$\bigl( \symPattern'\!, \mu_\mathsf{in} \bigr)$
		\RETURN $M$
	\ENDIF
\end{algorithmic}
\end{small}
\vspace{1ex}

\noindent
Due to the semantics of the operator\OpUNION (as given in Section~\ref{subsec:DefinitionSPARQL}), for the graph pattern 
$\symPattern'$ constructed in line~\ref{line:Proxy:AlternativePaths:ConstructEquivPattern} of Algorithm~\ref{algo:Proxy} and any \Web\ $\symWoD$\!, it holds that
\begin{align*}
	\EvalCtx{\symPattern'}{\symWoD} &= \EvalCtx{\tupleD{\symUorVLeft, \symPPE_1, \symUorVRight}}{\symWoD} \multicup \EvalCtx{\tupleD{\symUorVLeft, \symPPE_2, \symUorVRight}}{\symWoD} .
\intertext{Furthermore, by Definition~\ref{def:Context}, for any \Web\ $\symWoD$\!, it holds that}
	\EvalCtx{\tupleD{\symUorVLeft,\symPPE_1 \,|\, \symPPE_2,\symUorVRight}}{\symWoD} &= \EvalCtx{\tupleD{\symUorVLeft, \symPPE_1, \symUorVRight}}{\symWoD} \multicup \EvalCtx{\tupleD{\symUorVLeft, \symPPE_2, \symUorVRight}}{\symWoD} .
\end{align*}
Hence, for any \Web\ $\symWoD$\!, $\EvalCtx{\symPattern}{\symWoD} = 
\EvalCtx{\symPattern'}{\symWoD}$ and, thus,
\begin{equation} \label{eq:proof:AlternativePaths:1}
	\EvalCtx{\symPattern \,|\, \mu_\mathsf{in}\,}{\symWoD} = \EvalCtx{\symPattern' \,|\, \mu_\mathsf{in}\,}{\symWoD} .
\end{equation}

\noindent
Moreover, by using \enumA~the fact that $\fctsymWebSafe\bigl( \symPattern \,|\, \fctDom{\mu_\mathsf{in}} \bigr) = \fctVars{\symPattern}$, \enumB~$\fctVars{\symPattern} = \fctVars{\symPattern'}$, and \enumC~$\fctsymWebSafe\bigl( \symPattern \,|\, \fctDom{\mu_\mathsf{in}} \bigr) = \fctsymWebSafe\bigl( \symPattern' \,|\, \fctDom{\mu_\mathsf{in}} \bigr)$ (cf.~Definition~\ref{def:cond-web-safeness}), we can show
\begin{equation} \label{eq:proof:AlternativePaths:2}
	\fctsymWebSafe\bigl( \symPattern' \,|\, \fctDom{\mu_\mathsf{in}} \bigr) = \fctVars{\symPattern'} .
\end{equation}

\noindent
Due to (\ref{eq:proof:AlternativePaths:1}) and (\ref{eq:proof:AlternativePaths:2}), we may use the same argument as for case 6 below---which is the case that covers patterns of the form $(\symPattern_1 \OpUNION \symPattern_2)$---to show that, for any finite \Web~$\symWoD$\!, Algorithm~\ref{algo:Proxy} computes query result $\EvalCtx{\tupleD{\symUorVLeft,(\symPPE_1|\symPPE_2),\symUorVRight} \,|\, \mu_\mathsf{in}\,}{\symWoD}$ by looking up a finite number of IRIs only.

\subsubsection{Case 4:}
Suppose $\symPattern$ is a PP pattern $\tupleD{\symLeftConst, (\symPPE)^*, \symRightVar}$ s.t.~$\symLeftConst \in ( \symAllURIs \cup \symAllLiterals )$ and $\symRightVar \in \symAllVariables$.

We have to show that, for any finite \Web\ $\symWoD$\!, Algorithm~\ref{algo:Proxy} computes
	query result
$\EvalCtx{\tupleD{\symLeftConst, (\symPPE)^*, \symRightVar} \,|\, \mu_\mathsf{in}\,}{\symWoD}$ by looking up a finite number of IRIs only. The corresponding fragment of Algorithm~\ref{algo:Proxy} that covers this case is given as follows.

\vspace{1ex}
\begin{small}
\begin{algorithmic}[1]
\setcounter{ALC@line}{23}
	\IF {$\symPattern$ is of the form $\tupleD{\symLeftConst, (\symPPE)^*, \symRightVar}$ such that $\symLeftConst \in ( \symAllURIs \cup \symAllLiterals )$ and $\symRightVar \in \symAllVariables$} \label{line:Proxy:Star1:BeginOfFragment}
		\STATE Create a new empty multiset $M = \tupleD{\Omega,\fctsymCard}$ with $\Omega = \emptyset$ and $\fctDom{\fctsymCard} = \emptyset$ \label{line:Proxy:Star1:InitializeOutput}
		\STATE $X$ := \textit{ExecALPW1}$(\symLeftConst, \symPPE)$ \label{line:Proxy:Star1:CallExecALPW1}
		\FORALL {$x \in X$} \label{line:Proxy:Star1:BeginForLoop}
			\IF {$\symRightVar \notin \fctDom{\mu_\mathsf{in}}$ \OR $\mu_\mathsf{in}(\symRightVar) = x$}
				\STATE Create a new solution mapping $\mu$ such that $\fctDom{\mu} = \lbrace \symRightVar \rbrace$ and $\mu(\symRightVar) = x$
				\STATE Add $\mu$ to $\Omega$
				\STATE Adjust $\fctsymCard$ such that $\fctCard{\mu} = 1$
			\ENDIF
		\ENDFOR \label{line:Proxy:Star1:EndForLoop}
		\RETURN $M$ \label{line:Proxy:Star1:Return}
	\ENDIF
\end{algorithmic}
\end{small}
\vspace{1ex}

\noindent
Line~\ref{line:Proxy:Star1:CallExecALPW1} of the given fragment of Algorithm~\ref{algo:Proxy} calls a function \textit{ExecALPW1}. This function is given by Algorithm~\ref{algo:ExecALPW1}; it calls another function, named \textit{ExecALPW2} (cf.~Algorithm~\ref{algo:ExecALPW2}). It is easily seen that function \textit{ExecALPW1} implements the auxiliary function $\mathtt{ALPW1}$ as used in Definition~\ref{def:Context}~(cf.~Figure~\ref{Figure:WebALP}). Before we discuss Algorithm~\ref{algo:Proxy}, we prove the following two claims:
\begin{itemize}
	\itemsep2mm
	\item[] \textit{Claim 1}: Function \textit{ExecALPW2} implements the other auxiliary function, $\mathtt{ALPW2}$.
	\item[] \textit{Claim 2}: During any execution of \textit{ExecALPW2}, the execution of Algorithm~\ref{algo:Proxy} in \par \hspace{12.5mm} line~\ref{line:ExecALPW2:CallAlgo2} looks up a finite number of IRIs only.
\end{itemize}

\noindent
To prove these claims we use the fact that $\fctsymWebSafe\bigl( \symPattern \,\big|\, \fctDom{\mu_\mathsf{in}} \bigr) = \fctVars{\symPattern}$. Therefore, by Definition~\ref{def:cond-web-safeness}, we know that, for any two variables $?v \in \symAllVariables$ and $?w \in \symAllVariables$, it holds that $\fctsymWebSafe\bigl( \tupleD{?v,\symPPE,?w} \,|\, \lbrace ?v \rbrace \bigr) = \lbrace ?v,?w \rbrace$. Hence, $\fctsymWebSafe\bigl( \symPP' \,|\, \fctDom{\mu'} \bigr) = \fctVars{\symPP'}$ where $\symPP' \!= \tupleD{?x, \symPPE, ?y}$ is the PP pattern created in line~\ref{line:ExecALPW2:CreatePattern} of function \textit{ExecALPW2}~(cf. Algorithm~\ref{algo:ExecALPW2}) and $\mu'$ is the solution mapping created in line~\ref{line:ExecALPW2:CreateSolMapping}. Therefore, by induction we can assume that the execution of Algorithm~\ref{algo:Proxy} in line~\ref{line:ExecALPW2:CallAlgo2} has two properties: \enumA~it returns $\EvalCtx{\symPP' |\, \mu' \,}{\symWoD}$ and \enumB~it looks up a finite number of IRIs only. While the latter directly verifies Claim~2, we use the former to show Claim~1; in particular, we use $\tupleD{\Omega,\fctsymCard} = \EvalCtx{\symPP' |\, \mu' \,}{\symWoD}$, where $\tupleD{\Omega,\fctsymCard}$ is the multiset initialized in line~\ref{line:ExecALPW2:CallAlgo2}. Then, due to the properties of solution mapping $\mu'$ (cf.~line~\ref{line:ExecALPW2:CreateSolMapping}), for each solution mapping $\mu \in \Omega$, it holds that $\mu(?x) = \gamma$. Consequently, function \textit{ExecALPW2} implements the auxiliary function $\mathtt{ALPW2}$, where $\Omega$ in function \textit{ExecALPW2} corresponds to the set of all solution mappings that are considered by the loop in $\mathtt{ALPW2}$~(cf.~lines \ref{line:WebALP:ForBegin}-\ref{line:WebALP:ForEnd} in Figure~\ref{Figure:WebALP}).

After proving Claims 1 and 2, we now come back to Algorithm~\ref{algo:Proxy}. For the multiset $M$ that is populated by lines \ref{line:Proxy:Star1:BeginForLoop}-\ref{line:Proxy:Star1:EndForLoop} in Algorithm~\ref{algo:Proxy}, let $M^*$ denote the fully populated version of $M$ (i.e., before executing the return statement in line~\ref{line:Proxy:Star1:Return}). Since functions \textit{ExecALPW1} and \textit{ExecALPW2} implement $\mathtt{ALPW1}$ and $\mathtt{ALPW2}$, respectively, it can be easily seen that $M^* = \EvalCtx{\symPattern \,|\, \mu_\mathsf{in}\,}{\symWoD}$ (i.e., Algorithm~\ref{algo:Proxy} returns the expected result for PP pattern $\symPattern$). It remains to show that, during the computation of this result over a finite \Web, Algorithm~\ref{algo:Proxy} looks up a finite number of IRIs only: Due to the use of set \textit{Visited} in function \textit{ExecALPW2}, none of the IRIs that recursive calls of this function discover is considered more than once. As a consequence of this observation and of Claim~2, it follows that, if the queried \Web\ $\symWoD = \tupleD{\symDocs,\fctsymData,\fctsymADoc}$ is finite, then $\fctDom{\fctsymADoc}$ is finite and, thus, any execution of function \textit{ExecALPW2} (including all recursive calls in line~\ref{line:ExecALPW2:Recursion}) looks up a finite number of IRIs only, and so does the execution of~\textit{ExecALPW1} in line~\ref{line:Proxy:Star1:CallExecALPW1} of Algorithm~\ref{algo:Proxy}. Since none of the other lines of the corresponding fragment of Algorithm~\ref{algo:Proxy} (i.e., lines \ref{line:Proxy:Star1:BeginOfFragment}-\ref{line:Proxy:Star1:Return}) involves IRI lookups, the algorithm looks up a finite number of IRIs to compute $\EvalCtx{\symPattern \,|\, \mu_\mathsf{in}\,}{\symWoD}$ for any finite \Web~$\symWoD$\!.

\begin{algorithm}[t]
{\footnotesize
\begin{algorithmic}[1]
	\STATE \textit{Visited} := $\emptyset$
	\STATE \textit{Visited} := \textit{ExecALPW2}( $\gamma$, $\symPPE$, \textit{Visited} )
	\RETURN \textit{Visited}
\end{algorithmic}
}
	\caption{ ~ \textit{ExecALPW1}$(\gamma, \symPPE)$, which computes $\mathtt{ALPW1}(\gamma,\symPPE,\symWoD)$ (as given in Figure~\ref{Figure:WebALP}) for the queried \Web\ $\symWoD$\!.}
	\label{algo:ExecALPW1}
\end{algorithm}

\begin{algorithm}[t]
{\footnotesize
\begin{algorithmic}[1]
	\IF {$\gamma \notin$ \textit{Visited}}
		\STATE Add $\gamma$ to \textit{Visited}
		\STATE Create a PP pattern $\symPP' \!= \tupleD{?x, \symPPE, ?y}$ with $?x,?y \in \symAllVariables$ \label{line:ExecALPW2:CreatePattern}
		\STATE Create a new solution mapping $\mu'$ such that $\fctDom{\mu'} = \lbrace ?x \rbrace$ and $\mu'(?x) = \gamma$ \label{line:ExecALPW2:CreateSolMapping}
		\STATE $\tupleD{\Omega,\fctsymCard}$ := \textit{EvalCtxBased}$\bigl( \symPP'\!, \mu' \,\bigr)$ \quad (i.e., call Algorithm~\ref{algo:Proxy} to compute $\EvalCtx{\symPP' |\, \mu' \,}{\symWoD}$) \label{line:ExecALPW2:CallAlgo2}

		\FORALL {$\mu \in \Omega$}
			\STATE \textit{Visited} := \textit{ExecALPW2}$\bigl( \mu(?y), \symPPE, \mathit{Visited} \bigr)$\label{line:ExecALPW2:Recursion}
		\ENDFOR

	\ENDIF
	\RETURN \textit{Visited}
\end{algorithmic}
}
	\caption{ ~ \textit{ExecALPW2}$(\gamma, \symPPE,\mathit{Visited})$, which computes auxiliary function $\mathtt{ALPW2}(\gamma,\symPPE,\mathit{Visited},\symWoD)$ (as given in Figure~\ref{Figure:WebALP}) for the queried \Web\ $\symWoD$\!.}
	\label{algo:ExecALPW2}
\end{algorithm}

\subsubsection{Case 5:}
Suppose $\symPattern$ is a PP pattern $\tupleD{\symLeftVar, (\symPPE)^*, \symRightVar}$ such that $\symLeftVar, \symRightVar \in \symAllVariables$.

The fragment of Algorithm~\ref{algo:Proxy} that covers this case is given as follows:

\vspace{1ex}
\begin{small}
\begin{algorithmic}[1]
\setcounter{ALC@line}{32}
	\IF {$\symPattern$ is of the form $\tupleD{\symLeftVar, (\symPPE)^*, \symRightVar}$ such that $\symLeftVar \in \symAllVariables$ and $\symRightVar \in \symAllVariables$}
		\IF {$\symLeftVar \in \fctDom{\mu_\mathsf{in}}$}
			\STATE Create a new empty multiset $M = \tupleD{\Omega,\fctsymCard}$ with $\Omega = \emptyset$ and $\fctDom{\fctsymCard} = \emptyset$ \label{line:Proxy:Star2:FirstStatementOfFirstSubCase}
			\STATE $X$ := \textit{ExecALPW1}$(\mu_\mathsf{in}(\symLeftVar), \symPPE)$ \label{line:Proxy:Star2:CallExecALPW1}
			\FORALL {$x \in X$} \label{line:Proxy:Star2:BeginForLoop}
				\IF {$\symRightVar \notin \fctDom{\mu_\mathsf{in}}$ \OR $\mu_\mathsf{in}(\symRightVar) = x$}
					\STATE Create a new solution mapping $\mu$ such that \enumA~$\fctDom{\mu} = \lbrace \symLeftVar,\symRightVar \rbrace$, \par \enumB~$\mu(\symLeftVar) = \mu_\mathsf{in}(\symLeftVar)$, and \enumC~$\mu(\symRightVar) = x$
					\STATE Add $\mu$ to $\Omega$
					\STATE Adjust $\fctsymCard$ such that $\fctCard{\mu} = 1$
				\ENDIF
			\ENDFOR \label{line:Proxy:Star2:EndForLoop}
			\RETURN $M$ \label{line:Proxy:Star2:Return1}
		\ELSE
			\STATE Create PP pattern $\symPP' = \tupleD{\symRightVar, (\symPPEi)^*, \symLeftVar}$ \label{line:Proxy:Star2:ConstructEquivPattern}
			\STATE $M$ := \textit{EvalCtxBased}$\bigl( \symPP'\!, \mu_\mathsf{in} \bigr)$ \label{line:Proxy:Star2:Recursion}
			\RETURN $M$ \label{line:Proxy:Star2:Return2}
		\ENDIF
	\ENDIF
\end{algorithmic}
\end{small}
\vspace{1ex}

\noindent
The algorithm distinguishes whether $\symLeftVar \in \fctDom{\mu_\mathsf{in}}$ or $\symLeftVar \notin \fctDom{\mu_\mathsf{in}}$. In the former case, Algorithm~\ref{algo:Proxy} executes lines \ref{line:Proxy:Star2:FirstStatementOfFirstSubCase}-\ref{line:Proxy:Star2:Return1}, which are similar to the fragment of Algorithm~\ref{algo:Proxy} that covers the previous Case~4 (cf.~lines \ref{line:Proxy:Star1:InitializeOutput}-\ref{line:Proxy:Star1:Return} before), and the proof that executing lines \ref{line:Proxy:Star2:FirstStatementOfFirstSubCase}-\ref{line:Proxy:Star2:Return1} has the desired properties for PP pattern $\tupleD{\symLeftVar, (\symPPE)^*, \symRightVar}$ is also similar to the discussion of Case~4. Hence, we omit repeating this discussion and focus on the second sub-case, $\symLeftVar \notin \fctDom{\mu_\mathsf{in}}$ (which is covered by lines \ref{line:Proxy:Star2:ConstructEquivPattern}-\ref{line:Proxy:Star2:Return2}). As a basis for discussing this case we need the following two lemmas. We prove these lemmas after completing the proof of Lemma~\ref{lem:Proxy} (cf.~page~\pageref{sec:ProofOfLemma:OneVariableIsBound} for the proof of Lemma~\ref{lem:OneVariableIsBound} and page~\pageref{sec:ProofOfLemma:EquivalenceOfSwappingStarAndInverse} for the proof of Lemma~\ref{lem:EquivalenceOfSwappingStarAndInverse}).

\begin{lemma} \label{lem:OneVariableIsBound}
	Let $\symPP = \tupleD{\symLeftVar, \symPPE, \symRightVar}$ be a PP pattern such that $\symLeftVar, \symRightVar \in \symAllVariables$, and let $\symWebSafeVar \subseteq \symAllVariables$ be a set of variables.
	If $\fctCondWebSafe{\symPP}{\symWebSafeVar} = \fctVars{\symPP}$, then $\symLeftVar \in \symWebSafeVar$ or $\symRightVar \in \symWebSafeVar$.
\end{lemma}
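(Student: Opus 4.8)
The plan is to prove the lemma by induction on the structure of the PP expression $\symPPE$ appearing in $\symPP = \tupleD{\symLeftVar, \symPPE, \symRightVar}$, tracking which clause of Definition~\ref{def:cond-web-safeness} is responsible for the value of $\fctCondWebSafe{\symPP}{\symWebSafeVar}$. Throughout, the key standing observation is that $\symLeftVar \in \symAllVariables$, so $\fctVars{\symPP}$ is nonempty; hence the hypothesis $\fctCondWebSafe{\symPP}{\symWebSafeVar} = \fctVars{\symPP}$ immediately rules out every clause whose outcome is $\emptyset$. This settles the base case at once: if $\symPPE$ is $\symURI$ or $!( \symURI_1 \,|\, \ldots \,|\, \symURI_n )$, then only lines~1 and~2 are candidates, line~2 is excluded by nonemptiness, so line~1 applies; and since $\symLeftVar$ is a variable, the condition in line~1 forces $\symLeftVar \in \symWebSafeVar$, which is exactly the conclusion.

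For the inductive step I would dispatch the remaining top-level forms of $\symPPE$. If $\symPPE = \invPPE{\symPPE_0}$, line~6 gives $\fctCondWebSafe{\symPP}{\symWebSafeVar} = \fctCondWebSafe{\symPP'}{\symWebSafeVar}$ with $\symPP' = \tupleD{\symRightVar, \symPPE_0, \symLeftVar}$ and $\fctVars{\symPP'} = \fctVars{\symPP}$, so the induction hypothesis on $\symPP'$ gives $\symRightVar \in \symWebSafeVar$ or $\symLeftVar \in \symWebSafeVar$. If $\symPPE = (\symPPE_1 \,|\, \symPPE_2)$, lines~7 and~14 combine to $\fctCondWebSafe{\symPP}{\symWebSafeVar} = \fctCondWebSafe{\tupleD{\symLeftVar, \symPPE_1, \symRightVar}}{\symWebSafeVar} \cap \fctCondWebSafe{\tupleD{\symLeftVar, \symPPE_2, \symRightVar}}{\symWebSafeVar}$; since each intersectand is contained in $\fctVars{\symPP}$, the hypothesis forces $\fctCondWebSafe{\tupleD{\symLeftVar, \symPPE_1, \symRightVar}}{\symWebSafeVar} = \fctVars{\tupleD{\symLeftVar, \symPPE_1, \symRightVar}}$, and the induction hypothesis on $\symPPE_1$ closes the case. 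If $\symPPE = (\symPPE_0)^*$, then because $\symRightVar$ is a variable line~3 is inapplicable and line~5 is excluded by nonemptiness, so line~4 applies, giving $\fctCondWebSafe{\symPP}{\symWebSafeVar} = \fctCondWebSafe{\tupleD{\symLeftVar, \symPPE_0, \symRightVar}}{\symWebSafeVar} = \fctVars{\tupleD{\symLeftVar, \symPPE_0, \symRightVar}}$; the induction hypothesis on $\symPPE_0$ then gives the conclusion.

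The case $\symPPE = \symPPE_1/\symPPE_2$ is the one I expect to need the most care. By nonemptiness line~9 is excluded, so line~8 applies; I would fix the fresh variable $?v \in \symAllVariables \setminus \bigl( \symWebSafeVar \cup \lbrace \symLeftVar, \symRightVar \rbrace \bigr)$ used in that line and set $\symPattern' = \bigl( \tupleD{\symLeftVar, \symPPE_1, ?v} \OpAND \tupleD{?v, \symPPE_2, \symRightVar} \bigr)$. The side condition of line~8 gives $?v \in \fctCondWebSafe{\symPattern'}{\symWebSafeVar}$; combining this with $\fctCondWebSafe{\symPattern'}{\symWebSafeVar} \setminus \lbrace ?v \rbrace = \fctCondWebSafe{\symPP}{\symWebSafeVar} = \fctVars{\symPP}$ and $?v \notin \fctVars{\symPP}$ yields $\fctCondWebSafe{\symPattern'}{\symWebSafeVar} = \fctVars{\symPattern'}$. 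Since $\symPattern'$ is of the form $(\symPattern_1 \OpAND \symPattern_2)$ and $\fctCondWebSafe{\symPattern'}{\symWebSafeVar}$ is nonempty, one of lines~10--12 applies (line~13 is excluded). In lines~10 and~11 the left conjunct satisfies $\fctCondWebSafe{\tupleD{\symLeftVar, \symPPE_1, ?v}}{\symWebSafeVar} = \fctVars{\tupleD{\symLeftVar, \symPPE_1, ?v}}$, so the induction hypothesis on $\symPPE_1$ gives $\symLeftVar \in \symWebSafeVar$ or $?v \in \symWebSafeVar$, and since $?v \notin \symWebSafeVar$ we obtain $\symLeftVar \in \symWebSafeVar$; in line~12 the right conjunct satisfies $\fctCondWebSafe{\tupleD{?v, \symPPE_2, \symRightVar}}{\symWebSafeVar} = \fctVars{\tupleD{?v, \symPPE_2, \symRightVar}}$, so the induction hypothesis on $\symPPE_2$ gives $?v \in \symWebSafeVar$ or $\symRightVar \in \symWebSafeVar$, hence $\symRightVar \in \symWebSafeVar$. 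In either case the conclusion holds. The one subtlety I would flag explicitly is that line~8 quantifies universally over the fresh variable $?v$, so at the outset I would remark that $\fctCondWebSafe{\symPattern'}{\symWebSafeVar} \setminus \lbrace ?v \rbrace$ does not depend on the choice of $?v$, which makes fixing a single representative $?v$ harmless.
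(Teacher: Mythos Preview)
Your proposal is correct and follows essentially the same approach as the paper: structural induction on $\symPPE$, with the nonemptiness of $\fctVars{\symPP}$ ruling out the clauses of Definition~\ref{def:cond-web-safeness} that yield $\emptyset$, and the concatenation case handled via the auxiliary pattern $\symPattern' = \bigl( \tupleD{\symLeftVar,\symPPE_1,?v} \OpAND \tupleD{?v,\symPPE_2,\symRightVar} \bigr)$. Your treatment is in fact slightly more careful than the paper's in two places: for $(\symPPE_0)^*$ you correctly identify that line~4 yields $\fctCondWebSafe{\tupleD{\symLeftVar,\symPPE_0,\symRightVar}}{\symWebSafeVar}$ (the paper writes the reversed pattern here, which appears to be a typo), and in the concatenation case you spell out all three sub-cases (lines~10--12) rather than appealing to symmetry, which is appropriate since line~12 yields $\symRightVar \in \symWebSafeVar$ rather than $\symLeftVar \in \symWebSafeVar$.
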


\begin{lemma} \label{lem:EquivalenceOfSwappingStarAndInverse}
	For any PP expression $\symPPE$ and any pair of variables $\symLeftVar, \symRightVar \in \symAllVariables$, the two PP patterns $\symPP = \tupleD{\symLeftVar, (\symPPE)^*, \symRightVar}$ and $\symPP' = \tupleD{\symRightVar, (\symPPEi)^*, \symLeftVar}$ are semantically equivalent under context-based semantics; i.e., $\EvalCtx{\symPP}{\symWoD} = \EvalCtx{\symPP'}{\symWoD}$ holds for any \Web~$\symWoD$\!.
\end{lemma}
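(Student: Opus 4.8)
The plan is to reduce the asserted multiset equality to a symmetric reachability statement about the auxiliary function $\mathtt{ALPW1}$. First, note that by Figure~\ref{fig:stc-sem} (the case of a starred expression with a variable on each side) both $\EvalCtx{\symPP}{\symWoD}$ and $\EvalCtx{\symPP'}{\symWoD}$ are multisets in which every solution mapping has cardinality~$1$ (they use $\fctsymCardGen{\Omega}$) and has domain $\lbrace \symLeftVar,\symRightVar \rbrace$; hence it suffices to show that their underlying sets of solution mappings coincide. Unfolding the relevant lines of Figure~\ref{fig:stc-sem}, a mapping $\mu$ with $\fctDom{\mu} = \lbrace \symLeftVar,\symRightVar \rbrace$ lies in $\EvalCtx{\symPP}{\symWoD}$ iff $\mu(\symLeftVar) \in \fctTerms{\symWoD}$ and $\mu(\symRightVar) \in \mathtt{ALPW1}(\mu(\symLeftVar),\symPPE,\symWoD)$, and it lies in $\EvalCtx{\symPP'}{\symWoD}$ iff $\mu(\symRightVar) \in \fctTerms{\symWoD}$ and $\mu(\symLeftVar) \in \mathtt{ALPW1}(\mu(\symRightVar),\symPPEi,\symWoD)$. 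Writing $a = \mu(\symLeftVar)$ and $b = \mu(\symRightVar)$, the lemma thus reduces to
\[ \bigl( a \in \fctTerms{\symWoD} \text{ and } b \in \mathtt{ALPW1}(a,\symPPE,\symWoD) \bigr) \;\Longleftrightarrow\; \bigl( b \in \fctTerms{\symWoD} \text{ and } a \in \mathtt{ALPW1}(b,\symPPEi,\symWoD) \bigr), \]
which also trivially covers the degenerate case $\symLeftVar = \symRightVar$ (where $a = b$ is forced).

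The main tool is the binary relation $R_{\symPPE} = \bigl\lbrace (\mu(?x),\mu(?y)) \,\big|\, \mu \in \EvalCtx{\tupleD{?x,\symPPE,?y}}{\symWoD} \bigr\rbrace$, for a fixed pair of distinct variables $?x,?y$. Since every \Web\ considered in this paper is finite, inspection of Figure~\ref{Figure:WebALP} shows that $\mathtt{ALPW2}$ performs a terminating traversal of the directed graph with edge relation $R_{\symPPE}$ starting at its first argument, using the set $\mathit{Visited}$ to avoid revisiting nodes; consequently $\mathtt{ALPW1}(\gamma,\symPPE,\symWoD)$ is exactly the (finite) set of terms reachable from $\gamma$ by finitely many $R_{\symPPE}$-steps, including $\gamma$ itself. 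Next I establish $R_{\symPPEi} = (R_{\symPPE})^{-1}$: by the inverse rule $\EvalCtx{\tupleD{?x,\symPPEi,?y}}{\symWoD} = \EvalCtx{\tupleD{?y,\symPPE,?x}}{\symWoD}$ in Figure~\ref{fig:stc-sem}, together with the routine observation that context-based evaluation is invariant under bijective renaming of variables (so $\EvalCtx{\tupleD{?y,\symPPE,?x}}{\symWoD}$ consists precisely of the mappings obtained from those in $\EvalCtx{\tupleD{?x,\symPPE,?y}}{\symWoD}$ by interchanging the values of $?x$ and $?y$). Given $R_{\symPPEi} = (R_{\symPPE})^{-1}$, simply reversing a reaching path yields $b \in \mathtt{ALPW1}(a,\symPPE,\symWoD)$ iff $a \in \mathtt{ALPW1}(b,\symPPEi,\symWoD)$, so the two sides of the displayed equivalence can differ only in whether one requires $a \in \fctTerms{\symWoD}$ or $b \in \fctTerms{\symWoD}$.

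It remains to show that, whenever $b$ is reachable from $a$ via $R_{\symPPE}$, we have $a \in \fctTerms{\symWoD}$ iff $b \in \fctTerms{\symWoD}$. If $a = b$ this is immediate. If $a \neq b$, the reaching path has at least one edge, so $(a,c) \in R_{\symPPE}$ and $(c',b) \in R_{\symPPE}$ for suitable terms $c,c'$, and it therefore suffices to prove the auxiliary claim that $(s,t) \in R_{\symPPE}$ implies $s \in \fctTerms{\symWoD}$ and $t \in \fctTerms{\symWoD}$ for every PP expression $\symPPE$. I will prove this claim by induction on the structure of $\symPPE$: in the base cases, $(s,t) \in R_{\symPPE}$ means that a triple with subject $s$ and object $t$ lies in $\fctContext{\symWoD}{s}$ and hence in accessible data, so $s,t \in \fctTerms{\symWoD}$; the inverse and alternation cases follow from the induction hypothesis (the inverse case using $R_{\symPPEi} = (R_{\symPPE})^{-1}$); the concatenation case factors through an intermediate term; and the star case uses the explicit requirement $\mu(\symLeftVar) \in \fctTerms{\symWoD}$ present in Figure~\ref{fig:stc-sem} to handle $s$, and the induction hypothesis applied to the last edge of the witnessing reachability path to handle $t$. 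The step I expect to be the main obstacle is this bookkeeping around the $\fctTerms{\symWoD}$ side-condition, together with making the correspondence ``$\mathtt{ALPW1}$ computes $R_{\symPPE}$-reachability'' fully precise (in particular the renaming-invariance observation); by contrast the conceptual core, the identity $R_{\symPPEi} = (R_{\symPPE})^{-1}$, is essentially a reformulation of the inverse rule and should be painless.
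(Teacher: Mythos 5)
Your proposal is correct and follows essentially the same route as the paper's proof: both reduce the equality to comparing the membership conditions of the two star rules, characterize $\mathtt{ALPW1}$ membership via chains of solutions of $\tupleD{?x,\symPPE,?y}$ (your relation $R_{\symPPE}$), and reverse these witnesses using the inverse rule of Figure~\ref{fig:stc-sem}. The only difference is one of rigor: you make explicit the auxiliary induction showing that edge endpoints lie in $\fctTerms{\symWoD}$, the variable-renaming invariance, and the degenerate case $a=b$, all of which the paper's proof asserts or leaves implicit.
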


\noindent
Due to the fact that $\fctsymWebSafe\bigl( \symPattern \,\big|\, \fctDom{\mu_\mathsf{in}} \bigr) = \fctVars{\symPattern}$, we can use Lemma~\ref{lem:OneVariableIsBound} to show that, if $\symLeftVar \notin \fctDom{\mu_\mathsf{in}}$, then $\symRightVar \in \fctDom{\mu_\mathsf{in}}$. Therefore, the recursive call in line~\ref{line:Proxy:Star2:Recursion} (which swaps the subject and the object) will result in executing an instance of Algorithm~\ref{algo:Proxy} that meets the first sub-case (i.e., the recursive call in line~\ref{line:Proxy:Star2:Recursion} performs lines \ref{line:Proxy:Star2:FirstStatementOfFirstSubCase}-\ref{line:Proxy:Star2:Return1}).

Moreover, the fact that $\fctsymWebSafe\bigl( \symPattern \,\big|\, \fctDom{\mu_\mathsf{in}} \bigr) = \fctVars{\symPattern}$ can also be used to show that $\fctsymWebSafe\bigl( \symPP' \,\big|\, \fctDom{\mu_\mathsf{in}} \bigr) = \fctVars{\symPP'}$ where $\symPP'\! = \tupleD{\symRightVar, (\symPPEi)^*, \symLeftVar}$ is the PP pattern~created in line~\ref{line:Proxy:Star2:ConstructEquivPattern}. Then, by induction we can assume that, for any finite \Web~$\symWoD$\!, the recursive call in line~\ref{line:Proxy:Star2:Recursion} looks up a finite number of IRIs only and returns $\EvalCtx{\symPP' \,|\, \mu_\mathsf{in}}{\symWoD}$. As a consequence, we can use Lemma~\ref{lem:EquivalenceOfSwappingStarAndInverse} and Definition~\ref{def:ContextWithInput} to show that Algorithm~\ref{algo:Proxy} has the desired properties for graph pattern $\symPattern$ with $\symLeftVar \notin \fctDom{\mu_\mathsf{in}}$.

\subsubsection{Case 6:}
Suppose $\symPattern$ is a PP pattern $\tupleD{\symLeftVar, (\symPPE)^*, \symRightConst}$ s.t.~$\symLeftVar \in \symAllVariables$ and $\symRightConst \in ( \symAllURIs \cup \symAllLiterals )$.

The fragment of Algorithm~\ref{algo:Proxy} that covers this case is given as follows:

\vspace{1ex}
\begin{small}
\begin{algorithmic}[1]
\setcounter{ALC@line}{46}
	\IF {$\symPattern$ is of the form $\tupleD{\symLeftVar, (\symPPE)^*, \symRightConst}$ such that $\symLeftVar \in \symAllVariables$ and $\symRightConst \in ( \symAllURIs \cup \symAllLiterals )$}
		\STATE Create PP pattern $\symPP' = \tupleD{\symRightConst, (\symPPEi)^*, \symLeftVar}$ \label{line:Proxy:Star3:ConstructEquivPattern}
		\STATE $M$ := \textit{EvalCtxBased}$\bigl( \symPP'\!, \mu_\mathsf{in} \bigr)$
		\RETURN $M$
	\ENDIF
\end{algorithmic}
\end{small}
\vspace{1ex}

\noindent
Let $\symPP'$ be the PP pattern created in line~\ref{line:Proxy:Star3:ConstructEquivPattern}; i.e., $\symPP' = \tupleD{\symRightConst, (\symPPEi)^*, \symLeftVar}$. To show that, for any finite \Web\ $\symWoD$\!, Algorithm~\ref{algo:Proxy} computes $\EvalCtx{\symPattern \,|\, \mu_\mathsf{in}\,}{\symWoD}$ by looking up a finite number of IRIs only, it suffices to prove the following two claims:
\begin{itemize}
	\itemsep2mm
	\item[] \textit{Claim 1}: $\EvalCtx{\symPattern \,|\, \mu_\mathsf{in}\,}{\symWoD} = \EvalCtx{\symPP' \,|\, \mu_\mathsf{in}\,}{\symWoD}$ for any \Web\ $\symWoD$\!.
	\item[] \textit{Claim 2}: $\fctsymWebSafe\bigl( \symPP' \,\big|\, \fctDom{\mu_\mathsf{in}} \bigr) = \fctVars{\symPP'}$.
\end{itemize}

\noindent
Then, by induction it follows that Algorithm~\ref{algo:Proxy} has the desired properties for pattern $\symPattern$.

To verify the first claim we recall that, for any \Web\ $\symWoD$\!, $\EvalCtx{\symPattern}{\symWoD} \!=\! \EvalCtx{\symPP'}{\symWoD}$~(cf.~Definition~\ref{def:Context}). Therefore, by Definition~\ref{def:ContextWithInput}, Claim 1 follows trivially.

It remains to prove Claim 2. By Definition~\ref{def:cond-web-safeness}, we have:
\begin{align*}
	\fctsymWebSafe\bigl( \symPattern \,\big|\, \fctDom{\mu_\mathsf{in}} \bigr) &= \fctsymWebSafe\bigl( \symPP' \,\big|\, \fctDom{\mu_\mathsf{in}} \bigr) .
\intertext{By using the fact that $\fctsymWebSafe\bigl( \symPattern \,\big|\, \fctDom{\mu_\mathsf{in}} \bigr) = \fctVars{\symPattern}$, we obtain:}
	\fctVars{\symPattern} &= \fctsymWebSafe\bigl( \symPP' \,\big|\, \fctDom{\mu_\mathsf{in}} \bigr) ,
\intertext{and, due to $\fctVars{\symPattern} = \fctVars{\symPP'}$,}
	\fctVars{\symPP'} &= \fctsymWebSafe\bigl( \symPP' \,\big|\, \fctDom{\mu_\mathsf{in}} \bigr) .
\end{align*}

\subsubsection{Case 7:}
Suppose $\symPattern$ is a PP pattern $\tupleD{\symLeftConst, (\symPPE)^*, \symRightConst}$ such that $\symLeftConst, \symRightConst \in ( \symAllURIs \cup \symAllLiterals )$.

The fragment of Algorithm~\ref{algo:Proxy} that covers this case is given as follows:

\vspace{1ex}
\begin{small}
\begin{algorithmic}[1]
\setcounter{ALC@line}{50}
	\IF {$\symPattern$ is of the form $\tupleD{\symLeftConst, (\symPPE)^*, \symRightConst}$ such that $\symLeftConst \in ( \symAllURIs \cup \symAllLiterals )$ and $\symRightConst \in ( \symAllURIs \cup \symAllLiterals )$}
		\STATE $X$ := \textit{ExecALPW1}$(\symLeftConst, \symPPE)$ \label{line:Proxy:Star4:CallExecALPW1}
		\FORALL {$x \in X$} \label{line:Proxy:Star4:BeginForLoop}
			\IF {$x = \symRightConst$}
				\RETURN a new multiset $\tupleD{\Omega,\fctsymCard}$ with $\Omega = \lbrace \muEmpty \rbrace$ and $\fctsymCard = \fctsymCardGen{\Omega}$
			\ENDIF
		\ENDFOR \label{line:Proxy:Star4:EndForLoop}
		\RETURN a new empty multiset $M = \tupleD{\Omega,\fctsymCard}$ with $\Omega = \emptyset$ and $\fctDom{\fctsymCard} = \emptyset$
	\ENDIF
\end{algorithmic}
\end{small}
\vspace{1ex}

\noindent
This fragment of the algorithm leverages the fact that the definition of query result $\EvalCtx{\tupleD{\symLeftConst, (\symPPE)^*, \symRightConst}}{\symWoD}$ (cf.~Figure~\ref{fig:stc-sem}) can be rewritten as follows:
$$\EvalCtx{\tupleD{\symLeftConst, (\symPPE)^*, \symRightConst}}{\symWoD} =
		\Big\langle\,
			\begin{cases} \lbrace \muEmpty \rbrace & \text{if } \symRightConst \in \mathtt{ALWP1}(\symLeftConst,\symPPE,\symWoD) , \\ ~~\emptyset & \text{else} \end{cases} \textbf{ , }\,
			\fctsymCardGen{\Omega}
		\,\Big\rangle
.$$
Then, the discussion of this case resembles the discussion of Case~4 above.

\subsubsection{Case 8:}
Suppose $\symPattern$ is $(\symPattern_1 \OpAND \symPattern_2)$.

As a basis for discussing this case, we first show that
\begin{equation} \label{eq:proof:AND:1}
	\fctsymWebSafe\bigl( \symPattern_1 \,|\, \fctDom{\mu_\mathsf{in}} \bigr) = \fctVars{\symPattern_1}
	\quad \text{ or } \quad
	\fctsymWebSafe\bigl( \symPattern_2 \,|\, \fctDom{\mu_\mathsf{in}} \bigr) = \fctVars{\symPattern_2} .
\end{equation}
Thereafter, we use this fact to show that Algorithm~\ref{algo:Proxy} has the desired properties for $\symPattern = (\symPattern_1 \OpAND \symPattern_2)$.

To show (\ref{eq:proof:AND:1}), we use proof by contradiction. That is, we assume
\begin{equation*}
	\fctsymWebSafe\bigl( \symPattern_1 \,|\, \fctDom{\mu_\mathsf{in}} \bigr) \neq \fctVars{\symPattern_1}
	\quad \text{ and } \quad
	\fctsymWebSafe\bigl( \symPattern_2 \,|\, \fctDom{\mu_\mathsf{in}} \bigr) \neq \fctVars{\symPattern_2} .
\end{equation*}
Then, by Definition~\ref{def:cond-web-safeness}, $\fctsymWebSafe\bigl( \symPattern \,|\, \fctDom{\mu_\mathsf{in}} \bigr) = \emptyset$. Since $\fctsymWebSafe\bigl( \symPattern \,|\, \fctDom{\mu_\mathsf{in}} \bigr) = \fctVars{\symPattern}$, we have $\fctVars{\symPattern} = \emptyset$ and, thus,
\begin{equation} \label{eq:proof:AND:Contradiction}
	\fctVars{\symPattern_1} = \emptyset
	\quad \text{ and } \quad
	\fctVars{\symPattern_2} = \emptyset .
\end{equation}
Since $\fctsymWebSafe\bigl( \symPattern' \,|\, \fctDom{\mu_\mathsf{in}} \bigr) \subseteq \fctVars{\symPattern'}$ holds for any graph pattern $\symPattern'$ (cf.~Definition~\ref{def:cond-web-safeness}), we have $\fctsymWebSafe\bigl( \symPattern_1 \,|\, \fctDom{\mu_\mathsf{in}} \bigr) \subseteq \fctVars{\symPattern_1}$ and $\fctsymWebSafe\bigl( \symPattern_2 \,|\, \fctDom{\mu_\mathsf{in}} \bigr) \subseteq \fctVars{\symPattern_2}$. With (\ref{eq:proof:AND:Contradiction}), we obtain
\begin{equation*}
	\fctsymWebSafe\bigl( \symPattern_1 \,|\, \fctDom{\mu_\mathsf{in}} \bigr) = \emptyset
	\quad \text{ and } \quad
	\fctsymWebSafe\bigl( \symPattern_2 \,|\, \fctDom{\mu_\mathsf{in}} \bigr) = \emptyset .
\end{equation*}
Hence, again with (\ref{eq:proof:AND:Contradiction}),
\begin{equation*}
	\fctsymWebSafe\bigl( \symPattern_1 \,|\, \fctDom{\mu_\mathsf{in}} \bigr) = \fctVars{\symPattern_1}
	\quad \text{ and } \quad
	\fctsymWebSafe\bigl( \symPattern_2 \,|\, \fctDom{\mu_\mathsf{in}} \bigr) = \fctVars{\symPattern_2}  ,	 	
\end{equation*}
which contradicts our assumption and, thus, shows that (\ref{eq:proof:AND:1}) holds.

We now show that, for any finite \Web\ $\symWoD$\!, Algorithm~\ref{algo:Proxy} computes
	query result
$\EvalCtx{(\symPattern_1 \OpAND \symPattern_2) \,|\, \mu_\mathsf{in}\,}{\symWoD}$ by looking up a finite number of IRIs only.
The fragment of Algorithm~\ref{algo:Proxy} that covers this case is given as follows.

\vspace{1ex}
\begin{small}
\begin{algorithmic}[1]
\setcounter{ALC@line}{56}
	\IF {$\symPattern$ is of the form $(\symPattern_1 \OpAND \symPattern_2)$}
		\STATE \textbf{if} $\fctsymWebSafe\bigl( \symPattern_1 \,|\, \fctDom{\mu_\mathsf{in}} \bigr) = \fctVars{\symPattern_1}$ \textbf{then} $i$ := 1; $j$ := 2 \textbf{else} $i$ := 2; $j$ := 1
		\STATE Create a new empty multiset $M = \tupleD{\Omega,\fctsymCard}$ with $\Omega = \emptyset$ and $\fctDom{\fctsymCard} = \emptyset$
		\STATE $\tupleD{\Omega^{\symPattern_i},\fctsymCard^{\symPattern_i}}$ := \textit{EvalCtxBased}$( \symPattern_i, \mu_\mathsf{in})$
		\FORALL {$\mu \in \Omega^{\symPattern_i}$}
			\STATE $\tupleD{\Omega^{\mu},\fctsymCard^{\mu}}$ := \textit{EvalCtxBased}$( \symPattern_j, \mu_\mathsf{in} \cup \mu )$
			\FORALL {$\mu' \in \Omega^{\mu}$}
				\STATE $\mu^*$ := $\mu \cup \mu'$
				\STATE $k$ := $\fctsymCard^{\symPattern_i}\!(\mu) \cdot \fctsymCard^\mu\!(\mu')$
				\IF {$\mu^*\! \in \Omega$}
					\STATE $\mathit{old}$ := $\fctCard{\mu^*}$
					\STATE Adjust $\fctsymCard$ such that $\fctCard{\mu^*} = k + \mathit{old}$
				\ELSE
					\STATE Adjust $\fctsymCard$ such that $\fctCard{\mu^*} = k$
					\STATE Add $\mu^*$ to $\Omega$
				\ENDIF
			\ENDFOR
		\ENDFOR
		\RETURN $M$
	\ENDIF
\end{algorithmic}
\end{small}
\vspace{1ex}

\noindent
The algorithm first determines whether $\fctsymWebSafe\bigl( \symPattern_1 \,|\, \fctDom{\mu_\mathsf{in}} \bigr) = \fctVars{\symPattern_1}$ (which is decidable by using Definition~\ref{def:cond-web-safeness} recursively). If $\fctsymWebSafe\bigl( \symPattern_1 \,|\, \fctDom{\mu_\mathsf{in}} \bigr) = \fctVars{\symPattern_1}$, the algorithm lets $i = 1$ and $j = 2$; if $\fctCondWebSafe{\symPattern_1}{\emptyset} \neq \fctVars{\symPattern_1}$, $i = 2$ and $j = 1$. Due to (\ref{eq:proof:AND:1}), it holds that $\fctsymWebSafe\bigl( \symPattern_i \,|\, \fctDom{\mu_\mathsf{in}} \bigr) = \fctVars{\symPattern_i}$. Therefore, by induction we can assume that, when Algorithm~\ref{algo:Proxy} calls itself in line~\ref{line:Proxy:AND:Recursion1}, the recursive execution looks up a finite number of IRIs only and for the result $\tupleD{\Omega^{\symPattern_i},\fctsymCard^{\symPattern_i}}$ it holds that $\tupleD{\Omega^{\symPattern_i},\fctsymCard^{\symPattern_i}} = \EvalCtx{\symPattern_i \,|\, \mu_\mathsf{in}\,}{\symWoD}$.

Next, the algorithm iterates over all solution mappings $\mu \in \Omega^{\symPattern_i}$. We claim that
\begin{equation} \label{eq:proof:AND:2}
	\forall \mu \in \Omega^{\symPattern_i} :
	\fctsymWebSafe\bigl( \symPattern_j \,\big|\, \fctDom{\mu_\mathsf{in}} \cup \fctDom{\mu} \bigr) = \fctVars{\symPattern_j} .
\end{equation}

\noindent
Note, if (\ref{eq:proof:AND:2}) holds, by induction we can assume that, for each solution mapping $\mu \in \Omega^{\symPattern_i}$\!, the recursive call in line~\ref{line:Proxy:AND:Recursion2} looks up a finite number of IRIs only and for the result $\tupleD{\Omega^{\mu},\fctsymCard^{\mu}}$ it holds that $\tupleD{\Omega^{\mu},\fctsymCard^{\mu}} = \EvalCtx{\symPattern_j \,|\, \mu_\mathsf{in} \cup \mu\,}{\symWoD}$.

Hence, before we continue the discussion of the algorithm, we prove the claim: Let $\mu$ be an arbitrary solution mapping with $\mu \in \Omega^{\symPattern_i}$\!. W.l.o.g., it suffices to show that $\fctsymWebSafe\bigl( \symPattern_j \,\big|\, \fctDom{\mu_\mathsf{in}} \cup \fctDom{\mu} \bigr) = \fctVars{\symPattern_j}$ holds, for which we use the fact that $\fctsymWebSafe\bigl( \symPattern \,|\, \fctDom{\mu_\mathsf{in}} \bigr) = \fctVars{\symPattern}$ holds. In particular, since $\fctsymWebSafe\bigl( \symPattern_i \,|\, \fctDom{\mu_\mathsf{in}} \bigr) = \fctVars{\symPattern_i}$ holds as well, we note that $\fctsymWebSafe\bigl( \symPattern \,|\, \fctDom{\mu_\mathsf{in}} \bigr) = \fctVars{\symPattern}$ holds only because at least one of the following conditions is satisfied (cf.~Definition~\ref{def:cond-web-safeness}):
$\fctsymWebSafe\bigl( \symPattern_j \,|\, \fctDom{\mu_\mathsf{in}} \bigr) = \fctVars{\symPattern_j}$,
$\fctsymWebSafe\bigl( \symPattern_i \,|\, \fctDom{\mu_\mathsf{in}} \cup \fctCVars{\symPattern_i} \bigr) = \fctVars{\symPattern_i}$, or $\fctVars{\symPattern} = \emptyset$.
We now show that each of these conditions entails $\fctsymWebSafe\bigl( \symPattern_j \,\big|\, \fctDom{\mu_\mathsf{in}} \cup \fctDom{\mu} \bigr) = \fctVars{\symPattern_j}$.
\begin{enumerate}
	\item
		If $\fctsymWebSafe\bigl( \symPattern_j \,|\, \fctDom{\mu_\mathsf{in}} \bigr) = \fctVars{\symPattern_j}$, then $\fctsymWebSafe\bigl( \symPattern_j \,\big|\, \fctDom{\mu_\mathsf{in}} \cup \fctDom{\mu} \bigr) = \fctVars{\symPattern_j}$ follows by using Fact~\ref{fact:MonotonicityOfWSV}.
	\item
		If $\fctsymWebSafe\bigl( \symPattern_i \,|\, \fctDom{\mu_\mathsf{in}} \cup \fctCVars{\symPattern_i} \bigr) = \fctVars{\symPattern_i}$, then, due to $\mu \in \EvalCtx{\symPattern_i \,|\, \mu_\mathsf{in}\,}{\symWoD}$ and, thus, $\fctCVars{\symPattern_i} \subseteq \fctDom{\mu}$, we obtain $\fctsymWebSafe\bigl( \symPattern_j \,\big|\, \fctDom{\mu_\mathsf{in}} \cup \fctDom{\mu} \bigr) = \fctVars{\symPattern_j}$ by Fact~\ref{fact:MonotonicityOfWSV}.
	\item
		If $\fctVars{\symPattern} = \emptyset$, then $\fctsymWebSafe\bigl( \symPattern_j \,\big|\, \fctDom{\mu_\mathsf{in}} \cup \fctDom{\mu} \bigr) = \fctVars{\symPattern_j}$ is a trivial consequence of $\fctVars{\symPattern_j} \subseteq \fctVars{\symPattern}$ and $\fctsymWebSafe\bigl( \symPattern_j \,|\, \fctDom{\mu_\mathsf{in}} \bigr) \subseteq \fctVars{\symPattern_j}$.
\end{enumerate}

\noindent
Hence, we verified the correctness of (\ref{eq:proof:AND:2}) and now come back to Algorithm~\ref{algo:Proxy}. As mentioned before, after computing $\EvalCtx{\symPattern_i \,|\, \mu_\mathsf{in}\,}{\symWoD} = \tupleD{\Omega^{\symPattern_i},\fctsymCard^{\symPattern_i}}$ (in line~\ref{line:Proxy:AND:Recursion1}), for each $\mu \in \Omega^{\symPattern_i}$\!, the recursive call in line~\ref{line:Proxy:AND:Recursion2} computes $\EvalCtx{\symPattern_j \,|\, \mu_\mathsf{in} \cup \mu\,}{\symWoD} = \tupleD{\Omega^{\mu},\fctsymCard^{\mu}}$ by looking up a finite number of IRIs only. Then, the algorithm populates a new, initially empty multiset $M$ incrementally as follows.

For each pair of a solution mapping $\mu \in \Omega^{\symPattern_i}$ and a corresponding solution mapping $\mu' \in \Omega^\mu$\!, the algorithm generates a joined solution mapping $\mu^* \!= \mu \cup \mu'$ (which is possible because, due to $\mu' \in \Omega^\mu$\!, $\mu$ and $\mu'$ are compatible) and adds $\mu^*$ exactly $k$~times to multiset $M$, where $k = \fctsymCard^{\symPattern_i}(\mu) \cdot \fctsymCard^\mu(\mu')$. Let $M^*$ denote the resulting, fully populated version of multiset $M$ (i.e., after populating it incrementally based on all $\mu' \in \Omega^\mu$ for all $\mu \in \Omega^{\symPattern_i}$). It is easily seen that $M^*$ is the expected result of the $\mu_\mathsf{in}$-restricted evaluation of graph pattern $(\symPattern_1 \OpAND \symPattern_2)$ over \Web\ $\symWoD$ (i.e., $M^* = \EvalCtx{(\symPattern_1 \OpAND \symPattern_2) \,|\, \mu_\mathsf{in}\,}{\symWoD}$). Hence, the algorithm returns $M^*$\!. Since each of the recursive calls looks up a finite number of IRIs and the intermediate result $\EvalCtx{\symPattern_i \,|\, \mu_\mathsf{in}\,}{\symWoD}$ is finite (because of the finiteness of the queried \Web\ $\symWoD$\!), the number of IRIs looked up during the
	computation of $\EvalCtx{(\symPattern_1 \OpAND \symPattern_2) \,|\, \mu_\mathsf{in}\,}{\symWoD}$
is finite.

\subsubsection{Case 9:}
Suppose $\symPattern$ is $(\symPattern_1 \OpUNION \symPattern_2)$.

We have to show that, for any finite \Web\ $\symWoD$\!, Algorithm~\ref{algo:Proxy} computes query result $\EvalCtx{(\symPattern_1 \OpUNION \symPattern_2) \,|\, \mu_\mathsf{in}\,}{\symWoD}$ by looking up a finite number of IRIs only. The corresponding fragment of Algorithm~\ref{algo:Proxy} for this case is given as follows.

\vspace{1ex}
\begin{small}
\begin{algorithmic}[1]
\setcounter{ALC@line}{72}
	\IF {$\symPattern$ is of the form $(\symPattern_1 \OpUNION \symPattern_2)$}
		\STATE $M_1$ := \textit{EvalCtxBased}$\bigl( \symPattern_1, \mu_\mathsf{in} \bigr)$ \label{line:Proxy:UNION:Recursion1}
		\STATE $M_2$ := \textit{EvalCtxBased}$\bigl( \symPattern_2, \mu_\mathsf{in} \bigr)$ \label{line:Proxy:UNION:Recursion2}
		\STATE $M$ := $M_1 \multicup M_2$ \, (this multiset union is defined in Section~\ref{subsec:preliminaries} and
		\par \hspace{23mm} can be computed by using a standard algorithm) \label{line:Proxy:UNION:ComputeMultisetUnion}
		\RETURN $M$
	\ENDIF
\end{algorithmic}
\end{small}
\vspace{1ex}

\noindent
As a basis for discussing this case we emphasize that
\begin{equation}
	\fctsymWebSafe\bigl( \symPattern_1 \,\big|\, \fctDom{\mu_\mathsf{in}} \bigr) = \fctVars{\symPattern_1}
	\quad \text{ and } \quad
	\fctsymWebSafe\bigl( \symPattern_2 \,\big|\, \fctDom{\mu_\mathsf{in}} \bigr) = \fctVars{\symPattern_2} ,
\end{equation}
which follows from \enumA~Definition~\ref{def:cond-web-safeness}, \enumB~$\fctVars{\symPattern} = \fctVars{\symPattern_1} \cup \fctVars{\symPattern_2}$, and \enumC~the fact that $\fctsymWebSafe\bigl( \symPattern \,\big|\, \fctDom{\mu_\mathsf{in}} \bigr) = \fctVars{\symPattern}$.
Therefore, by induction we can assume that each of the two recursive calls in line \ref{line:Proxy:UNION:Recursion1} and~\ref{line:Proxy:UNION:Recursion2} looks up a finite number of IRIs in the queried \Web\ $\symWoD$\!, and for the results $M_1$ and $M_2$ it holds that $M_1 = \EvalCtx{\symPattern_1 \,|\, \mu_\mathsf{in} \,}{\symWoD}$ and $M_2 = \EvalCtx{\symPattern_2 \,|\, \mu_\mathsf{in} \,}{\symWoD}$. Then, it is easily seen that $M = M_1 \multicup M_2$ is the expected result of the $\mu_\mathsf{in}$-restricted evaluation of graph pattern $(\symPattern_1 \OpUNION \symPattern_2)$ over \Web~$\symWoD$ (i.e., $M = \EvalCtx{(\symPattern_1 \OpUNION \symPattern_2) \,|\, \mu_\mathsf{in}\,}{\symWoD}$) and the number of IRIs looked up during the computation of this result is finite.

\subsubsection{Case 10:}
Suppose $\symPattern$ is $(\symPattern_1 \OpOPT \symPattern_2)$.

The corresponding fragment of Algorithm~\ref{algo:Proxy} for this case is given as follows.

\vspace{1ex}
\begin{small}
\begin{algorithmic}[1]
\setcounter{ALC@line}{77}
	\IF {$\symPattern$ is of the form $(\symPattern_1 \OpOPT \symPattern_2)$}
		\STATE Create a new empty multiset $M_\mathsf{out} \!=\! \tupleD{\Omega_\mathsf{out},\fctsymCard_\mathsf{out}}$ with $\Omega_\mathsf{out} \!=\! \emptyset$ and $\fctDom{\fctsymCard_\mathsf{out}} \!=\! \emptyset$
		\STATE $\tupleD{\Omega^{\symPattern_1},\fctsymCard^{\symPattern_1}}$ := \textit{EvalCtxBased}$( \symPattern_1, \mu_\mathsf{in})$
		\FORALL {$\mu \in \Omega^{\symPattern_1}$}
			\STATE $\tupleD{\Omega^{\mu},\fctsymCard^{\mu}}$ := \textit{EvalCtxBased}$( \symPattern_2, \mu )$
			\IF {$\Omega^{\mu} = \emptyset$}
				\IF {$\mu \in \Omega_\mathsf{out}$}
					\STATE $\mathit{old}$ := $\fctsymCard_\mathsf{out}(\mu)$
					\STATE Adjust $\fctsymCard_\mathsf{out}$ such that $\fctsymCard_\mathsf{out}(\mu) = \mathit{old} + 1$
				\ELSE
					\STATE Adjust $\fctsymCard_\mathsf{out}$ such that $\fctsymCard_\mathsf{out}(\mu) = 1$
					\STATE Add $\mu$ to $\Omega_\mathsf{out}$
				\ENDIF
			\ELSE
				\FORALL {$\mu' \in \Omega^{\mu}$}
					\IF {$\mu'$ and $\mu_\mathsf{in}$ are compatible}
						\STATE $\mu^*$ := $\mu \cup \mu'$
						\STATE $k$ := $\fctsymCard^{\symPattern_1}\!(\mu) \cdot \fctsymCard^\mu\!(\mu')$
						\IF {$\mu^*\! \in \Omega_\mathsf{out}$}
							\STATE $\mathit{old}$ := $\fctsymCard_\mathsf{out}( \mu^* )$
							\STATE Adjust $\fctsymCard_\mathsf{out}$ such that $\fctsymCard_\mathsf{out}( \mu^* ) = k + \mathit{old}$
						\ELSE
							\STATE Adjust $\fctsymCard_\mathsf{out}$ such that $\fctsymCard_\mathsf{out}( \mu^* ) = k$
							\STATE Add $\mu^*$ to $\Omega_\mathsf{out}$
						\ENDIF
					\ENDIF
				\ENDFOR
			\ENDIF
		\ENDFOR
		\RETURN $M_\mathsf{out}$
	\ENDIF
\end{algorithmic}
\end{small}
\vspace{1ex}

\noindent
We omit the discussion of this case because it is very similar to the discussion of case 5 for patterns of the form $(\symPattern_1 \OpAND \symPattern_2)$.
\qed

\section{Proof of Lemma~\ref{lem:OneVariableIsBound}} \label{sec:ProofOfLemma:OneVariableIsBound}
Suppose it holds that
\begin{equation} \label{eq:proof2:Antecedent}
	\fctsymWebSafe\bigl( \tupleD{\symLeftVar, \symPPE, \symRightVar} \,|\, \symWebSafeVar \bigr) = \fctsymVars\bigl( \tupleD{\symLeftVar, \symPPE, \symRightVar} \bigr) .
\end{equation}
We have to show that $\symLeftVar \in \symWebSafeVar$ or $\symRightVar \in \symWebSafeVar$ holds. For this proof we use an induction on the possible structure of PP expression $\symPPE$.

\subsection{Base Case}
Suppose $\symPPE$ is either an IRI $\symURI \in \symAllURIs$ or of the form $!( \symURI_1 \,|\, ... \,|\, \symURI_n)$ with $\symURI_1, ... , \symURI_n \in \symAllURIs$.
By using (\ref{eq:proof2:Antecedent}) and the fact that $\fctsymVars\bigl( \tupleD{\symLeftVar, \symPPE, \symRightVar} \bigr) \neq \emptyset$, we have $\symLeftVar \in \symWebSafeVar$ (cf.~Definition~\ref{def:cond-web-safeness}).

\subsection{Induction Step}
For the induction step we distinguish four cases (which correspond to the last four cases in the grammar of PP expressions as given in Section~\ref{subsec:preliminaries}).

\subsubsection{Case 1:}
Suppose $\symPPE$ is of the form $\invPPE{\symPPE_x}$ where $\symPPE_x$ is an arbitrary PP expression.
We claim that
\begin{equation} \label{eq:proof2:Case1}
	\fctsymWebSafe\bigl( \tupleD{\symRightVar, \symPPE_x, \symLeftVar} \,|\, \symWebSafeVar \bigr) = \fctsymVars\bigl( \tupleD{\symRightVar, \symPPE_x, \symLeftVar} \bigr) .
\end{equation}
If (\ref{eq:proof2:Case1}) holds, then $\symLeftVar \!\in\! \symWebSafeVar$ or $\symRightVar \!\in\! \symWebSafeVar$ holds by induction. Hence, it remains to show~(\ref{eq:proof2:Case1}).

By Definition~\ref{def:cond-web-safeness}, we have:
\begin{align*}
	\fctsymWebSafe\bigl( \tupleD{\symLeftVar, \invPPE{\symPPE_x}, \symRightVar} \,|\, \symWebSafeVar \bigr) &= \fctsymWebSafe\bigl( \tupleD{\symRightVar, \symPPE_x, \symLeftVar} \,|\, \symWebSafeVar \bigr) .
\intertext{By using $\fctsymWebSafe\bigl( \tupleD{\symLeftVar, \invPPE{\symPPE_x}, \symRightVar} \,|\, \symWebSafeVar \bigr) = \fctsymVars\bigl( \tupleD{\symLeftVar, \invPPE{\symPPE_x}, \symRightVar} \bigr)$ (cf.~(\ref{eq:proof2:Antecedent}) above), we obtain:}
	\fctsymVars\bigl( \tupleD{\symLeftVar, \invPPE{\symPPE_x}, \symRightVar} \bigr) &= \fctsymWebSafe\bigl( \tupleD{\symRightVar, \symPPE_x, \symLeftVar} \,|\, \symWebSafeVar \bigr) .
\end{align*}
Then, with $\fctsymVars\bigl( \tupleD{\symLeftVar, \invPPE{\symPPE_x}, \symRightVar} \bigr) = \fctsymVars\bigl( \tupleD{\symRightVar, \symPPE_x, \symLeftVar} \bigr)$, we
	can verify the correctness of
(\ref{eq:proof2:Case1}).

\subsubsection{Case 2:}
Suppose $\symPPE$ is of the form $(\symPPE_x)^*$ where $\symPPE_x$ is an arbitrary PP~expression.
By using an argument similar to the argument used for the previous case, we can show that $\fctsymWebSafe\bigl( \tupleD{\symRightVar, \symPPE_x, \symLeftVar} \,|\, \symWebSafeVar \bigr) = \fctsymVars\bigl( \tupleD{\symRightVar, \symPPE_x, \symLeftVar} \bigr) .$ Then, $\symLeftVar \in \symWebSafeVar$ or $\symRightVar \in \symWebSafeVar$ holds by induction.

\subsubsection{Case 3:}
Suppose $\symPPE$ is of the form $(\symPPE_1|\symPPE_2)$ where $\symPPE_1$ and $\symPPE_2$ are arbitrary PP expressions.
We claim that:
\begin{equation} \label{eq:proof2:Case3}
	\forall i \in \lbrace 1,2 \rbrace : 
	\fctsymWebSafe\bigl( \tupleD{\symLeftVar, \symPPE_i, \symRightVar} \,|\, \symWebSafeVar \bigr) = \fctsymVars\bigl( \tupleD{\symLeftVar, \symPPE_i, \symRightVar} \bigr)
	.
\end{equation}
If (\ref{eq:proof2:Case3}) holds, then $\symLeftVar \!\in\! \symWebSafeVar$ or $\symRightVar \!\in\! \symWebSafeVar$ holds by induction. Hence, it remains to show~(\ref{eq:proof2:Case3}).

By Definition~\ref{def:cond-web-safeness}, we have:
\begin{align*}
	\fctsymWebSafe\bigl( \tupleD{\symLeftVar, (\symPPE_1|\symPPE_2), \symRightVar} \,|\, \symWebSafeVar \bigr) &= \bigcap_{ i \in \lbrace 1,2 \rbrace } \fctsymWebSafe\bigl( \tupleD{\symLeftVar, \symPPE_i, \symRightVar} \,|\, \symWebSafeVar \bigr) .
\intertext{By using $\fctsymWebSafe\bigl( \tupleD{\symLeftVar, (\symPPE_1|\symPPE_2), \symRightVar} \,|\, \symWebSafeVar \bigr) = \fctsymVars\bigl( \tupleD{\symLeftVar, (\symPPE_1|\symPPE_2), \symRightVar} \bigr)$ (cf.~(\ref{eq:proof2:Antecedent}) above), we obtain:}
	\fctsymVars\bigl( \tupleD{\symLeftVar, (\symPPE_1|\symPPE_2), \symRightVar} \bigr) &= \bigcap_{ i \in \lbrace 1,2 \rbrace } \fctsymWebSafe\bigl( \tupleD{\symLeftVar, \symPPE_i, \symRightVar} \,|\, \symWebSafeVar \bigr) .
\end{align*}
Then, with $\fctsymVars\bigl( \tupleD{\symLeftVar, (\symPPE_1|\symPPE_2), \symRightVar} \bigr) \!=\! \fctsymVars\bigl( \tupleD{\symLeftVar, \symPPE_i, \symRightVar} \bigr)$ for all $i \!\in\! \lbrace 1,2 \rbrace$, we
	can verify the correctness of
(\ref{eq:proof2:Case3}).

\subsubsection{Case 4:}
Suppose $\symPPE$ is of the form $\symPPE_1/\symPPE_2$ where $\symPPE_1$ and $\symPPE_2$ are arbitrary PP expressions. In this case, by Definition~\ref{def:cond-web-safeness}, we have:
\begin{align*}
	\fctsymWebSafe\bigl( \tupleD{\symLeftVar, \symPPE_1/\symPPE_2, \symRightVar} \,|\, \symWebSafeVar \bigr) &= \fctsymWebSafe\bigl( \symPattern' \,|\, \symWebSafeVar \bigr) \setminus \lbrace ?v \rbrace ,
\intertext{where $\symPattern' = \bigl( \tupleD{\symLeftVar,\symPPE_1,?v} \OpAND \tupleD{?v,\symPPE_2,\symRightVar} \bigr)$ and $?v \in \symAllVariables$ is an arbitrary variable such that $?v \notin \bigl( \symWebSafeVar \cup \lbrace \symLeftVar,\symRightVar \rbrace \bigr)$ and $?v \in \fctCondWebSafe{\symPattern'}{\symWebSafeVar}$. By using the fact that $\fctsymWebSafe\bigl( \tupleD{\symLeftVar, \symPPE_1/\symPPE_2, \symRightVar} \,|\, \symWebSafeVar \bigr) = \fctsymVars\bigl( \tupleD{\symLeftVar, \symPPE_1/\symPPE_2, \symRightVar} \bigr)$ (cf.~(\ref{eq:proof2:Antecedent}) above), we obtain:}
	\fctsymVars\bigl( \tupleD{\symLeftVar, \symPPE_1/\symPPE_2, \symRightVar} \bigr) &= \fctsymWebSafe\bigl( \symPattern' \,|\, \symWebSafeVar \bigr) \setminus \lbrace ?v \rbrace .
\end{align*}
Consequently, $\fctsymWebSafe\bigl( \symPattern' \,|\, \symWebSafeVar \bigr) \neq \emptyset$. Therefore, by Definition~\ref{def:cond-web-safeness}, either
\begin{enumerate}
	\item $\fctCondWebSafe{\symPattern_1'}{\symWebSafeVar} = \fctVars{\symPattern_1'}$ and $\fctCondWebSafe{\symPattern_2'}{\symWebSafeVar} = \fctVars{\symPattern_2'}$, or
	\item $\fctCondWebSafe{\symPattern_1'}{\symWebSafeVar} = \fctVars{\symPattern_1'}$ and $\fctCondWebSafe{\symPattern_2'}{\symWebSafeVar \cup \fctCVars{\symPattern_1'}} = \fctVars{\symPattern_2'}$, or
	\item $\fctCondWebSafe{\symPattern_2'}{\symWebSafeVar} = \fctVars{\symPattern_2'}$ and $\fctCondWebSafe{\symPattern_1'}{\symWebSafeVar \cup \fctCVars{\symPattern_2'}} = \fctVars{\symPattern_1'}$,
\end{enumerate}
where $\symPattern_1' = \tupleD{\symLeftVar,\symPPE_1,?v}$ and $\symPattern_2' = \tupleD{?v,\symPPE_2,\symRightVar}$ (i.e., $( \symPattern_1' \OpAND \symPattern_2' ) = \symPattern'$). W.l.o.g., we discuss the first of these three alternatives only (the discussion of the other two would be almost identical).

Then, due to $\fctCondWebSafe{\symPattern_1'}{\symWebSafeVar} = \fctVars{\symPattern_1'}$, by induction we can assume that $\symLeftVar \in \symWebSafeVar$ or $?v \in \symWebSafeVar$. However, we can rule out the latter because $?v \notin \bigl( \symWebSafeVar \cup \lbrace \symLeftVar,\symRightVar \rbrace \bigr)$ (see above). Hence, $\symLeftVar \in \symWebSafeVar$. In a similar manner it is possible to also show $\symRightVar \in \symWebSafeVar$ by using $\fctCondWebSafe{\symPattern_2'}{\symWebSafeVar} = \fctVars{\symPattern_2'}$.
\qed

\section{Proof of Lemma~\ref{lem:EquivalenceOfSwappingStarAndInverse}} \label{sec:ProofOfLemma:EquivalenceOfSwappingStarAndInverse}
Let $\symPP = \tupleD{\symLeftVar, (\symPPE)^*, \symRightVar}$ and $\symPP' = \tupleD{\symRightVar, (\symPPEi)^*, \symLeftVar}$ be two PP patterns such that $\symPPE$ is an arbitrary PP expression and 
$\symLeftVar$ and $\symRightVar$ are two variables~(i.e., $\symLeftVar, \symRightVar \in \symAllVariables$). Furthermore, let $\symWoD$ be an arbitrary \Web. We have to show that $\EvalCtx{\symPP}{\symWoD} \subseteq \EvalCtx{\symPP'}{\symWoD}$ (Claim 1) and $\EvalCtx{\symPP}{\symWoD} \supseteq \EvalCtx{\symPP'}{\symWoD}$ (Claim 2) hold.

\bigskip
\noindent
\textbf{Proof of Claim 1:}
Let $\mu^*$ be an arbitrary solution mapping such that $\mu^* \in \EvalCtx{\symPP}{\symWoD}$. W.l.o.g., we show that $\EvalCtx{\symPP}{\symWoD} \subseteq \EvalCtx{\symPP'}{\symWoD}$ by showing that $\mu^* \in \EvalCtx{\symPP'}{\symWoD}$.
To this end, by Definition~\ref{def:Context}, we have to show that $\mu^*$ satisfies the following three conditions:
\begin{itemize}
	\item[]\textit{Condition 1:} $\fctDom{\mu^*} = \lbrace \symLeftVar,\symRightVar \rbrace$,
	\item[]\textit{Condition 2:} $\mu^*(\symRightVar) \in \fctTerms{\symWoD}$, and
	\item[]\textit{Condition 3:} $\mu^*(\symLeftVar) \in \mathtt{ALWP1}(\mu(\symRightVar),\symPPEi,\symWoD)$.
\end{itemize}
On the other hand, since $\mu^* \in \EvalCtx{\symPP}{\symWoD}$, $\mu^*$ has the following three properties:
\begin{itemize}
	\item[]\textit{Property 1:} $\fctDom{\mu^*} = \lbrace \symLeftVar,\symRightVar \rbrace$,
	\item[]\textit{Property 2:} $\mu^*(\symLeftVar) \in \fctTerms{\symWoD}$, and
	\item[]\textit{Property 3:} $\mu^*(\symRightVar) \in \mathtt{ALWP1}(\mu(\symLeftVar),\symPPE,\symWoD)$.
\end{itemize}
Hence, $\mu^*$ satisfies Condition~1. To see that $\mu^*$ also satisfies Condition 2 and 3, consider Property~3. Due to this property, there exists a sequence of solution mappings $\mu_1 , ...\, , \mu_n$ and two variables $?x,?y \in \symAllVariables$ such that \enumA~$\fctDom{\mu_i} = \lbrace ?x,?y \rbrace$ for all $i \in \lbrace 0, ...\, , n \rbrace$, \enumB~$\mu_1(?x) = \mu^*(\symLeftVar)$, \enumC~$\mu_n(?y) = \mu^*(\symRightVar)$, and \enumD~$\mu_i \in \EvalCtx{\tupleD{?x,\symPPE,?y}}{\symWoD}$ for all $i \in \lbrace 0, ...\, , n \rbrace$. Due to the latter, $\mu_i(?y) \in \fctTerms{\symWoD}$  for all $i \in \lbrace 0, ...\, , n \rbrace$. Thus, with $\mu_n(?y) = \mu^*(\symRightVar)$, we have $\mu^*(\symRightVar) \in \fctTerms{\symWoD}$; i.e., $\mu^*$ satisfies Condition~2.

Moreover, by Definition~\ref{def:Context}, $\EvalCtx{\tupleD{?x,\symPPE,?y}}{\symWoD} = \EvalCtx{\tupleD{?y,\symPPEi,?x}}{\symWoD}$ and, thus, $\mu_i \in \EvalCtx{\tupleD{?y,\symPPEi,?x}}{\symWoD}$ for all $i \in \lbrace 0, ...\, , n \rbrace$. Therefore, the sequence of solution mappings $\mu_1 , ...\, , \mu_n$ can also be used to show that $\mu_1(?x) \!\in\! \mathtt{ALWP1}(\mu_n(?y),\text{\footnotesize$\symPPEi$},\symWoD)$. Due to this fact and due to $\mu_1(?x) = \mu^*(\symLeftVar)$ and $\mu_n(?y) = \mu^*(\symRightVar)$, we can verify that $\mu^*$ satisfies Condition~3.

\bigskip
\noindent
\textbf{Proof of Claim 2:}
Let $\mu^*$ be an arbitrary solution mapping such that $\mu^* \in \EvalCtx{\symPP'}{\symWoD}$. W.l.o.g., we show that $\EvalCtx{\symPP}{\symWoD} \supseteq \EvalCtx{\symPP'}{\symWoD}$ by showing that $\mu^* \in \EvalCtx{\symPP}{\symWoD}$.
To this end, by Definition~\ref{def:Context}, we have to show that $\mu^*$ satisfies the following three conditions:
\begin{itemize}
	\item[]\textit{Condition 1:} $\fctDom{\mu^*} = \lbrace \symLeftVar,\symRightVar \rbrace$,
	\item[]\textit{Condition 2:} $\mu^*(\symLeftVar) \in \fctTerms{\symWoD}$, and
	\item[]\textit{Condition 3:} $\mu^*(\symRightVar) \in \mathtt{ALWP1}(\mu(\symLeftVar),\symPPE,\symWoD)$.
\end{itemize}
On the other hand, since $\mu^* \in \EvalCtx{\symPP'}{\symWoD}$, $\mu^*$ has the following three properties:
\begin{itemize}
	\item[]\textit{Property 1:} $\fctDom{\mu^*} = \lbrace \symLeftVar,\symRightVar \rbrace$,
	\item[]\textit{Property 2:} $\mu^*(\symRightVar) \in \fctTerms{\symWoD}$, and
	\item[]\textit{Property 3:} $\mu^*(\symLeftVar) \in \mathtt{ALWP1}(\mu(\symRightVar),\symPPEi,\symWoD)$.
\end{itemize}
Due to the symmetry of these conditions and properties to the conditions and properties in the discussion of Claim~1, it is easily seen that Claim 2 can be proved by using an argument that is reverse to the argument used for proving Claim 1.
\qed
}

\end{document}